\definecolor{redlink}{rgb}{0.6, 0, 0}
\definecolor{greenlink}{rgb}{0, 0.6, 0}
\definecolor{bluelink}{rgb}{0, 0, 0.6}
\newcommand{\real}{\mathbb{R}}
\theoremstyle{definition}
\newtheorem{definition}{Definition}[section]
\newtheorem{question}{Question}[]
\theoremstyle{plain}
\newtheorem{theorem}{Theorem}[section]
\newtheorem{lemma}{Lemma}[section]
\newtheorem{invariant}{Invariant}
\newtheorem{corollary}{Corollary}[section]
\newtheorem{observation}{Observation}[section]
\theoremstyle{remark}
\newtheorem{remark}{Remark}[section]
\newtheorem{claim}{Claim}[section]
\crefname{claim}{claim}{claims}
\newcounter{algorithmicH}
\let\oldalgorithmic\algorithmic
\renewcommand{\algorithmic}{%
  \stepcounter{algorithmicH}
  \oldalgorithmic}
\renewcommand{\theHALG@line}{ALG@line.\thealgorithmicH.\arabic{ALG@line}}
\algnewcommand\Break{\textbf{break}}
\algnewcommand\Continue{\textbf{continue}}
\algnewcommand\Exit{\textbf{exit}}
\algnewcommand\Or{\textbf{or~}}
\algnewcommand\And{\textbf{and~}}
\algnewcommand{\IfThenElse}[3]{\State \algorithmicif\ #1\ \algorithmicthen\ #2\ \algorithmicelse\ #3}
\algnewcommand{\IfThen}[2]{\State \algorithmicif\ #1\ \algorithmicthen\ #2}
\algnewcommand{\ForInline}[2]{\State \algorithmicfor\ #1\ \algorithmicdo\ #2}
\newcommand{\eps}{\epsilon} 
\newcommand{\ignore}[1]{} 
\def\EMPH#1{\emph{\textcolor{redlink} {#1}}}
\DeclareMathOperator{\diam}{\mathrm{diam}}
\DeclareMathOperator{\port}{\mathrm{port}}
\DeclareMathOperator{\pos}{\mathrm{pos}}
\DeclareMathOperator{\code}{\mathrm{code}}
\DeclareMathOperator{\bin}{\mathrm{bin}}
\DeclareMathOperator{\lbl}{\mathrm{label}}
\DeclareMathOperator*{\level}{level}
\DeclarePairedDelimiterX{\norm}[1]{\lVert}{\rVert}{#1}
\DeclarePairedDelimiter\abs{\lvert}{\rvert}
\DeclarePairedDelimiter\ceil{\lceil}{\rceil}
\title{Covering the Euclidean Plane by a Pair of Trees}
\date{}
\author{
		Hung Le\thanks{University of Massachusetts Amherst, \href{}{hungle@cs.umass.edu}}
        \and
        Lazar Milenkovi\'{c}\thanks{Tel Aviv University, \href{}{milenkovic.lazar@gmail.com}}\and	
		Shay Solomon\thanks{Tel Aviv University, \href{}{solo.shay@gmail.com}}\and	
		Tianyi Zhang\thanks{ETH Zürich, \href{}{tianyi.zhang@inf.ethz.ch}}
	}	
\begin{document}

\maketitle

\begin{abstract}
A \emph{$t$-stretch tree cover} of a metric space $M = (X,\delta)$, for a parameter $t \ge 1$, is a collection of trees such that every pair of points has a
$t$-stretch path in one of the trees. 
Tree covers provide an important sketching tool that has found various applications over the years.
The celebrated \emph{Dumbbell Theorem} by Arya et al.\ [STOC'95] states that any set of points in the Euclidean plane admits a $(1+\epsilon)$-stretch tree cover with $O_\epsilon(1)$ trees.
This result extends to any (constant) dimension and was also generalized for arbitrary doubling metrics by Bartal et al.\ [ICALP'19].

Although the number of trees provided by the Dumbbell Theorem is constant, this constant is not small, even for a stretch significantly larger than $1+\epsilon$. At the other extreme, any single tree on the vertices of a regular $n$-polygon must incur a stretch of $\Omega(n)$.
Using known results of ultrametric embeddings,
one can easily get a stretch of $\tilde{O}(\sqrt{n})$ using two trees.
The question of whether a low stretch can be achieved using two trees has remained illusive, even in the Euclidean plane.

In this work, we resolve this fundamental question 
in the affirmative by presenting a constant-stretch cover with a pair of trees, for any set of points in the Euclidean plane.
Our main technical contribution is a \textbf{surprisingly simple} Steiner construction, for which we provide a \textbf{tight} stretch analysis of $\sqrt{26}$. The Steiner points can be easily pruned if one is willing to increase the stretch by a small constant. Moreover, we can bound the maximum degree of the construction by a constant.

Our result thus provides a simple yet effective reduction tool---for problems that concern approximate distances---from the Euclidean plane to a pair of trees.
To demonstrate the potential power of this tool, we present some applications for routing algorithms, including a constant-stretch compact routing scheme when \emph{handshaking} is allowed, on top of a pair of trees, in which the total memory usage is just $(2 + o(1))\log n$ bits.

\end{abstract}

\newpage

\section{Introduction}
A  \EMPH{$t$-stretch distance sketch} of a metric space, for a parameter $t \ge 1$ that is called the \emph{stretch}, is a \emph{compact structure} that approximately preserves the distances between every pair of points up to a (multiplicative) factor of $t$. The most basic \emph{compactness measure} is the size (i.e., the number of edges) of the sketch. The compactness of a distance sketch makes it a powerful primitive for countless algorithmic tasks.

Perhaps the most basic distance sketch is a \emph{$t$-spanner}, a (sub)graph preserving all pairwise distances up to a factor of $t$. Spanners have found many applications over the years, for example, in wireless and sensor networks \cite{RW04,BDS04,SS10} and distributed computing~\cite{ABP90,ABP91,BEIK19,Robinson21}. Nonetheless, the structure of a spanner could be too complex for a wide range of applications. A \emph{tree cover}~\cite{AP92,GKR01} is an important {\em structural} distance sketch composed of (a few) trees. 

\begin{definition}
A \EMPH{$t$-tree cover} of a metric $(X,\|\cdot\|_X)$  
is a set $\mathcal{T}$ of (edge-weighted) trees such that $X\subseteq V(T')$ for every tree $T'\in \mathcal{T}$, and for every two points $x,y\in X$ there exists a tree $T \in \mathcal{T}$ such that:
\begin{equation}\label{eq:tree-cover-def}
    \|xy\|_X \leq d_T(x,y)\leq t\cdot  \|xy\|_X
\end{equation}
That is, there exists a tree in the cover preserving the distance $\|xy\|_X$ up to the \emph{stretch factor} $t$. 
We say that a tree cover of a metric is \EMPH{spanning} if every tree in $\mathcal{T}$ is a spanning tree of the metric, i.e., none of the trees in $\mathcal{T}$ use additional \emph{Steiner} points that are not part of the metric.
\end{definition}

Since trees are arguably the simplest type of graphs, they provide a very useful form of structural distance sketches, which opens up a much broader range of applications ~\cite{TZ01Oracle,TZ06,CCL+23FOCS,CCL+24SODA,CHJ20,KLMS22,CCLST25};
for example, tree covers gave rise to optimal distance oracles~\cite{CCL+23FOCS,CCL+24SODA} and routing schemes~\cite{KLMS22,CCLST25}.

The most basic goal is to construct a $t$-tree cover for a small $t$ with a small number of trees. The union of all the trees in a $t$-cover forms a $t$-spanner of $(X,\|\cdot\|_X)$. In contrast, a $t$-spanner does not necessarily give rise to a $t$-cover, meaning that the notion of a tree cover is a strictly stronger notion than a spanner. 
A long line of research has been devoted to understanding whether one can match the stretch-sparsity tradeoff attained by spanners via tree covers, i.e., achieve the compactness of spanners together with the structural simplicity of trees. 

It is known since the early 90s that every $n$-point set in $\real^d$ (for any constant $d$) admits a $(1+\eps)$-spanner with $O(\eps^{1-d}n)$ edges~\cite{Keil88,RS91}, 
and this is asymptotically tight: there {exists} a point set $S$ in $\real^d$ such that any $(1+\eps)$-spanner for $S$ must have  $\Omega(\eps^{1-d}n)$ edges \cite{LS19}. 
The celebrated \emph{Dumbbell Theorem} by Arya et al.\ \cite{ADMSS95} extends this fundamental result of spanners to tree covers, providing a $(1+\eps)$-tree cover with $O(\eps^{-d} \log(1/\eps))$ trees, and thus with a bound of $O(\eps^{-d} \log(1/\eps) n)$ on the number of edges. This theorem was generalized for
doubling metrics by Bartal, Fandina, and Neiman \cite{BFN19}, who achieved the same bound 
via a much simpler construction.
Recently, Chang et al.\ \cite{CCL+24SoCG} improved the dependency on $\eps$ showing that every Euclidean point set admits a $(1+\eps)$-tree cover with $O(\eps^{1-d}\log(1/\eps))$ trees, matching the aforementioned bound on the number of edges of the $(1+\eps)$-spanners up to a polylogarithmic term in $1/\eps$.

\paragraph{Ultra-sparse regime.}
Although the aforementioned results provide a constant size bound, for $\eps=O(1)$ and $d = O(1)$, the constant hiding in the $O$-notation is quite large; even for the Euclidean plane, all known results incur a constant of at least 40.
The focus of this work is the ultra-sparse regime. 
It is only natural to seek constructions of spanners and tree covers where the size bound or number of trees approach those of a single spanning tree. 
In particular, the following basic question arises: what is the minimum number of trees or edges needed for achieving constant stretch?

Aronov et al.~\cite{ABCGHSV08} showed that for every $n$ and $k$ such that $0 \le k \le 2n-5$, every set of $n$ points in the Euclidean plane admits an $O(n/(k+1))$-spanner  with at most $n+k-1$ edges, where the constant in the $O$-notation is around 50. They also show that this tradeoff is tight up to a constant factor:  for every $n$ and $k$ such that $0 < k < n$, there is a set of $n$ points for which any spanner with $n+k-1$ edges must have stretch  at least $(2/\pi)\cdot \lfloor n/(k+1)\rfloor-1$.
Chew \cite{Chew89} proved that TD-Delaunay triangulation, which has at most $3n-6$ edges, achieves stretch at most $2$. 
Another family of spanners for Euclidean spaces are $\Theta_k$ graphs, which have at most $kn$ edges for a parameter $k \ge 1$. (See \Cref{def:theta_yao}.) This family was introduced independently by Keil and Gutwin \cite{Keil88,KG92}, and Clarkson \cite{Clarkson87}, who showed that for $k\ge 9$, the stretch of a $\Theta_k$ graph is at most  $1/(\cos\theta-\sin\theta)$ where $\theta = 2\pi/k$. The stretch bound has later been improved by \cite{RS91} to $1/(1 - 2 \sin(\theta/2))$ for $k\ge 7$. Current state of the art for general $k\ge 7$ is given by Bose et al. \cite{BCMRV16}. There is an ongoing research to pinpoint the exact stretch bounds for small values of $k$. For $k=6$, tight stretch is 2 \cite{BGHI10}. For $k=5$, the upper bound is 5.7 \cite{BHO21} and the lower bound is 3.78 \cite{BMRV15}. 
For $k=4$, the upper bound is  17 \cite{BCHS24} and the lower bound is 7 \cite{BBCRV13}.
For $k=3$, the stretch is unbounded \cite{Mol09}. A closely-related family of spanners are Yao$_k$ graphs, which also have at most $kn$ edges for $k \ge 1$. (See \Cref{def:theta_yao}.)
They were first shown to be spanners in \cite{ADDJS93}. 
Bose et al. \cite{BMNSZ04} showed that the stretch is at most $1/(\cos\theta - \sin\theta)$ for $k\ge 9$. The bound was subsequently improved by Barba et al. \cite{BBDFKORTVX15}, who also proved upper bounds of $5.8$ for $k=6$  and $2+\sqrt{3}$ for $k=5$. Yao$_4$ graphs have stretch at most 16.54 \cite{BHST24} and Yao$_3$ graphs have unbounded stretch \cite{Mol09}.

In contrast to spanners, the ultra-sparse regime for tree covers is much less understood.
Introduced in a different context, Chan's ``shifting lemma'' \cite{Cha98} yields a tree cover with 3 trees and stretch $6\sqrt{2}$. (Full details of this construction are given in \Cref{sec:shifting}.) For two trees, one can achieve stretch of $O(\sqrt{n\log{n}})$ \cite{BFN19}. On the negative side, it is known that a single tree incurs a stretch of $\Omega(n)$ \cite{Epp96,RR98}. Whether or not one can achieve a good (ideally constant) stretch with 2 trees remains an elusive open problem.

\begin{question}\label{q:few-trees}
What stretch can be achieved with two trees? In particular, can one get constant stretch?
\end{question}

\subsection{Our Contribution}
We answer \Cref{q:few-trees} in positive by presenting a surprisingly simple Steiner tree cover construction with 2 trees and stretch $\sqrt{26}$. Each of the trees in the construction is a quadtree where the internal vertices are the centers of the corresponding quadtree cells and the leaves are the input points. In particular, the internal vertices of the quadtree are Steiner points, which are not part of the input metric.
The result is summarized in the following theorem.

\begin{restatable}{theorem}{steiner}\label{thm:steiner}
Every set of points $P \subset \mathbb{R}^2$ admits a \emph{Steiner} tree cover consisting of two trees and with stretch $\sqrt{26}$.
\end{restatable}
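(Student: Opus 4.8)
I would build each of the two trees as a quadtree over $P$: fix an axis‑parallel bounding square, and recursively split every cell into its four congruent quadrants, stopping a cell once it contains at most one point of $P$. The internal vertices are the cell centers (each cell center joined by an edge to the center of its parent cell), and each point of $P$ is a leaf joined by an edge to the center of the smallest cell that contains it; every edge is weighted by the Euclidean distance between its endpoints. With Euclidean weights every tree path is a polygonal curve, so $d_T(x,y)\ge\|xy\|$ for all $x,y$ by the triangle inequality, and the left inequality in \eqref{eq:tree-cover-def} is free; only the stretch upper bound needs an argument. The tree path between two points $x,y$ is forced to pass through the center of $\mathrm{lca}(x,y)$, the smallest quadtree cell containing both; write $s(x,y)$ for its side length. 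Thus everything reduces to: for every pair $x,y$, at least one of the two quadtrees satisfies $s(x,y)\le\sqrt{13}\,\|xy\|$.

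The reduction rests on a geometric‑series bound on tree distances. In a quadtree, a cell of side $s$ has its center at distance $s/\sqrt2$ from the center of its parent (side $2s$), and any point inside a cell of side $s$ is within $s/\sqrt2$ of that cell's center. Hence, walking from a leaf $x$ up to the center of an ancestor cell $C$ of side $s_C$, the edge lengths sum to at most $\tfrac{1}{\sqrt2}\bigl(s_0+s_0+2s_0+4s_0+\cdots+\tfrac{s_C}{2}\bigr)=\tfrac{s_C}{\sqrt2}$, where $s_0$ is the side of $x$'s own cell; so $d_T(x,\mathrm{center}(C))\le s_C/\sqrt2$. Applying this to $x$ and to $y$ with $C=\mathrm{lca}(x,y)$ gives $d_T(x,y)\le \sqrt2\,s(x,y)$, which together with $s(x,y)\le\sqrt{13}\,\|xy\|$ yields stretch $\sqrt2\cdot\sqrt{13}=\sqrt{26}$. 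For the \emph{tight} constant one presumably keeps this estimate sharp, tracking the actual per‑level offsets of $x,y$ inside their cells rather than worst‑casing each term, but the skeleton is as above.

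The crux is the combinatorial‑geometric step: choosing the two quadtrees so that every pair $x,y$ is captured at the right scale by at least one of them. The obstruction is a pair at tiny Euclidean distance whose two points straddle a grid line of a cell much larger than $\|xy\|$; for that quadtree $s(x,y)$ is then enormous. The two quadtrees must therefore be placed relative to one another (by a suitable translation, and possibly a change of orientation or scale) so that no pair straddles a too‑coarse cell boundary in both trees at once, and the quantitative form of this ``offset'' property is exactly what pins the constant to $\sqrt{26}$ (tight, per the abstract). I expect this to be the main difficulty: identifying the correct relative placement of the two quadtrees, and then carrying out the tight case analysis over the finitely many ways a pair can be separated by the two grids at each scale. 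The remaining points are routine: handling coincident points or points lying exactly on a cell boundary (perturb, or break ties in a fixed way), and, if desired, refining the quadtree to arbitrarily small cells so that the leaf‑to‑cell edges contribute negligibly.
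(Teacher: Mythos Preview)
Your high-level picture is right: two quadtrees with Steiner nodes at cell centers, Euclidean edge weights so the lower bound in \eqref{eq:tree-cover-def} is automatic, and a geometric-series estimate giving $d_T\bigl(x,\mathrm{center}(C)\bigr)\le s_C/\sqrt2$ for any ancestor cell $C$ of side $s_C$. That bound matches \Cref{clm:path} in the paper.

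The factorization you build the argument on, however---reduce to $s(x,y)\le\sqrt{13}\,\|xy\|$ in one tree and multiply by $\sqrt2$---is \emph{false} for the construction that attains $\sqrt{26}$. In a unit square centered at the origin, take $a=(3/20,3/20)$ and $b=(0,1/4)$: then $\|ab\|=\sqrt{13}/20$, the LCA cell is the full square, and the tree distance is $13\sqrt2/20$. This is the \emph{good} tree for this pair (stretch exactly $\sqrt{26}$), yet $s(a,b)/\|ab\|=20/\sqrt{13}>\sqrt{13}$. The slack is not recovered by shrinking $s(x,y)$ but by sharpening the path bound: the paper uses $d_T(o,p)\le 2\sqrt2\,\|o-p\|$ (\Cref{clm:dist-red}), which is far tighter than $s_C/\sqrt2$ when $p$ is close to the center $o$, and the tight analysis trades off $\|o-a\|$, $\|o-b\|$ and the angle $\angle aob$ rather than the cell side alone. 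So the ``skeleton'' you describe does not actually support the $\sqrt{26}$ constant; a genuinely different estimate is needed, not just a refinement of the two terms.

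The other gap is the construction itself. ``A suitable translation, and possibly a change of orientation or scale'' is where the entire content lies; Chan's shifting argument already needs three translated copies in the plane. The paper's second quadtree is the first one \emph{rotated by $\pi/4$ and scaled by $\sqrt2$}. This specific choice yields the fractal property of \Cref{inv:fractal}: every cell of the first tree is bordered on its four sides by four cells of the second tree, at every scale, which decouples the levels. Each cell is split into eight $45^\circ$ triangles; one shows that (i) a pair lying in non-adjacent triangles of its LCA cell has stretch $\le\sqrt{26}$ in that tree, via the $2\sqrt2\,\|o-p\|$ bound and a case analysis (\Cref{lem:tight}), and (ii) a pair in adjacent triangles is, by the fractal invariant, placed in non-adjacent triangles of the rotated tree. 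Without the rotation there is no mechanism to hand off the bad pairs, and without the sharper center-distance estimate the constant cannot reach $\sqrt{26}$.
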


Using the Steiner point removal result by Gupta~\cite{Gup01} in a black-box manner, we readily obtain a spanning tree cover with 2 trees and stretch $8\sqrt{26}$. Instead, we present a straightforward bottom-up Steiner point removal procedure, which allows us to obtain a spanning tree cover with stretch $4\sqrt{26}$. In particular, we prove the following theorem.
\begin{restatable}{theorem}{nonsteiner}\label{thm:non-steiner}
Every set of points $P \subset \mathbb{R}^2$ admits a \emph{spanning} tree cover consisting of two trees and with stretch $4\sqrt{26}$. 
\end{restatable}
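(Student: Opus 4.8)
The plan is to start from the two-tree Steiner cover guaranteed by \Cref{thm:steiner} and turn each of its two trees into a spanning tree on $P$ by deleting the Steiner points one at a time, from the leaves of the quadtree up to the root, replacing each deleted node by a carefully chosen representative input point; this keeps the number of trees at two and only loses a factor of $4$ in the stretch. Throughout I use that a tree $T$ delivered by \Cref{thm:steiner} is a quadtree whose leaves are the points of $P$, whose internal nodes are cell centers, and whose edge weights decay (nearly) geometrically down the tree. A consequence I will lean on is that for any two leaves $x,y$ the tree distance $d_T(x,y)$ is comparable to the diameter $\Delta$ of their lowest common ancestor cell; in particular $d_T(x,y)=\Omega(\Delta)$, and $d_T(x,y)\le\sqrt{26}\,\|xy\|$.

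The construction itself is short. Processing the cells of $T$ bottom-up, I attach to every internal cell $v$ a representative $\rho(v)\in P\cap v$ by declaring one child of $v$ to be \emph{heavy} and setting $\rho(v)$ to be the representative of that heavy child (a leaf is its own representative); thus $\rho$ is constant along maximal heavy chains, and every point of $v$ lies within $\diam(v)$ of $\rho(v)$. I then define $T'$ on vertex set $P$ by adding, for every internal cell $v$ and every non-heavy child $u$ of $v$, the edge $\{\rho(u),\rho(v)\}$ with weight equal to the Euclidean length $\|\rho(u)\,\rho(v)\|$; this weight is at most $\diam(v)$ since both endpoints lie in $v$. Structurally, $T'$ is exactly what one obtains by contracting each heavy chain of $T$ to its common representative and discarding the resulting self-loops, so $T'$ is a tree spanning $P$. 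Moreover, since every edge of $T'$ has length equal to the Euclidean distance between its two endpoints, the triangle inequality gives $d_{T'}(x,y)\ge\|xy\|$ for all $x,y\in P$ \emph{for free}; only the upper bound of \eqref{eq:tree-cover-def} remains.

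For the upper bound, fix $x,y\in P$, let $v$ be their lowest common ancestor cell in $T$ and $\Delta=\diam(v)$. The unique $x$--$y$ path in $T'$ runs from $x$ upward to $\rho(v)$ and then downward to $y$, and each of its edges has the form $\{\rho(u),\rho(\text{parent of }u)\}$ for a cell $u$ that is an ancestor of $x$, or of $y$, and is contained in $v$. These cells are nested, so their diameters --- hence the lengths of the corresponding edges --- form a geometrically decreasing sequence topped by $\Delta$; summing, the $T'$-path has length $O(\Delta)$, and a careful accounting of exactly which edges can appear (at most one per level on each of the two branches) yields the clean bound $d_{T'}(x,y)\le 4\,d_T(x,y)$. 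Composing with $d_T(x,y)\le\sqrt{26}\,\|xy\|$ from \Cref{thm:steiner} gives $d_{T'}(x,y)\le 4\sqrt{26}\,\|xy\|$, and applying this removal to each of the two trees of the Steiner cover yields the desired spanning two-tree cover of stretch $4\sqrt{26}$.

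The main obstacle is this last stretch estimate, and specifically pinning the removal blow-up to exactly $4$ rather than to the generic factor $8$ coming from Gupta's black-box Steiner point removal: one must show that although $T'$ no longer shadows $T$ edge by edge, the detour it takes through $\rho(v)$ and through the intermediate representatives really does telescope against the $\Omega(\Delta)$ lower bound on $d_T(x,y)$ with the right constants. This requires knowing the precise (nearly exponential) decay of the edge weights in the quadtrees of \Cref{thm:steiner}, and checking the few degenerate configurations --- for instance $x=\rho(v)$ or $y=\rho(v)$, so that one ``upward'' branch is empty, or heavy chains that skip several levels in a compressed quadtree --- which are routine once the generic case and the geometric-series bookkeeping are set up.
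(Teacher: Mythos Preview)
Your construction is exactly the paper's: your representative $\rho(v)$ is the paper's root $r'$ of the spanning subtree for cell $v$, your heavy child is the one whose root gets promoted, and the edges $\{\rho(u),\rho(v)\}$ for non-heavy children $u$ are precisely the edges the paper adds.

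The one real issue is the stretch accounting. Your final paragraph frames the factor-$4$ bound as ``telescoping against the $\Omega(\Delta)$ lower bound on $d_T(x,y)$,'' i.e.\ bounding $d_{T'}(x,y)$ above by some $c_1\Delta$ and $d_T(x,y)$ below by some $c_2\Delta$. That route only delivers $8$, not $4$: the two $T'$-branches each sum to at most $\Delta+\Delta/2+\cdots=2\Delta$, so $d_{T'}(x,y)\le 4\Delta$, whereas $d_T(x,y)$ can be as small as about $\Delta/2$ (take $x$ and $y$ near the centers of two distinct child cells of $v$, so that each of $d_T(o,x)$ and $d_T(o,y)$ is essentially the single Steiner edge of length $\Delta/4$). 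The paper's argument is not global but pointwise and inductive: for every cell $v$ with center $o$ and every $p\in P\cap v$ one has $d_{T'}(\rho(v),p)\le 4\,d_T(o,p)$, because the new edge $\{\rho(v),\rho(u)\}$ has length at most $\diam(v)$ while the corresponding Steiner edge from $o$ to the center of $u$ has length exactly $\diam(v)/4$, and the induction hypothesis handles everything inside $u$. Adding the two branches at the LCA then gives $d_{T'}(x,y)\le 4\,d_T(x,y)$ with no further loss. This per-level edge comparison is what your ``at most one per level'' remark is reaching for; it is precisely \emph{not} a comparison of a summed geometric series against a crude lower bound on $d_T$.
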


Using an adaptation of the degree-reduction technique of \cite{CGMZ16} we can also bound the degree in both trees, while increasing the stretch bound by a factor of 2. For a parameter $\ell \ge 1$, we show a tradeoff of maximum degree of $1+3(\ell+1)$ and stretch $4\sqrt{26}/(1-2^{-\ell})$. In particular, for $\ell = 1$, we get a degree 7 and stretch $8\sqrt{26}$.

\begin{theorem}\label{thm:bdd}
Every set of points $P \subset \mathbb{R}^2$ admits a spanning tree cover with a constant stretch where each tree has a constant degree.
\end{theorem}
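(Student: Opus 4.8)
The plan is to start from the spanning tree cover with two trees and stretch $4\sqrt{26}$ guaranteed by \Cref{thm:non-steiner}, and apply a degree-reduction transformation to each tree separately. The key observation is that the two trees are essentially quadtrees (after the bottom-up Steiner point removal, the combinatorial structure still reflects a hierarchical partition of the plane), so the relevant quantity to control is not the total number of children of a node but the way that high-degree nodes contribute to path lengths. I would first recall the degree-reduction technique of \cite{CGMZ16}: given a rooted tree $T$ and a parameter $\ell \ge 1$, one replaces each node $v$ of large degree by a balanced binary tree (a ``caterpillar'' or ``broom'' gadget) of depth $O(\log(\deg v))$ on zero-weight (or tiny-weight) edges, reattaching $v$'s children as leaves of this gadget. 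This keeps all original tree distances unchanged up to an additive error that can be charged against the edge weights on the path, provided the gadget edges are weighted geometrically decreasing so that their total contribution along any root-to-leaf path is a small fraction of the smallest real edge on that path.

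The main steps, in order, are as follows. \textbf{Step 1:} Make the two trees of \Cref{thm:non-steiner} rooted and observe that, as quadtree-like structures, the edge weights along any root-to-leaf path decrease geometrically by a factor of $2$ per level (cell side lengths halve), so that the sum of all edge weights strictly below any given level is at most a constant times the weight at that level. \textbf{Step 2:} For each internal node $v$ with $c$ children, insert a binary-tree gadget of depth $\lceil \log_2 c \rceil$ whose edges carry weights forming a geometric series scaled so that the total weight added between $v$ and any of its (former) children is at most $\varepsilon$ times the weight of the edge from $v$ to its parent in the original tree; here $\varepsilon = 2^{-\ell}$ captures the tradeoff. Since the gadget is binary and augmented by at most one original parent edge and we may keep one ``spine'' child directly, the resulting degree is $1 + 3(\ell+1)$ after a careful accounting of which edges are spine versus branching (the $\ell+1$ comes from how deep we allow the gadgets to interleave before truncating, i.e., we only fully expand $\ell+1$ levels of branching and then allow a larger fan-out at the bottom, or symmetrically bound the branching factor at $3$ per level over $\ell+1$ levels). \textbf{Step 3:} Bound the stretch: for any pair $x,y$, take the tree $T$ from the original cover that $4\sqrt{26}$-approximates $\|xy\|_X$; the new tree distance $d_{T'}(x,y)$ differs from $d_T(x,y)$ only by the gadget edges traversed, whose total weight is at most $\sum_{\text{levels } i} \varepsilon \cdot w_i \le \varepsilon \cdot \frac{1}{1 - 1/2} \cdot w_{\max} = 2\varepsilon \cdot w_{\max}$ where $w_{\max}$ is the heaviest real edge on the $x$--$y$ path, and $w_{\max} \le d_T(x,y)$; hence $d_{T'}(x,y) \le (1 + 2\varepsilon) d_T(x,y)$, and choosing $\varepsilon$ appropriately (e.g.\ absorbing the geometric sum into the claimed factor) gives stretch $4\sqrt{26}/(1 - 2^{-\ell})$, with the clean special case $\ell = 1$ yielding degree $7$ and stretch $8\sqrt{26}$. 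The lower bound $d_{T'}(x,y) \ge d_T(x,y) \ge \|xy\|_X$ is immediate since we only add edges.

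I expect the main obstacle to be \textbf{Step 2}: making the gadget insertion compatible with the geometric weight structure across \emph{all} levels simultaneously. The subtlety is that a single node $v$ may have many children, and naively each gadget steals weight from $v$'s parent edge, but then $v$'s children in turn have their own gadgets stealing from the (possibly much smaller) edges $v$--child; one must verify that these perturbations do not compound badly down a long root-to-leaf path. The resolution is precisely the geometric decay from Step 1: because real edge weights shrink by a factor of $2$ per level while the allowed perturbation at each level is a fixed fraction of the \emph{local} edge weight, the total perturbation along any path telescopes into a convergent geometric series dominated by its largest term, which is itself dominated by the path length. A secondary technical point is bookkeeping the degree bound exactly as $1 + 3(\ell+1)$: one must track that each original node contributes a bounded number of branching levels (at most $\ell+1$) before a flat bottom layer, and that no gadget node accumulates more than $3$ children plus the single parent-edge. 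Neither obstacle is conceptually deep — the construction of \cite{CGMZ16} handles exactly this kind of issue — but the adaptation to our specific two-tree quadtree cover, and the verification that degree reduction can be done on each tree independently without breaking the covering guarantee, is where the care is needed. I would conclude by noting that since the transformation is applied to each of the two trees separately and only adds Steiner-free edges within each tree (subdividing with new internal nodes, which are already Steiner points in the spanning-cover sense only if we started from Steiner trees — here, since we start from \Cref{thm:non-steiner}, the gadget nodes are new Steiner points, so strictly this yields a \emph{Steiner} bounded-degree cover; a final remark can address that the bottom-up removal of \Cref{thm:non-steiner} can be interleaved with the gadget construction if a truly spanning bounded-degree cover is desired, at the cost of another constant factor).
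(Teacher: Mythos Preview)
Your proposal has a genuine gap: the technique you describe is not the one used, and it does not yield a spanning tree cover. You propose inserting binary-tree ``broom'' gadgets at each high-degree node, with the gadget's internal nodes being new vertices. As you yourself note at the end, these are Steiner points, so the result is not spanning; your suggested fix of ``interleaving'' the bottom-up removal with the gadget insertion is circular, since removing the gadget nodes would undo the degree reduction. Moreover, the degree bound $1 + 3(\ell+1)$ does not arise naturally from binary gadgets (which would give degree $3$), and your attempt to reverse-engineer it via ``how deep we allow the gadgets to interleave'' is not a coherent construction.

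The actual technique from \cite{CGMZ16}, as adapted here, does not insert any new vertices. It exploits the specific structure of the tree $T'$ after Steiner point removal: each vertex $u$ acquires at most $3$ children per quadtree level (four subsquares, one of which contains $u$ itself). Writing $M^u_j$ for the children of $u$ at the $j$-th nonempty level, one keeps the edges from $M^u_1,\ldots,M^u_\ell$ intact, and for $j>\ell$ \emph{redirects} each child $v\in M^u_j$ to some existing vertex $w\in M^u_{j-\ell}$. No vertex is created; the degree bound $1+3(\ell+1)$ now falls out directly (one parent edge, at most $3\ell$ retained children, at most $3$ redirected-in children from one level above). The stretch bound $4\sqrt{26}/(1-2^{-\ell})$ comes from bounding the length of the new path $v=v_0,v_1,\ldots,v_{s_j},u$, where successive $v_s$ lie $\ell$ levels apart and hence at geometrically shrinking scales. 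The key point you missed is that the redirection targets are \emph{already-present} children at coarser levels, which is what keeps the construction spanning.
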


\Cref{thm:steiner} also implies a Steiner spanner with $4n-4$ edges and stretch $\sqrt{26}$. This improves the stretch bound of $\Theta_4$ and Yao$_4$ graphs, which are the sparsest in the family of Theta and Yao spanners and have at most $4n$ edges on all instances and at least $3n-7$ on some instances.
\Cref{thm:non-steiner} gives a spanner with $2n-2$ edges and stretch $4\sqrt{26}$. This stretch is strictly better than that of Aronov et al. \cite{ABCGHSV08} for the regime with the same number of edges. In comparison to the aforementioned Delaunay triangulations, which are planar graphs and have $3n-6$ edges for some instances, our construction has strictly smaller number of edges. Our construction is also sparser than Yao$_4$ and $\Theta_4$ graphs which can have at least $3n-7$ edges on some instances (see \Cref{sec:cones}).
In addition, all our constructions can be implemented in $O(n\log{n})$ time. The structural simplicity of our constructions makes them suitable for potential  applications, and we next give some examples.

\subsubsection{Applications}

The first application of our tree cover construction is a compact routing scheme, which is a distributed message passing algorithm formally introduced in \Cref{sec:routing}. 
There are many existing constructions of compact routing schemes for Euclidean and doubling metrics \cite{Tal04,Sli05,AGGM06,KRX06,KRXY07,GR08Soda,KRX08,ACGP10,KRXZ11,CGMZ16,CCL+24SoCG,CCLST25}. In particular, one can route in any $n$-point set in $\mathbb{R}^d$ with stretch $1+\eps$ and using $O_{\eps,d}(\log{n})$ bits of total memory; in some of these results, labels use as few as $\ceil{\log{n}}$ bits. Alas, even for a Euclidean plane and with large constant $\eps$, the the overall memory used in all of the previous constructions is at least $50\log{n}$ bits.
Our spanning tree construction from \Cref{thm:non-steiner} provides a constant stretch with a structurally simple overlay network---a union of two trees. The existing routing schemes on trees \cite{TZ01,GJL21} readily allow us to route in each of the trees using as few as $(1+o(1))\log{n}$ bits for labels and without routing tables. In other words, the total memory required for two trees is $(2+o(1))\log{n}$ bits. To be able to apply these results on top of our tree cover construction, the main challenge is to determine the right tree to route on without increasing the aforementioned memory bound by much. Our key technical contribution in terms of compact routing is a \emph{handshaking} protocol used to determine which tree to route on, which uses only additional $O(\log\log{n})$ bits. This is the first compact routing result for the Euclidean plane which achieves a constant stretch and uses as few as $(2+o(1))\log{n}$ bits of memory. In addition, we use a structurally simple overlay network consisting of two trees (and we route on top of only one), which could be of a potential practical interest.

Our next application is a path reporting distance oracle, which is a data structure in the centralized sequential model used to report paths between any two query points in the input point set. Specifically, we show that for any set $P$ of $n$ points in the Euclidean plane there is a data structure of size $O(n)$ which can be constructed in $O(n\log{n})$ time such that for every subsequent query, consisting of points $a,b\in P$, it outputs a path $P_{a,b}$ of stretch $4\sqrt{26}$ in time $O(|P_{a,b}|)$, where $|P_{a,b}|$ is the number of edges in $P_{a,b}$. See \Cref{sec:oracle} for details.   
It was previously shown that there is a data structure of size $O(n\log{n})$ which requires $O(n\log{n})$ time to be constructed; for every subsequent query it returns a $(1+\eps)$-approximate path in time linear in the path size (assuming $\eps$ is a constant) \cite{Eun20}. In comparison to this result, our data structure is of a smaller size. There are also constructions based on Yao and $\Theta$ graphs, which take $O(n\log{n})$ time to be computed, have size $O(n)$ and output a $(1+\eps)$-approximate path in time linear in the path size \cite{NS07}.  Compared to the Yao and $\Theta$ graph based data structures, our construction has a simpler structure --- it is a union of two trees augmented with additional information.

We also present a local routing algorithm, which is a navigation algorithm operating in the online setting. (See \Cref{sec:local-routing} for a formal definition.) We present a local routing algorithm for a set of $n$ points in $\mathbb{R}^2$ which uses an overlay network that consists of a pair of trees, and is thus of size $2n-2$, and achieves stretch of $4\sqrt{26}$. The key advantage of our local routing algorithm is the sparse (and simple) overlay network. In comparison to the best local routing algorithm for Yao$_4$ graphs, which achieve stretch 23.36 \cite{BHST24}, our construction achieves a better stretch and operates on a sparser overlay network. Although our stretch is inferior to the best known bound for $\Theta_4$ graph, which is 17 \cite{BCHS24}, our key advantage is the sparse overlay network.
We note, however, that the routing schemes based on Yao$_4$ and $\Theta_4$ require only knowledge of four neighbors at every stage of routing, since the out-degree of these constructions is bounded by 4. To address this point, we present another local routing algorithm which operates on the bounded degree tree cover construction, where each tree has degree of at most 7. The stretch of this algorithm is $8\sqrt{26}$ and it stores additional $2\log{n}$-bit information with each point.

\subsection{Proof overview}
In this section, we mainly focus on the
case where the input points constitute a \EMPH{$\sqrt{n}$ by $\sqrt{n}$ grid}. Not only is this a basic instance that allows us to convey the key ideas in a clean manner, but also our general construction is obtained through a reduction to the grid instance. For the grid instance, it is only natural to employ a recursive construction, where we partition the underlying grid square into 4 subsquares of half the side-length, recursively construct a 2-tree cover for each of the 4 subsquares, and then the remaining part is to carefully glue together the 2-tree covers that we get from the recursion for each of the 4 subsquares.
One can easily achieve constant stretch for a single recursion level (i.e., for a single distance scale), but it is unclear how to glue together the 2-tree covers from the lower level of the recursion to the current one without blowing up the stretch by a multiplicative constant factor.
In other words, a naive recursive construction will blow up the stretch by a multiplicative constant factor at each recursion level. With a more careful treatment, one can replace the multiplicative slack in the stretch at each level with an additive slack, but the number of levels is logarithmic in $n$, resulting in a stretch of $\approx \log n$. Moreover, for a general point set, the number of recursion levels will depend on the {\em aspect ratio} of the metric, which might lead to a stretch that is polynomial in $n$. Thus the crux is to come up with a 2-tree cover construction that \EMPH{breaks the dependency between different levels of the recursion}. 
This work presents a remarkably simple construction that achieves exactly that, as we describe next.

Our tree cover for the grid consists of a pair of  \EMPH{quadtrees}. Specifically, to construct the first tree, we take the square  $S$ underlying the grid and root it at (grid) point $o$ which is the center of $S$, as shown in \Cref{fig:construction}. Next, we partition $S$ into four equal subsquares ($A$, $B$, $C$, and $D$ in \Cref{fig:construction}) and invoke the recursion for each of them, connecting the root $o$ with the (grid points lying at the) centers of the corresponding squares. This process continues recursively and the recursion bottoms when a square with a single grid point is reached. See \Cref{fig:fullIntro} for an illustration of a single tree.

\begin{figure}[!htb]
\centering
\includegraphics[width=0.9\textwidth]{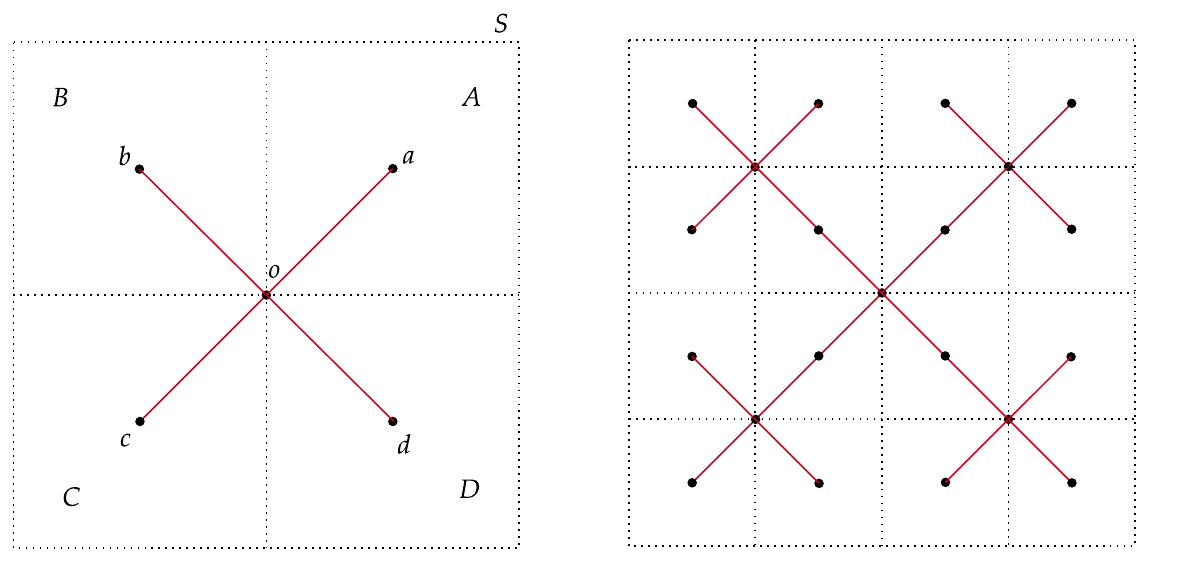}
\caption{Left: An illustration of one recursive call of the construction where $S$ is the square underlying the grid (or an axis-aligned square containing the input point set, for the Steiner construction). Square $S$ centered at $o$ is divided into four subsquares whose centers are connected to $o$. 
Right: An illustration of the tree after the recursive calls on each of the four subsquares of $S$.}
\label{fig:construction}
\end{figure}

\begin{figure}[!htb]
    \centering
    \begin{minipage}{0.5\textwidth}
        \centering
        \includegraphics[width=\linewidth]{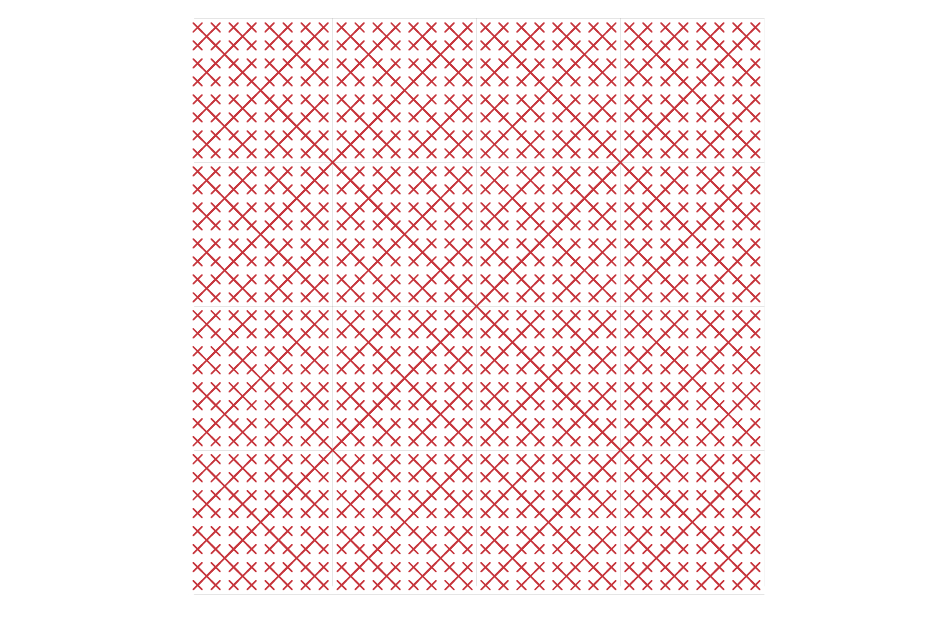}
        \caption{Full construction of a single tree.}
        \label{fig:fullIntro}
    \end{minipage}%
    \begin{minipage}{0.5\textwidth}
        \centering
        \includegraphics[width=0.65\linewidth]{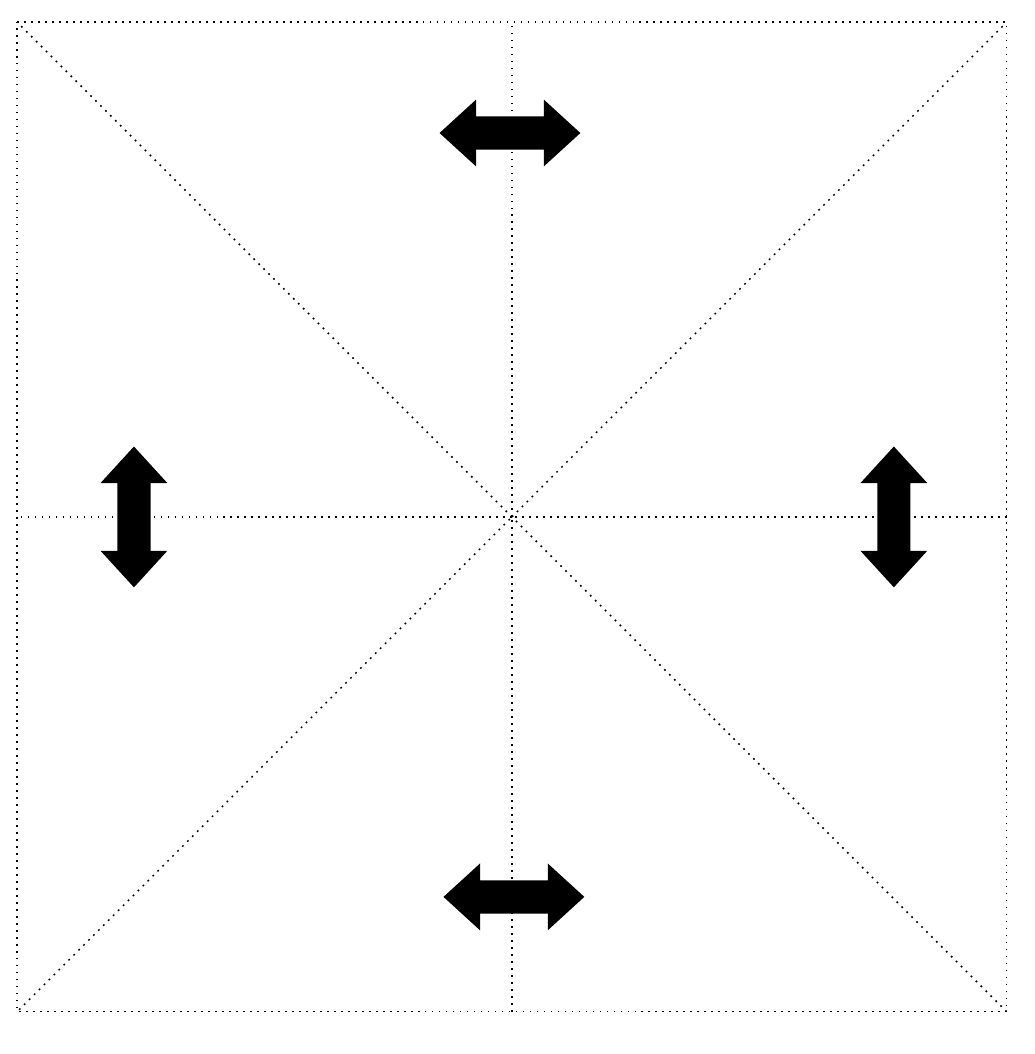}
        \caption{Regions with pairs having high stretch.}
        \label{fig:regionsIntro}
    \end{minipage}
\end{figure}

Consider the square $S$ containing all input points, which corresponds to the first recursion level of the construction, and the subdivision of $S$ into triangles, as shown in \Cref{fig:regionsIntro}.
Consider the pairs of points that are in the neighboring triangles labeled with an arrow. If a pair of points is close to the border between two triangles, their Euclidean distance is $O(1)$, while their tree distance is $\Theta(\sqrt{n})$. In other words, when restricting our attention to the pairs that lie in different subtrees of the tree rooted at the center of $S$, a constant fraction of pairs of points have stretch $\Theta(\sqrt{n})$. (For a general point set, the stretch could be as large as the aspect ratio.)
A large stretch is incurred also at lower levels of recursion, as the stretch bound decays by a factor of 2 at each level. We conclude that this tree has a large stretch for a constant fraction of pairs of points, which we call the ``bad pairs''.

Let us take the second tree to be a copy of the first tree, but {\em rotate} it by 45 degrees and {\em inflate} it by a factor of $\sqrt{2}$.
The resulting tree cover, with a {\em fractal-like} structure, is shown in \Cref{fig:full}. Specifically, the rotation of the second tree by $45$ degrees is done so that the pairs of neighboring triangles with a large stretch are disjoint between the two trees. The inflation by $\sqrt{2}$ simply ensures that all  input points are contained in the second tree. 
\begin{figure}[!htb]
\includegraphics[width=\textwidth]{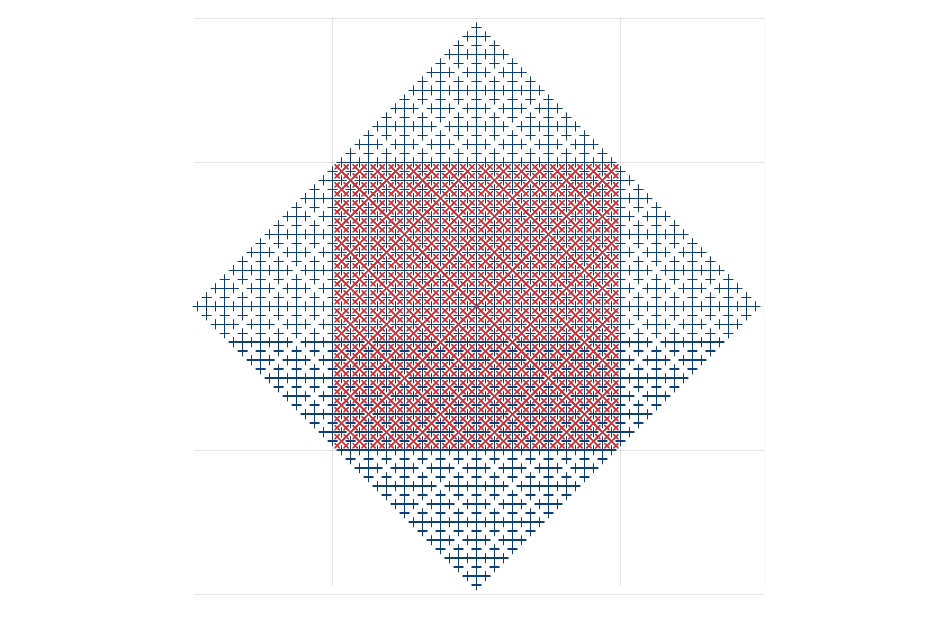}
\caption{The two trees in the tree cover. }
\label{fig:full}
\end{figure}

It is not difficult to verify that the second tree provides a good stretch for all bad pairs of the first tree in the first level of recursion (with stretch $\Theta(\sqrt{n})$).
\begin{figure}
    \centering
    \includegraphics[width=0.7\linewidth]{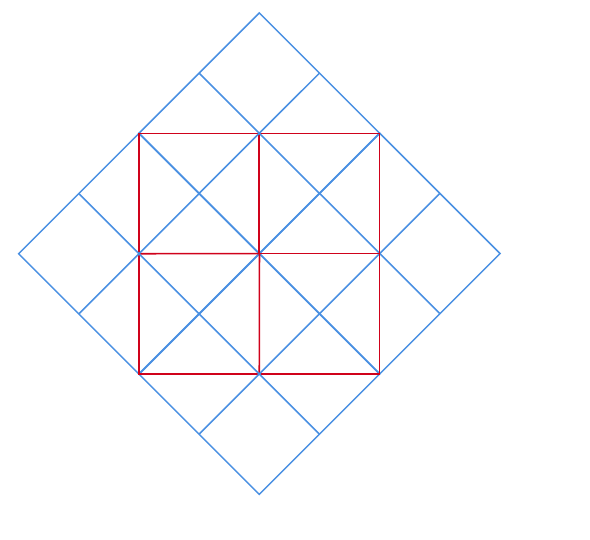}
    \caption{Fractal-like structure of our construction.}
    \label{fig:fractalIntro}
\end{figure}
It turns out that the specific fractal-like structure obtained by the rotation and inflation of the second tree enables us to prove a key structural property:
each quadtree cell of the first tree in {\em any recursion level} is surrounded by four quadtree cells in the second tree, exactly as is the case in the first recursion level. See \Cref{fig:fractalIntro} for an illustration. Using this structural property, we are able to completely get rid of the dependencies between different recursion levels in the stretch analysis and \EMPH{reason on each level separately}!  
Specifically, for every two points, we find the quadtree cell in the first tree such that these points lie in two different subsquares. If they lie in neighboring triangles and have a large stretch in the first tree, it can be shown that the 
construction for the corresponding quadtree cell in the second tree
gives a good stretch for them.

This construction for the grid instance extends seamlessly to general point sets, where \emph{Steiner points} assume the role of grid points. Specifically, 
our Steiner tree cover consists of a pair of {\em augmented} quadtrees, where
the square $S$ for the first tree is the smallest axis-aligned square containing the input point set,
the input points are the leaves in the quadtrees and the internal (Steiner) points assume the role of grid points, i.e., they are the centers of the quadtree cells.
We show that the stretch of our Steiner tree cover construction is $\sqrt{26}$ and give a \EMPH{tight} stretch analysis for this construction. 
A natural  question left open by this work is whether a different construction (even for the grid) could provide stretch better than $\sqrt{26}$---what is the optimal stretch achievable by 2 trees? Natural adaptations of our construction, such as using a square subdivision into more than eight triangles, do not seem to work. 

The full details of the Steiner construction can be found in \Cref{sec:steiner}. Our tight stretch analysis of the construction is given in \Cref{sec:tight}. To obtain a spanning tree cover, we employ a simple Steiner point removal procedure described in \Cref{sec:spr}, which blows up the stretch by a constant factor. Finally, to get a bounded degree spanning tree cover, we adapt a degree re-routing technique due to \cite{CGMZ16} (employed originally for the purpose of achieving a bounded degree $(1+\eps)$-spanner in doubling metrics) to our spanning tree cover construction; the details of this adaptation are given in \Cref{sec:bdd}. 

Selected applications of our tree cover are presented in \Cref{sec:applications};
for some of them, it is crucial to employ a {\em bounded degree} tree cover construction that does not contain any Steiner points.

\section{Tree cover construction}
In this section we describe the tree cover construction. The most basic construction contains Steiner points. In \Cref{sec:steiner} we describe the Steiner construction and provide a simple upper bound on stretch. In \Cref{sec:tight} we give a tight stretch analysis. The formal result is summarized in \Cref{thm:steiner}.
In \Cref{sec:spr} we prove \Cref{thm:non-steiner}. Specifically, we describe how to remove Steiner points so that the stretch grows by at most a factor of 4.

\subsection{Steiner tree cover construction}\label{sec:steiner}
Let $P$ be a set of points in the Euclidean plane and let $\delta$ be the minimal pairwise distance between points in $P$. Let $S_R$ be an an axis-aligned square containing $P$.  Let $S_B$ be the square obtained from $S_B$ by a $\pi/4$ rotation and scaling up by a factor of $\sqrt{2}$; note that the side centers of $S_B$ lie at the corners of $S_R$.
We invoke \textsc{Tree}$(S_R,P)$ (see \Cref{alg:construct}) below to obtain the \emph{red tree}, and  \textsc{Tree}$(S_B,P)$ to obtain the \emph{blue tree}. 
The procedure \textsc{Tree}$(S, P)$ for a given square $S$ and a set of points $P$ works as follows. Let $o$ be the center of $S$. The base case happens when $P$ contains a single point $p$. There, the output is the single edge between $s$ and $p$. Otherwise, $S$ is partitioned into four equal squares and $o$ is connected via an edge to each of the respective centers. Procedure recurses on each of the four squares. See \Cref{alg:construct} for details.
Square $S_R$ and all the squares in the subsequent recursive calls of \textsc{Tree}$(S_R,P)$ are called \emph{red squares}. Similarly $S_B$ and the squares in the subsequent recursive calls are called \emph{blue squares}.  Some of the statements in the subsequent proofs hold both for red and blue squares; we use \emph{recursion square} to denote either red or blue square in such statements.
See \Cref{fig:construction} for an illustration of the first few iterations of \textsc{Tree}$(S_R,P)$. 
Refer to \Cref{fig:full} for an illustration of the two trees after several levels of recursion.
\begin{algorithm}
\begin{algorithmic}[1]
\Procedure{Tree}{$S, P$}
\State let $o$ be the center of $S$
\If {$\diam(S) \le \delta/8$}
\If {$P \neq \emptyset$}
\State let $p$ be the point in $P$ \Comment{There can be only one point.}
\State \Return $\{o,p\}$
\Else
\State \Return $\emptyset$
\EndIf
\EndIf
\State partition $S$ into four equal squares: $A$, $B$, $C$, and $D$ centered at $a$, $b$, $c$, and $d$, respectively
\State $E \gets \{(o,a), (o,b), (o,c), (o,d)\}$
\State $E \gets E \cup \Call{Tree}{A, P \cap A} \cup \Call{Tree}{B, P \cap B}\cup \Call{Tree}{C, P \cap C}\cup \Call{Tree}{D, P \cap D} $
\State\Comment{If a point belongs to more than one square, break ties arbitrarily. }
\State\Comment{If $o \in P$, exclude it from the four squares.}
\State\Return $E$
\EndProcedure
\end{algorithmic}
\caption{$S$ is a square containing a point set $P$} 
\label{alg:construct}
\end{algorithm}

It is readily verified that the running time of this construction is $O(|P|\log\Phi)$, where $\Phi$ is the aspect ratio of $P$.
To avoid the dependence on the aspect ratio, we construct a compressed quadtree (with no single-child paths), where the number of vertices is at most $2|P|-1$. Such a quadtree is known to require $O(n \log n)$ time to be constructed \cite{SHP11}. Hence the running time of our construction is $O(n \log n)$.
\begin{lemma}\label{lem:time}
Given a point set $P$, a compressed version of the Steiner tree cover from \Cref{thm:steiner} can be computed in $O(|P| \log |P|)$ time. Each tree in the cover is without single-child paths and consists of at most $2|P|-1$ vertices.
\end{lemma}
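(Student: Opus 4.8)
The plan is to establish Lemma~\ref{lem:time} in two parts: first, the combinatorial bound on the size of each tree, and second, the running time. The key realization is that the uncompressed tree from \Cref{alg:construct} can have depth $\Theta(\log \Phi)$ --- each time we bisect a square, the side length halves, and the recursion bottoms only when $\diam(S) \le \delta/8$, so we descend roughly $\log_2(\diam(S_R)/\delta) = \Theta(\log \Phi)$ levels. Along the way, many internal Steiner nodes have exactly one nonempty child (when all points of $P$ lying in the current square happen to fall into a single subsquare). These single-child chains are precisely what inflates the size and construction time, so the fix is to contract them, which is exactly the standard notion of a \emph{compressed quadtree}.

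\medskip
\noindent\textbf{Step 1 (size bound).} I would first argue that in the uncompressed tree, the only branching (i.e.\ $\ge 2$ nonempty children) Steiner nodes are in bijection with a laminar family of subsets of $P$, hence there are at most $|P|-1$ of them; together with the $|P|$ leaves this gives at most $2|P|-1$ \emph{essential} nodes. Contracting each maximal single-child path into a single edge (whose weight is the sum of the contracted edge weights, so that tree distances between points of $P$ are preserved exactly) yields the compressed tree with at most $2|P|-1$ vertices and, by construction, no single-child (degree-two internal) paths. One must check that contraction does not change $d_T(x,y)$ for $x,y\in P$: this is immediate since a contracted chain lies on the unique tree path between any pair of descendants-versus-non-descendants, and its total length is unchanged. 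The same argument applies verbatim to both the red and the blue tree, since the construction \textsc{Tree}$(\cdot,\cdot)$ is identical up to the choice of the initial square.

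\medskip
\noindent\textbf{Step 2 (running time).} Here I would invoke the known result, cited as \cite{SHP11}, that a compressed quadtree on $n$ points in the plane can be built in $O(n\log n)$ time. The one thing to verify is that our trees \emph{are} compressed quadtrees in that sense, up to the bounded-diameter stopping rule $\diam(S)\le \delta/8$: the stopping rule only truncates the tree and can be emulated by stopping the standard compressed-quadtree construction once a cell contains a single point, which happens no later (the side length at that point is at most $2\delta$ since $\delta$ is the minimum interpoint distance, comfortably within a constant factor of the $\delta/8$ threshold, and merging the at most $O(1)$ extra levels costs $O(n)$ total). Building both trees and then attaching each input point as a leaf to the appropriate cell center is an additional $O(n)$ work (or $O(n\log n)$ if done by point location within the quadtree). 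Summing gives the claimed $O(|P|\log|P|)$ bound.

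\medskip
\noindent\textbf{Main obstacle.} The only genuinely delicate point is making the reduction from our specific tree (with its $\delta/8$ diameter cutoff and the tie-breaking rules for points on cell boundaries or at a cell center) match the off-the-shelf compressed-quadtree machinery of \cite{SHP11} cleanly enough to quote it as a black box. I expect this to be routine --- the cutoff only removes levels, and the tie-breaking is an arbitrary but fixed assignment of each point to one leaf cell --- but it is the step where one has to be careful that "compressed quadtree" in our sense (centers of cells as Steiner points, input points as leaves, single-child paths contracted) and in the cited sense are interchangeable, and that the contraction preserves distances exactly so that the stretch guarantee of \Cref{thm:steiner} carries over unchanged.
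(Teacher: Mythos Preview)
Your proposal is correct and follows the same approach as the paper: invoke the standard compressed-quadtree construction of \cite{SHP11} to obtain both the $O(|P|\log|P|)$ running time and the $2|P|-1$ vertex bound. The paper's justification is in fact even terser than yours---it simply states the bounds and cites \cite{SHP11} in the paragraph preceding the lemma, without spelling out the laminar-family counting argument or the distance-preservation check that you (correctly) supply.
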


The rest of this section is devoted to proving \Cref{thm:steiner}.
By \Cref{lem:correct}, each of the calls
\textsc{Tree}$(S_R,P)$ and  \textsc{Tree}$(S_B,P)$
yields a Steiner tree. \Cref{lem:stretch} asserts that the stretch of the tree cover obtained by these two trees is at most 8. (In \Cref{sec:tight}, we improve this bound to $\sqrt{26}$ and show that this is tight for our construction.)
Thus, it remains to prove 
\Cref{lem:correct} and
\Cref{lem:stretch}.

\begin{lemma}\label{lem:correct}
For any point set $P$ in $\mathbb{R}^2$ and for any square $S$ containing $P$,
Procedure \textsc{Tree}$(S,P)$ outputs a Steiner tree for $P$ whose edge set is contained in $S$.
\end{lemma}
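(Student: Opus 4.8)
The plan is to prove both claims by induction on the recursion depth of \textsc{Tree}$(S,P)$, tracking two invariants simultaneously: (i) the output edge set is contained in $S$, and (ii) the output is a tree spanning $P \cup \{\text{Steiner centers created}\}$, and moreover it contains the center $o$ of $S$ as a vertex (this last point is what lets the recursive gluing work). First I would handle the base case: if $\diam(S) \le \delta/8$, then since $\delta$ is the minimum pairwise distance, $S$ contains at most one point of $P$; if it contains the point $p$, the output $\{o,p\}$ is a single edge, trivially a tree containing $o$ and $p$, and since both $o$ and $p$ lie in $S$, the edge (a segment) lies in $S$; if $P \cap S = \emptyset$ the output is the empty graph, which is a (degenerate) tree and is vacuously contained in $S$.

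For the inductive step, $S$ is partitioned into four congruent subsquares $A,B,C,D$ with centers $a,b,c,d$, and the output is $E = \{(o,a),(o,b),(o,c),(o,d)\} \cup T_A \cup T_B \cup T_C \cup T_D$, where $T_A = \textsc{Tree}(A, P\cap A)$, etc. By the induction hypothesis applied to each subsquare, $T_A$ is a tree whose edges lie in $A$ and which contains $a$ as a vertex (when $P \cap A \neq \emptyset$; when $P \cap A = \emptyset$ we get $T_A = \emptyset$ and $a$ is simply not yet connected, so I would note that the spanning edge $(o,a)$ in that case connects a fresh isolated vertex $a$ — one can either argue $a$ is still a legitimate Steiner vertex of the final tree or, cleaner, observe that such empty branches only add a pendant path of Steiner points which keeps the graph a tree; I'd prefer to simply include $a$ as a vertex of the red/blue tree regardless). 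Containment in $S$ is immediate: each $T_A \subseteq A \subseteq S$, and each segment $(o,a)$ connects the center of $S$ to the center of a subsquare, both of which lie in $S$, so by convexity of $S$ the segment lies in $S$.

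The tree property is the heart of the step. The four trees $T_A,\dots,T_D$ are on disjoint vertex sets except that they may each share no vertices with each other (their edge sets live in the interiors of disjoint subsquares; the only possible overlap is on the shared boundary, but since tie-breaking assigns each point to exactly one subsquare and each $T_X$'s Steiner vertices are strictly inside $X$, the vertex sets are disjoint). Adding the four edges $(o,a),(o,b),(o,c),(o,d)$ introduces the new vertex $o$ and attaches each component $T_X$ to $o$ through exactly one edge via its distinguished vertex (the center of $X$). A disjoint union of trees, plus one new vertex joined by exactly one edge to each, is again a tree: it is connected (everything reaches $o$) and has exactly $\bigl(\sum_X |V(T_X)|\bigr) + 1$ vertices and $\bigl(\sum_X |E(T_X)|\bigr) + 4 = \bigl(\sum_X (|V(T_X)|-1)\bigr) + 4 = \bigl(\sum_X |V(T_X)|\bigr)$ edges, hence $\#V - \#E = 1$. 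Finally $P \subseteq \bigcup_X (P\cap X) \subseteq \bigcup_X V(T_X) \subseteq V(E)$, so all input points are spanned, and $o \in V(E)$ as required for the invariant.

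The main obstacle — really the only subtlety — is bookkeeping around empty subsquares and the boundary tie-breaking: one must be careful that (a) a point on a shared edge/corner is counted in exactly one $P \cap X$ (guaranteed by the "break ties arbitrarily" rule), (b) the special case $o \in P$ is handled by the "exclude $o$ from the four squares" rule so that $o$ is not double-counted as both a Steiner center and an input point, and (c) when $P \cap X = \emptyset$ we still want $X$'s center available as a vertex so the edge $(o,a)$ does not dangle meaninglessly; the clean fix is to strengthen the invariant to "the tree contains the center of $S$ as a vertex" and to always include the four edges $(o,a),(o,b),(o,c),(o,d)$, treating any center with no descendants as a legitimate leaf Steiner vertex. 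With the invariant phrased this way, the induction goes through with no further complications, and applying it to $S = S_R$ (resp. $S_B$) gives \Cref{lem:correct}.
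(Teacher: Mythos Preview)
Your proposal is correct and follows essentially the same approach as the paper: induction on the recursion depth, with vertex-disjointness of the four subtrees following from tie-breaking on shared boundaries, and containment following from $A,B,C,D$ partitioning $S$. You are more careful than the paper about the empty-subsquare bookkeeping and the explicit vertex/edge count, but the core argument is identical.
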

\begin{proof}
The proof is by induction on the recursion level. In the base case, the procedure either returns a single edge inside $S$ or an empty set,
hence the statement clearly holds. For the induction step, we assume that the recursive calls produced four Steiner trees $T_A$, $T_B$, $T_C$ and $T_D$ whose edge sets are contained in $A$, $B$, $C$  and $D$, respectively. These trees are vertex-disjoint since the four squares only share the borders and the algorithm ensures that any point on the border between one or more squares is considered only in a single recursive call. The final tree is obtained by connecting $o$ to the roots of the four trees. This tree is fully contained in $S$ since $A$, $B$, $C$, and $D$ form a partition of $S$.
\end{proof}

The following invariant is used later on in the stretch analysis. Intuitively, it asserts the fractal shape of the construction. This is a simple geometric observation; see  \Cref{fig:fractal}
for an illustration.
\begin{figure}
    \centering
    \includegraphics[width=\textwidth]{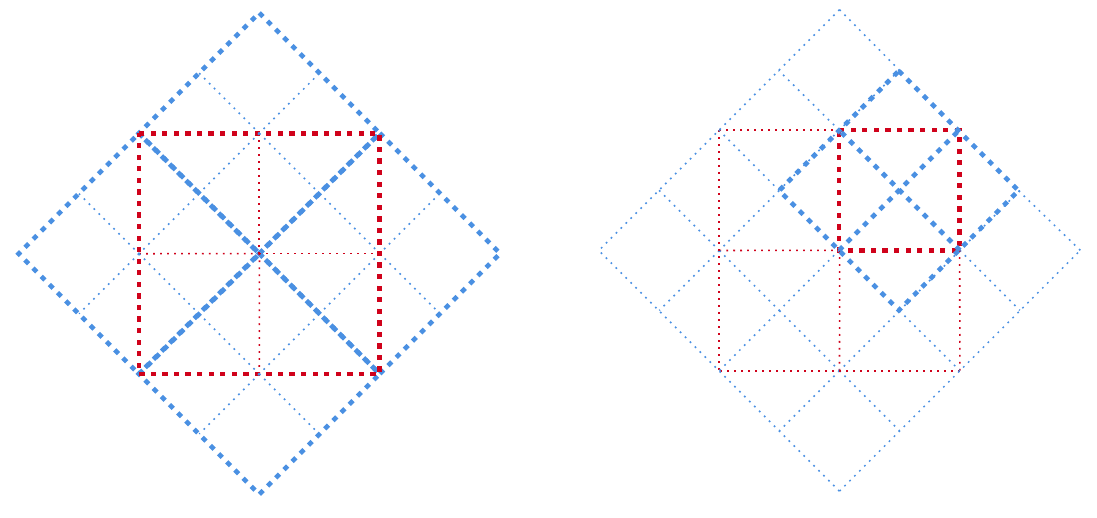}
    \caption{An illustration of \Cref{inv:fractal}. Left: For every red square there are four blue squares at its border. Right: The property holds for each of the four subsquares of the red square.}
    \label{fig:fractal}
\end{figure}
\begin{invariant}\label{inv:fractal}
For every red square $S$, there are four disjoint blue squares, each of which has a center at the middle point of a side of $S$ and two antipodal corners coinciding with the endpoints of the side.
\end{invariant}
\begin{proof}

The proof is by induction on the recursion level. For the base case, consider the red square $S_R$ containing the whole point set $P$. By the algorithm description, there is a blue square $S_B$ with side centers at the corners of $S_R$. This blue square is subsequently divided into four squares that satisfy the invariant. For the induction step, we assume that the statement holds for an arbitrary red square $S$. It is straightforward to verify that the statement holds for each of the four subsquares of $S$. See \Cref{fig:fractal} for an illustration.
\end{proof}

The following two claims are used in the subsequent stretch analysis.
\begin{claim}\label{clm:path}
Let $S$ be a recursion square centered at point $o$ such that $|P \cap S| \ge 2$ and let $p \in P \cap S$. Let $S'$ be a recursion square considered in one of the recursive calls in \textsc{Tree}$(S,P \cap S)$ such that $o \in S'$ and $p \in S'$. Then, the distance in $T$ between $o$ in $p$ is at most $d_T(o, p) \le \diam(S')$, where $T$ is the tree returned by \textsc{Tree}$(S,P \cap S)$.
\end{claim}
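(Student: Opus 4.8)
\textbf{Proof plan for \Cref{clm:path}.}
The plan is to proceed by induction on the number of recursion levels between $S$ and $S'$. The key point is that the tree $T$ returned by \textsc{Tree}$(S, P\cap S)$ connects $o$ directly, by a single edge, to the center of whichever of the four subsquares $A,B,C,D$ contains $p$; and since $o$ itself lies on the common boundary of all four subsquares, if $S'$ is required to contain both $o$ and $p$, then $S'$ must be one of these four subsquares (or a descendant of the one containing $p$) — so the recursion naturally ``follows'' the path from $o$ down toward $p$.

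The base case is when $S' = S$: here $o,p \in S$ trivially and $d_T(o,p)$ could be as large as... wait, this is not quite right, so let me reconsider. Actually the cleaner base case is when $S'$ is one of the four immediate subsquares of $S$, say the one centered at $a$. Then $o$ is connected by a single edge to $a$, and $a$ is the center of $S'$, so $d_T(o,a) \le \diam(S')/2 \cdot \sqrt{2}$... hmm. Let me restructure: I will instead prove by induction on the depth of $S'$ inside $S$ the slightly different bound $d_T(o,p) \le \diam(S')$, using the observation that when we descend one level (from a square $S''$ centered at $o''$ to its subsquare $S'''$ centered at $o'''$), we have $o'' \in S'''$ only if $o''$ is a corner of $S'''$; the edge $(o'', o''')$ has length $\diam(S''')/2$ (half-diagonal), and then one recursively bounds $d_T(o''', p)$ inside $S'''$. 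The geometric series $\sum_{i\ge 0} \diam(S')/2^{i+1} \cdot (\text{constant})$ must be made to telescope to at most $\diam(S')$; concretely, the edge from $o''$ to $o'''$ has Euclidean length equal to half the diagonal of $S'''$, i.e.\ $\diam(S''')/2 = \diam(S')/2$ when $S'''=S'$ is the final square, and the diameters double as we go up, so the total path length from $o$ down to the center of $S'$ is $\sum_{j} \diam(S')/2^{j}$ over the appropriate range, which is at most $\diam(S')$. Then from the center of $S'$ to $p$ we use that $p$ is the unique point in $S'$ and lies within it, costing at most another half-diagonal — so one needs to be slightly careful and perhaps prove instead a bound like $d_T(o, \cdot) \le \diam(S')$ where the final landing is at $p$ directly (in the base case $S'$ contains only $p$ besides $o$, so the edge to $p$ has length $\le \diam(S')$ and there is no further descent needed).

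Concretely I would set it up as follows. Induct on $k$, the number of recursive calls needed to reach $S'$ from $S$. For $k=1$: $S'$ is an immediate subsquare of $S$; since $o \in S'$, $o$ is a corner of $S'$; the edge $(o, o')$ where $o'$ is the center of $S'$ has length $\diam(S')/2$; and either $S'$ has a single point (which must be $p$, and the edge from $o'$ to $p$ is inside $S'$ of length $\le \diam(S')$, but combined we'd exceed $\diam(S')$ — so actually I want the inductive hypothesis applied to the sub-call inside $S'$). The right statement to carry inductively is: letting $r$ be the root (center) of the tree returned by \textsc{Tree}$(S', P\cap S')$, if $p\in P\cap S'$ then $d_T(r,p)\le \diam(S')$ (this is essentially the diameter bound for a single quadtree, provable since the edge lengths along the root-to-leaf path form a geometric series $\diam(S')/2 + \diam(S')/4 + \cdots < \diam(S')$). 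Given that, the main claim follows: the path from $o$ goes $o \to (\text{centers down to } o', \text{the center of } S') \to p$; the initial segment from $o$ down to $o'$ consists of edges of lengths forming a partial geometric series bounded by $\diam(S')$, but this would double-count — so the cleanest route is to prove directly that \emph{the entire} path from $o$ to $p$ within $T$ has length at most $\sum_{i\ge 1}\diam(S')\cdot 2^{-i} \cdot(\text{something})$, which telescopes. The main obstacle is precisely bookkeeping this telescoping sum so that the constant works out to exactly $1$ (i.e.\ the bound is $\diam(S')$, not $2\diam(S')$); I expect this hinges on the fact that $o$ lies on the \emph{boundary} (corner) of every square in the descent chain $S \supset \cdots \supset S'$, so the half-diagonal edges we traverse have rapidly shrinking lengths $\diam(S')/2^{j}$ and sum to less than $\diam(S')$, with the last hop to $p$ absorbed because $p\in S'$ with no deeper descent in the relevant regime. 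I would verify the edge-length geometric series carefully and handle the $o\in P$ tie-breaking corner case (noted in the algorithm comments) by observing it only removes a point and cannot lengthen the path.
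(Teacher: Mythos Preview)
Your difficulty stems from reading the claim with $S'$ at arbitrary depth below $S$. Under that reading the bound $d_T(o,p)\le\diam(S')$ is actually \emph{false}: take $S=[0,4]^2$, $o=(2,2)$, $p=(2.9,2.1)$, and $S'=[2,3]^2$; both $o$ and $p$ lie in $S'$, yet the unique tree path begins $o\to(3,3)\to(2.5,2.5)$, which already has length $\sqrt2+\tfrac{\sqrt2}{2}>\sqrt2=\diam(S')$. This is precisely why your telescoping sum refuses to close up with constant $1$ --- it cannot, and your instinct that ``the main obstacle is precisely bookkeeping this telescoping sum'' is detecting a real obstruction, not a missing trick.

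The paper takes $S'$ to be one of the four \emph{immediate} subsquares $A,B,C,D$ of $S$; this is signalled by the opening line of its proof, which observes that $(o,o')$ is a single tree edge of length exactly $\diam(S')/2$ (true only for a direct child cell). With that reading there is no separate ``$o$ down to $o'$'' segment to bound, and the argument collapses to exactly the induction you yourself isolate as ``the right statement to carry inductively'': show $d_{T_{S'}}(o',p)\le \diam(S')/2$ for the subtree rooted at the center $o'$ of $S'$ (base case when $S'$ is a leaf cell so there is a direct edge $o'\to p$ of length at most $\diam(S')/2$; induction step by descending one more level into the subsquare of $S'$ containing $p$ and applying the hypothesis there), and then add the single edge $(o,o')$ of length $\diam(S')/2$. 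Your geometric-series picture is just the unrolled form of this recursion. So the core inductive idea in your plan matches the paper's proof; the meandering in your writeup comes entirely from over-reading the depth of $S'$.
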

\begin{proof}
First observe that by construction there is an edge between $o$ and the center of $S'$, denoted by $o'$, with length $\norm{o - o'} = \diam(S')/2$.
We prove the claim by induction. For the base case we assume that $S'$ corresponds to a base case of the construction. Then, there is a direct edge between $o'$ and $p$ of length at most $\diam(S')/2$. We have $d_T(o,p) = \norm{o-o'} + \norm{o' - p} \le \diam(S')$. We proceed to prove the induction step. By construction, the first edge on the path is between $o$ and $o'$. The rest of the path is fully contained in one of the four subsquares of $S'$ and has length at most $\diam(S')/2$ by the induction hypothesis. Hence, the total length of the path is at most $\diam(S')$, as required.
\end{proof}

\begin{claim}\label{clm:dist-red}
Let $S$ be a recursion square centered at point $o=[x_o,y_o]$ and let $T$ be the tree obtained by invoking \textsc{Tree}$(S,P \cap S)$. Let $p \in P\cap S$ be an arbitrary point.
If $\norm{o-p} \le \delta/16$ then $d_T(o,p) \le \delta/8$; otherwise $d_T(o,p)\le 2\sqrt{2}\cdot \norm{o-p}$.
\end{claim}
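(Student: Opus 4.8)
The plan is to reduce everything to \Cref{clm:path}. Fix the recursion square $S$ centered at $o$, the tree $T = \textsc{Tree}(S, P\cap S)$, and the point $p \in P\cap S$. Starting from $S$, descend the quadtree by repeatedly replacing the current square with the subsquare that contains $p$ (and, whenever $o$ still lies on the boundary or interior of that subsquare, the subsquare containing $o$ as well), until we reach the smallest recursion square $S'$ that still contains \emph{both} $o$ and $p$ — or until the recursion bottoms out at a base-case square. By \Cref{clm:path} applied to this $S'$, we have $d_T(o,p) \le \diam(S')$, so the whole task becomes: bound $\diam(S')$ in terms of $\norm{o-p}$.

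For the bound, I would split into the two regimes exactly as in the statement. If the recursion already bottomed out, then $\diam(S') \le \delta/8$ by the base-case test in \Cref{alg:construct}, which directly gives $d_T(o,p) \le \delta/8$; and this situation is the only way $p$ and $o$ fail to be separated before reaching scale $\delta/8$, so it is consistent with the hypothesis $\norm{o-p}\le\delta/16$ (note $\norm{o-p}\le \diam(S')/\sqrt2$ can be as small as $0$, so we must allow it). Otherwise $S'$ is a genuine (non-base) recursion square, so its child squares are defined, and by minimality of $S'$ the points $o$ and $p$ lie in \emph{different} child squares of $S'$ — unless $o$ is a corner/side-center of $S'$, which I will treat as a boundary case. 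When $o$ and $p$ are in different child subsquares of $S'$ (side length $\diam(S')/(2\sqrt2)$ each), their Euclidean distance is at least the separation between two adjacent closed subsquares sharing only an edge, which is $0$ in the worst case; so a naive "different subsquares" argument is too weak. The right way is instead: let $S'$ be the last square in the descent whose \emph{side length} exceeds $4\sqrt2\,\norm{o-p}$ fails — i.e. stop at the first square $S''$ in the descent with side length $s(S'') \le 4\norm{o-p}$ (equivalently $\diam(S'') \le 4\sqrt2\,\norm{o-p}$), provided we have not yet hit a base case. Since $p$ always stays inside the current square and $o$ is the center of $S$ (hence inside $S$), and the side length halves at each step, such a first square $S''$ exists once $4\norm{o-p}$ is at least the base-scale, and then $d_T(o,p)\le \diam(S'')\le 4\sqrt2\,\norm{o-p}$. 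Hmm — that gives constant $4\sqrt2$, not $2\sqrt2$; to get the sharper $2\sqrt2$ I would instead argue that at the step just before $S''$, the square had side length $> 4\norm{o-p}$, so $o$ and $p$ both lying in it together with $p$ staying in a sub-cell forces the next square down to still contain $p$ at distance $\le \diam/2$ of its center, and chaining \Cref{clm:path} from there gives $d_T(o,p)\le \norm{o-o'} + d_{T}(o',p)$ with $o'$ the center of the child containing $p$, where $\norm{o-o'} = \diam(S')/2$ and the tail is $\le \diam(\text{child}) = \diam(S')/(2\sqrt2)$; summing and using $\diam(S') \le 2\sqrt2\,\norm{o-p}\cdot(\text{geometric series})$ yields the factor $2\sqrt2$. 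For the threshold regime $\norm{o-p}\le\delta/16$: then $4\norm{o-p}\le \delta/4$, and the descent reaches a base-case square (side $\le \delta/8$, diameter $\le \delta\sqrt2/8$) before the "side $\le 4\norm{o-p}$" rule would trigger usefully, so \Cref{clm:path} at that base case gives $d_T(o,p)\le \delta\sqrt2/8 \le \delta/8$ after re-checking the constant — here I should be careful: the base-case guard is $\diam(S)\le \delta/8$, so actually $d_T(o,p)\le\delta/8$ directly, with no $\sqrt2$ loss, which is what we want.

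I expect the main obstacle to be the \textbf{boundary/tie-breaking cases}: the point $o$ can lie exactly on a side or at a corner of a subsquare, and the algorithm breaks ties arbitrarily, so "$o$ and $p$ in different child squares of $S'$" must be phrased as "$o$ is on the closure of a child square not containing $p$", and the geometric estimate $\norm{o-o'}=\diam(S')/2$ used in \Cref{clm:path} must be re-examined when $o'$ is the center of a child square whose closure merely touches $o$. I would handle this by noting that even in the worst corner case $\norm{o-o'}\le \diam(S')/2 \le \sqrt2\,\norm{o-p}$ once $S'$ is chosen as the smallest square containing both points with side exceeding $4\norm{o-p}$, and then invoking \Cref{clm:path} verbatim for the subtree rooted at $o'$. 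A secondary nuisance is verifying that the geometric series of diameters down the descent sums to the clean constant $2\sqrt2$ rather than something slightly larger; this is the one routine computation I would actually carry out carefully, since the whole point of the claim is the exact constant feeding into the $\sqrt{26}$ analysis.
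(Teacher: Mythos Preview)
Your proposal has a genuine gap: you miss the one geometric observation that makes the ``different child subsquares'' argument work with the sharp constant $2\sqrt{2}$, and instead you wander into stopping rules and geometric series that never quite land.

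The point you overlooked is that $o$ is \emph{always a corner} of $S'$, not a generic interior point. Since $o$ is the center of the top square $S$, it is the common corner of all four children of $S$; and inductively, a corner of a square is a corner of exactly one of its four children. Hence every square in the descent that still contains $o$ has $o$ at a corner. What you flagged as a ``boundary case'' to be handled separately is in fact the \emph{only} case.

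With this in hand, the argument you dismissed as ``too weak'' is actually the whole proof. Place $o$ at the origin so that $S' = [0,L']^2$; the child of $S'$ containing $o$ is $[0,L'/2]^2$. If $p$ is not in that child, then $\max(p_x,p_y) > L'/2$, hence $\norm{o-p} \ge L'/2$, i.e.\ $\diam(S') = L'\sqrt{2} \le 2\sqrt{2}\,\norm{o-p}$. Combined with \Cref{clm:path}, this gives $d_T(o,p)\le 2\sqrt{2}\,\norm{o-p}$ directly---no chaining, no geometric series, and no tie-breaking subtlety (the tie-breaking in \Cref{alg:construct} concerns points of $P$, not the Steiner center $o$). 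This is exactly the paper's argument. Your alternative stopping rule based on side length $\le 4\norm{o-p}$ is both unnecessary and, as you noticed, off by a factor of $2$.
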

\begin{proof}
Let $S'$ be the lowest-level recursion square considered in one of the recursive calls in \textsc{Tree}$(S,P \cap S)$.
If $\norm{o-p} \le \delta/16$, it is readily verified that $S'$ corresponds to a base case of the recursion,
hence $d_T(o,p)  \le \diam(S') \le \delta/8$, and we are done.
Otherwise, $\norm{o-p} > \delta/16$.
If $S'$ corresponds to a base case of the recursion,
then $d_T(o,p) \le \diam(S') \le \delta/8 < 2\norm{o-p}$.
Otherwise, $\norm{o-p} \ge \frac{\diam(S')}{2\sqrt{2}}$ since there is a smaller square with diameter $\diam(S')/2$, containing $o$ but not $p$.
By \Cref{clm:path}, $d_T(o,p) \le \diam(S') \le 2\sqrt{2}\cdot \norm{o-p}$.
\end{proof}

The following is the key technical lemma in our stretch analysis. In the subsequent proofs, we consider a recursion square $S$ that is partitioned into 4 subsquares and 8 triangles as shown in \Cref{fig:regions}. 
\begin{figure}[ht]
\begin{centering}
\includegraphics[width=\textwidth]{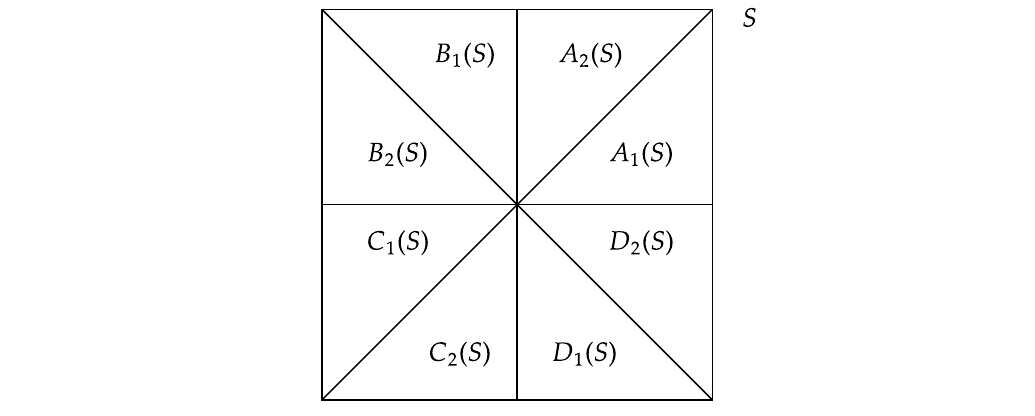}
\end{centering}
\caption{A recursion square $S$ partitioned into 8 triangles.}
\label{fig:regions}
\end{figure}

\begin{lemma}\label{lem:cones}
Let $S$ be a recursion square and let $T$ be the corresponding tree. The stretch in $T$ between any two points belonging to non-consecutive triangles of $S$ is at most $8$.
\end{lemma}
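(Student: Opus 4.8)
The plan is to fix a recursion square $S$ centered at $o$, partitioned into four subsquares $A,B,C,D$ and eight triangles as in \Cref{fig:regions}, and to bound $d_T(x,y)$ from above for $x,y$ lying in non-consecutive triangles, where $T$ is the tree returned by \textsc{Tree}$(S,P\cap S)$. The natural route is to relate $d_T(x,y)$ to $d_T(x,o)+d_T(o,y)$ (or, when $x,y$ lie in the same subsquare, to stay inside it): since every subtree rooted at a subsquare center hangs off $o$, any tree path between two points in different subsquares passes through $o$, so $d_T(x,y)\le d_T(x,o)+d_T(o,y)$. Then I would invoke \Cref{clm:dist-red} on each half of the path. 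The key point is that \Cref{clm:dist-red} gives $d_T(o,p)\le 2\sqrt 2\,\norm{o-p}$ once $\norm{o-p}$ is not tiny, and $d_T(o,p)\le\delta/8$ otherwise; so modulo the small-distance case, $d_T(x,y)\le 2\sqrt 2(\norm{o-x}+\norm{o-y})$.

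The heart of the argument is therefore the purely geometric claim that for $x,y$ in non-consecutive triangles, $\norm{o-x}+\norm{o-y}\le c\cdot\norm{x-y}$ for the right constant $c$, so that $2\sqrt 2\,c\le 8$, i.e. $c\le\sqrt 2$. I would prove this by a case analysis on which pair of non-consecutive triangles $x$ and $y$ occupy. Up to the symmetries of the square (rotations by $\pi/2$ and reflections), the non-consecutive configurations reduce to a small number of cases: triangles separated by one triangle on each side (roughly ``$90^\circ$ apart''), and triangles directly opposite (``$180^\circ$ apart''). In each case the angle $\angle xoy$ is bounded below by some $\theta_0$ (at least $\pi/2$ for the closest non-consecutive pair, after accounting for the triangle geometry), and by the law of cosines $\norm{x-y}^2=\norm{o-x}^2+\norm{o-y}^2-2\norm{o-x}\norm{o-y}\cos\angle xoy\ge \norm{o-x}^2+\norm{o-y}^2$ when $\angle xoy\ge\pi/2$, which already gives $\norm{o-x}+\norm{o-y}\le\sqrt 2\,\norm{x-y}$ by the QM--AM inequality. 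So the crux is to verify that non-consecutive triangles are always at angular separation at least $\pi/2$ as seen from $o$ — this is exactly what ``skipping a triangle'' buys us, since each triangle subtends $\pi/4$ at $o$ but the two triangles meeting at each diagonal/axis share only a ray through $o$, and one must be slightly careful about points on the shared boundary rays and about points very close to $o$ itself.

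The small-distance case has to be folded in separately: if, say, $\norm{o-x}\le\delta/16$ then \Cref{clm:dist-red} only yields $d_T(o,x)\le\delta/8$ rather than a bound proportional to $\norm{o-x}$. Here I would use that $x,y$ are distinct input points, so $\norm{x-y}\ge\delta$, and that when $x$ is within $\delta/16$ of $o$ while $y$ is in a non-consecutive triangle, $\norm{o-y}\ge\norm{x-y}-\norm{o-x}\ge\delta-\delta/16$ is comparable to $\norm{x-y}$; combining $d_T(o,x)\le\delta/8\le\norm{x-y}/8$ with $d_T(o,y)\le 2\sqrt2\,\norm{o-y}\le 2\sqrt2\,\norm{x-y}$ comfortably keeps the total under $8\norm{x-y}$ (and the sub-case where both are within $\delta/16$ of $o$ is impossible since then $\norm{x-y}<\delta/8<\delta$).

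I expect the main obstacle to be the geometric case analysis of the angle $\angle xoy$ — in particular handling boundary rays correctly and confirming that the worst case over all non-consecutive triangle pairs still gives angular separation at least $\pi/2$ (equivalently $\cos\angle xoy\le 0$), so that the clean QM--AM bound applies and no configuration forces $c>\sqrt2$. If some non-consecutive pair turned out to allow a slightly acute $\angle xoy$, one would instead need the sharper bound $\norm{o-x}+\norm{o-y}\le\norm{x-y}/\sin(\theta_0/2)$ or similar and check the resulting constant against $8$; but given that the paper only claims the weaker bound of $8$ here (to be tightened to $\sqrt{26}$ later), I anticipate the straightforward $\angle xoy\ge\pi/2$ bound suffices, and the rest is bookkeeping over the finitely many cases.
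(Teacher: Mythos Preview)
Your overall architecture matches the paper's proof exactly: route through $o$, apply \Cref{clm:dist-red} to each leg, handle the tiny-distance case separately, and reduce the main case to bounding $(\norm{o-x}+\norm{o-y})/\norm{x-y}$ via the angle $\angle xoy$. The small-distance case is handled essentially as in the paper.

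However, your central geometric claim is wrong: non-consecutive triangles do \emph{not} guarantee $\angle xoy\ge\pi/2$. Each of the eight triangles subtends exactly $\pi/4$ at $o$; skipping one triangle leaves a gap of width $\pi/4$, so the minimum possible $\angle xoy$ is $\pi/4$, not $\pi/2$. Concretely, if $x$ sits near the boundary ray of its triangle closest to the skipped one, and $y$ does the same on the other side, then $\angle xoy$ can be arbitrarily close to $\pi/4$. So the clean QM--AM step ($\cos\angle xoy\le 0\Rightarrow \norm{o-x}+\norm{o-y}\le\sqrt{2}\,\norm{x-y}$) is unavailable, and your target $c\le\sqrt{2}$ fails --- the true worst case gives $c=1/\sin(\pi/8)=2/\sqrt{2-\sqrt{2}}\approx 2.61$.

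Fortunately your own fallback rescues the argument: with $\theta_0=\pi/4$ (not ``slightly acute'' but genuinely $45^\circ$), the bound $\norm{o-x}+\norm{o-y}\le \norm{x-y}/\sin(\theta_0/2)$ yields $2\sqrt{2}\cdot 2/\sqrt{2-\sqrt{2}}=4\sqrt{2+\sqrt{2}}\approx 7.39<8$, which is precisely the constant the paper obtains. So the fix is simply to replace the $\pi/2$ claim by $\pi/4$ throughout and carry out the law-of-cosines optimization you sketched as a contingency; the paper does this by writing $\beta=\gamma\alpha$ and maximizing over $\gamma$.
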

\begin{proof}
Let $a$ and $b$ be an arbitrary pair of points in any two non-consecutive triangles of $S$ and write $d = \norm{a-b}$. Let $o$ denote the center of $S$.

We first consider the case where $\norm{o-a} \le \delta/16$ or $\norm{o-b} \le \delta/16$. Note that it cannot hold that both $\norm{o-a}$ and $\norm{o-b}$ are at most $\delta/16$, since by triangle inequality that would yield $\norm{a-b} \le \delta/8$, which is a contradiction to the definition of $\delta$. Without loss of generality assume that $\norm{o-a} \le \delta/16$;
the case where $\norm{o-b} \le \delta/16$ is symmetric. 
By construction, the distance between $a$ and $b$ in $T$ can be upper bounded as follows. 
\begin{equation} \label{dtab}
    d_T(a,b) ~=~ d_T(o,a) + d_T(o,b) ~\le~ \frac{\delta}{8} + 2\sqrt{2}\cdot\norm{o-b},
\end{equation}
where the last inequality follows from \Cref{clm:dist-red}. We know that $\norm{o-b} \ge \norm{a-b} - \norm{o-a} \ge \delta - \delta/16 = \frac{15\delta}{16}$ by the triangle inequality and the definition of $\delta$. Applying the triangle inequality again, we have the following lower bound on $d$.
\begin{equation} \label{lowerd}
    d ~=~ \norm{a-b} ~\ge~ \norm{o-b} - \norm{o-a} ~\ge~ \norm{o-b} - \delta/16
    \end{equation}
\Cref{dtab} and \Cref{lowerd} readily give us the upper bound on the stretch. 
\begin{align}\label{eq:derivative}
\frac{d_T(a,b)}{d} \le  \frac{\frac{\delta}{8} + 2\sqrt{2}\cdot \norm{o-b}}{\norm{o-b}- \frac{\delta}{16}} 
\end{align}
Consider the function $f(x) = \frac{\frac{\delta}{8} + 2\sqrt{2}x}{x - \frac{\delta}{16}}$ and observe that it is monotonically decreasing for all $x > \frac{\delta}{16}$, since $f'(x) = -\frac{32(1 + \sqrt{2})\delta}{(16x - \delta)^2} < 0$. Since $\norm{o-b} \ge \frac{15\delta}{16}$, we have $f(\norm{o-b}) \le f\left(\frac{15\delta}{16}\right)$. This allows us to upper bound the right-hand side in \Cref{eq:derivative}.
\begin{align*}
f(\norm{o-b}) \le f\left(\frac{15\delta}{16}\right) \le \frac{\frac{\delta}{8}+2\sqrt{2}\cdot \frac{15\delta}{16}}{\frac{15\delta}{16}-\frac{\delta}{16}} < 4
\end{align*}

In conclusion, when $\norm{o-a} \le \delta/16$, the stretch is less than 4.

We proceed to handle the case where $\norm{o-a} > \delta/16$ and $\norm{o-b} > \delta/16$. Let $\alpha \coloneqq \norm{o-a}$ and $\beta \coloneqq \norm{o-b}$. 
By \Cref{clm:dist-red},
$d_T(a,b) = d_T(o,a) + d_T(o,b) \le 2\sqrt{2}\cdot(\alpha+\beta)$.
Denote by $\phi$ the angle between $oa$ and $ob$ and note that $\phi \ge \pi/4$ since $a$ and $b$ are in non-consecutive triangles of $S$.
By the law of cosines, it follows that $d = \sqrt{\alpha^2+\beta^2-2\alpha\beta \cos \phi} \ge \sqrt{\alpha^2+\beta^2-\alpha\beta\sqrt{2}}$. 
Writing $\beta = \gamma\cdot \alpha$, we obtain the following upper bound on the stretch.
\begin{align*}
\frac{d_T(a,b)}{d} &\le \frac{2\sqrt{2}\cdot(\alpha+\beta)}{\sqrt{\alpha^2+\beta^2-\alpha\beta\sqrt{2}}} = \frac{2\sqrt{2}\cdot (1+\gamma)}{\sqrt{1+\gamma^2-\gamma\sqrt{2}}} 
\le 2\sqrt{2}\cdot\sqrt{1 + \frac{2+\sqrt{2}}{\gamma + \frac{1}{\gamma}- \sqrt{2}}}\\
&\leq 2\sqrt{2}\cdot\sqrt{1 + \frac{2+\sqrt{2}}{2-\sqrt{2}}} = 4\sqrt{2 + \sqrt{2}} < 8
\end{align*}
\end{proof}

\begin{corollary}\label{cor:diagonal}
Let $S$ be a recursion square and let $T$ be the corresponding tree. The stretch in $T$ between any two points  
belonging to consecutive triangles and the same subsquare  of $S$ is at most $8$.
\end{corollary}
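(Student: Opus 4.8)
The plan is to prove the corollary by induction on the recursion depth of $S$ (equivalently, on the height of the subtree $T$), reducing a pair of points that lie in the same subsquare of $S$ to a pair that, one level down, is either already covered by \Cref{lem:cones} or is once again of the exact form treated here.

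I would first set up the reduction. Among the eight triangles of $S$, the four subsquares are exactly the four pairs of \emph{consecutive} triangles whose union is a subsquare of $S$; the remaining four pairs of consecutive triangles straddle two subsquares and are thus not the subject of this corollary (these are handled later by the other tree). Hence the hypothesis means that $a$ and $b$ lie in the two distinct triangles that comprise a single subsquare $A$ of $S$, so in particular $a,b\in A$. Let $T_A$ be the tree returned by \textsc{Tree}$(A,P\cap A)$; it is a subtree of $T$ attached to the rest of $T$ only through the edge from $o$ to the center $c$ of $A$, so the unique path in $T$ between $a$ and $b$ stays inside $T_A$, whence $d_T(a,b)=d_{T_A}(a,b)$, and it suffices to bound the stretch within $T_A$. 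If $\diam(A)\le\delta/8$ then $|P\cap A|\le 1$ and the statement is vacuous, which is the base of the induction; one also notes $a,b\neq o$ since the algorithm never assigns $o$ to a subsquare, while a point coinciding with the center $c$ is dealt with directly via \Cref{clm:dist-red} (stretch $2\sqrt2$), so I may assume $a,b\notin\{o,c\}$.

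The heart of the argument is a small geometric observation about $A$'s own partition into four subsquares and eight triangles (with apex $c$): the common edge of the two triangles of $S$ that lie inside $A$ is the portion, inside $A$, of a diagonal of $S$, and this portion runs from the corner $o$ of $A$ through $c$ to the opposite corner of $A$, so it is precisely one of the two diagonals of $A$. Consequently $a$ and $b$ lie on opposite sides of a diagonal of $A$, hence in two \emph{distinct} triangles of $A$, and I split into two cases. If these two triangles of $A$ are non-consecutive, then \Cref{lem:cones} applied to the recursion square $A$ gives $d_{T_A}(a,b)\le 8\|a-b\|$ directly. If they are consecutive, then, since $a$ and $b$ are separated by that particular diagonal of $A$, the pair must be one of the only two consecutive pairs of $A$-triangles that abut that diagonal; an inspection of the eight-triangle picture shows that each such pair fuses into a subsquare of $A$, namely one of the two subsquares of $A$ whose interior the diagonal cuts. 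So $a$ and $b$ lie in two consecutive triangles of $A$ forming a common subsquare of $A$, which is exactly the hypothesis of the corollary for the recursion square $A$, and the induction hypothesis gives $d_{T_A}(a,b)\le 8\|a-b\|$. In both cases $d_T(a,b)=d_{T_A}(a,b)\le 8\|a-b\|$, completing the induction.

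The only real obstacle is the geometric bookkeeping behind the two cases: verifying that the diagonal of $S$ separating the two triangles inside $A$ is a diagonal of $A$, and that the two ``across-the-diagonal'' consecutive pairs of $A$-triangles are exactly the two subsquares of $A$ that this diagonal passes through. Both become routine once coordinates are fixed (say $S=[-1,1]^2$, $A=[-1,0]\times[0,1]$, $c=(-\tfrac12,\tfrac12)$, so the relevant diagonal of $S$ restricted to $A$ is the segment from $(0,0)$ to $(-1,1)$), but this is precisely what keeps the recursion from leaking into the uncovered ``consecutive triangles, different subsquares'' case; everything else is \Cref{lem:cones} together with the confinement of the path between $a$ and $b$ to the subtree $T_A$.
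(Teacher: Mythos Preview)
Your proof is correct, and the core geometric observation---that the portion of the diagonal of $S$ lying in the subsquare $A$ is itself a diagonal of $A$, so the only consecutive pairs of $A$-triangles separated by it are the two ``same-subsquare'' pairs along that diagonal---is exactly what drives the paper's argument as well.

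The packaging differs slightly. Rather than descending one level and invoking induction, the paper jumps directly to the recursion square $S'\subseteq A$ at which $a$ and $b$ first fall into different subsquares, and asserts (with reference to a figure) that $a\in A_1(S')\cup D(S')\cup C_2(S')$ while $b\in A_2(S')\cup B(S')\cup C_1(S')$; since $a,b$ are in different subsquares of $S'$ this rules out the two ``same-subsquare'' consecutive pairs, so $a,b$ are in non-consecutive triangles of $S'$ and \Cref{lem:cones} finishes in one shot. Your induction, unrolled, reaches this same $S'$ and applies \Cref{lem:cones} there; the advantage of your presentation is that the geometric claim is justified explicitly rather than by picture, while the paper's direct version avoids the induction scaffolding and the edge cases around $o$ and $c$.
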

\begin{proof}
There are only four pairs of consecutive triangles that belong to the same subsquare of $S$, namely $\{A_1(S), A_2(S)\}$, $\{B_1(S), B_2(S)\}$, $\{C_1(S), C_2(S)\}$, and  $\{D_1(S), D_2(S)\}$. By symmetry, it suffices to focus on one pair, say $\{A_1(S), A_2(S)\}$. Let $a$ and $b$ be a pair of arbitrary points in $A_1(S)$ and $A_2(S)$, respectively. Let \textsc{Tree}$(S', P\cap S')$ be the call recursively invoked by \textsc{Tree}$(S, P\cap S)$ 
such that $a$ and $b$ are in different subsquares of $S'$. Then, $a \in A_1(S') \cup D(S') \cup C_2(S')$ and $b \in A_2(S') \cup B(S') \cup C_1(S')$. Since $a$ and $b$ are in different subsquares of $S'$ it is impossible that $a \in A_1(S')$ and $b \in A_2(S')$ or $a \in C_2(S')$ and $b \in C_1(S')$. See \Cref{fig:diagonal} for an illustration.
\begin{figure}
\centering
\includegraphics[width=0.78\linewidth]{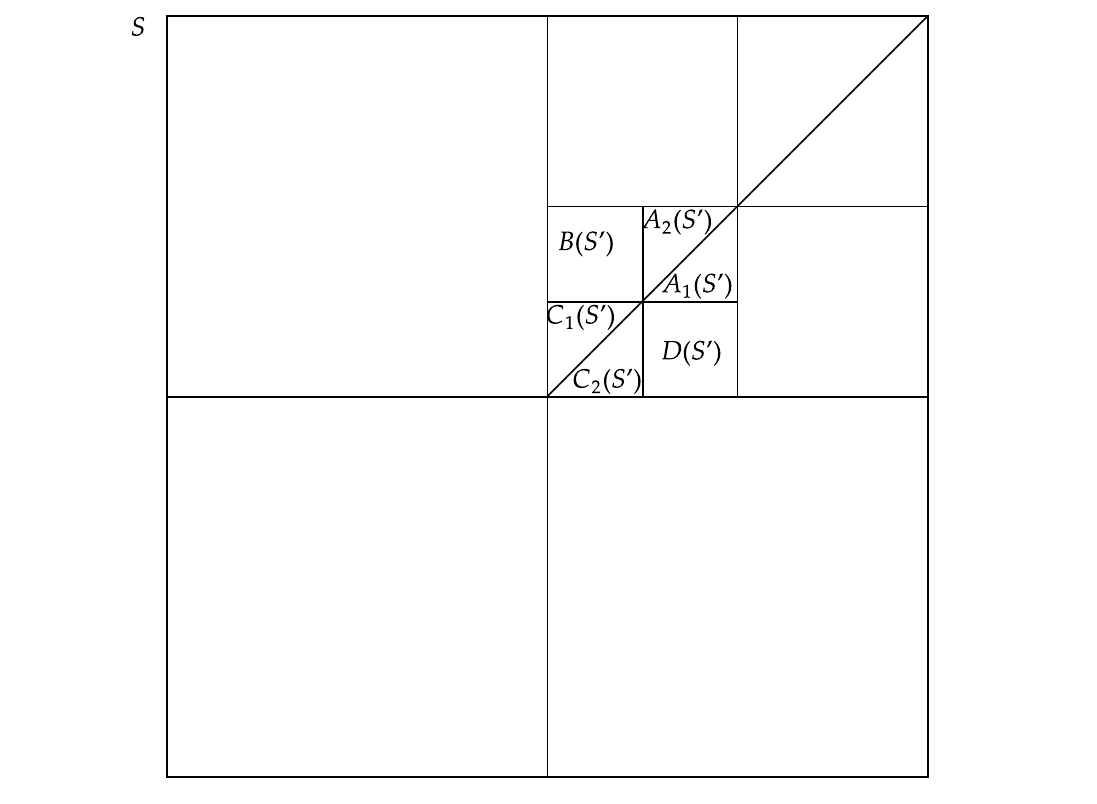}
\caption{An illustration of the proof of \Cref{cor:diagonal}. }
\label{fig:diagonal}
\end{figure}
It follows that $a$ and $b$ belong to non-consecutive triangles of $S'$, hence
the stretch between $a$ and $b$ with respect to $T$ is  at most 8 by \Cref{lem:cones}.
\end{proof}

\begin{lemma}\label{lem:stretch}
The stretch of the tree cover is at most 8. 
\end{lemma}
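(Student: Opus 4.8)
The plan is to prove \Cref{lem:stretch} by reducing the stretch bound for an arbitrary pair of points $a, b \in P$ to one of the three situations already handled: (i) $a, b$ lying in non-consecutive triangles of \emph{some} recursion square — covered by \Cref{lem:cones}; (ii) $a, b$ lying in consecutive triangles but the same subsquare of some recursion square — covered by \Cref{cor:diagonal}; or (iii) the remaining case, where $a, b$ lie in \emph{consecutive triangles of different subsquares} of every recursion square in which they are separated. The first step is to locate, in the red tree, the unique deepest red square $S$ such that $a$ and $b$ fall into two different subsquares of $S$ (such a square exists because $S_R$ separates them and the recursion eventually isolates each point). If with respect to $S$ the pair falls into case (i) or (ii), we are done immediately with stretch at most $8$. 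So the whole content of the lemma is case (iii): $a$ and $b$ lie in consecutive triangles of $S$ that belong to two \emph{adjacent} subsquares (as in the arrow-labeled regions of \Cref{fig:regionsIntro}), so in the red tree their path goes all the way up through $o$ and their stretch can be as bad as $\Theta(\diam(S)/\norm{a-b})$.

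The heart of the argument is to show the \emph{blue} tree handles case (iii). Here I would invoke \Cref{inv:fractal}: the side of $S$ shared by the two adjacent subsquares containing $a$ and $b$ is the diagonal of a blue square $S'$ whose center sits at the midpoint of that side, and $a$ and $b$ both lie inside $S'$ (being close to that shared side — more precisely, the two consecutive triangles straddling the common side of two adjacent red subsquares are contained in $S'$ by the geometry of the $45^\circ$ rotation and $\sqrt 2$ inflation). Now consider the recursion inside the blue tree at $S'$: either $a$ and $b$ already lie in different subsquares of $S'$, in which case I claim they land in non-consecutive triangles of $S'$ (the $45^\circ$ offset converts ``straddling a red subsquare boundary'' into ``separated by at least a quarter-turn in $S'$'') and \Cref{lem:cones} gives stretch $\le 8$; or they lie in the same subsquare of $S'$, and we simply recurse into that subsquare of the blue tree and repeat. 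Since each recursion step strictly shrinks the enclosing blue square while keeping $a,b$ inside it, this descent terminates, and it must terminate by separating $a$ and $b$ into different subsquares (because the recursion in \Cref{alg:construct} bottoms out at squares containing a single point). At the step where separation happens, I need to check that the two points are \emph{not} again in consecutive triangles of different subsquares of the current blue square — i.e., that case (iii) cannot recur within the blue descent — which is exactly the structural payoff of \Cref{inv:fractal}: the specific alignment forces the separating configuration in the blue tree to be of type (i) or (ii).

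The main obstacle I anticipate is the bookkeeping in that last claim — verifying, purely from the geometry of the $45^\circ$-rotated, $\sqrt 2$-scaled nesting, that a pair straddling a red subsquare boundary (the only ``bad'' configuration in the red tree) always becomes a ``good'' configuration (non-consecutive triangles, or consecutive-but-same-subsquare) in the appropriate blue square, and that this good status is inherited correctly as we descend. This is where I would lean hard on \Cref{fig:fractalIntro} and \Cref{fig:fractal}, enumerate the (few, up to symmetry) positions of the bad red boundary relative to the blue square $S'$, and in each case identify which triangles of $S'$ (or of a subsquare of $S'$ reached after finitely many steps) the points occupy. Once that case analysis is pinned down, the lemma follows: every pair is handled either by \Cref{lem:cones}/\Cref{cor:diagonal} directly in the red tree, or, after a finite blue-tree descent justified by \Cref{inv:fractal}, by \Cref{lem:cones}/\Cref{cor:diagonal} in the blue tree, and in all cases the stretch is at most $8$.

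A secondary subtlety is the treatment of pairs that are extremely close relative to $\delta$, i.e., when one of $\norm{o-a}, \norm{o-b}$ is below $\delta/16$ at the relevant scale; but \Cref{clm:dist-red} and the first half of the proof of \Cref{lem:cones} already absorb this case (yielding stretch $< 4$), so it costs nothing extra here beyond remarking that the same reasoning applies verbatim when the separating square is blue rather than red.
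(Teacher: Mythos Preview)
Your approach is correct and is essentially the paper's proof, but you are making it harder than it needs to be. The paper's argument is a single jump, not a descent: once you locate the red square $S_1$ with $a,b$ in different subsquares and reduce (by \Cref{lem:cones}) to the case $a\in A_2(S_1)$, $b\in B_1(S_1)$, \Cref{inv:fractal} hands you the blue square $S_2$ centered at the midpoint of the shared side, and a direct check shows $a\in C_2(S_2)\cup D_1(S_2)$ and $b\in D_2(S_2)\cup A_1(S_2)$. Of the four resulting pairs, three are non-consecutive (so \Cref{lem:cones} applies), and the remaining one, $(D_1(S_2),D_2(S_2))$, is consecutive-same-subsquare, which is exactly the hypothesis of \Cref{cor:diagonal}. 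That's the whole proof: no multi-step blue recursion, no ``case (iii) cannot recur'' argument, and no extra remark about the $\delta/16$ regime (that is already baked into \Cref{lem:cones}). Two small points: your case (ii) is vacuous at the red level because $S$ is by definition the square where $a,b$ first separate into different subsquares; and the ``obstacle'' you anticipate is just the four-case enumeration above, which is immediate once you write down where $A_2(S_1)$ and $B_1(S_1)$ sit inside $S_2$.
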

\begin{proof}
Consider an arbitrary pair of points $a$ and $b$ and let $S_1$ be the red square such that $a$ and $b$ belong to different sub-squares of $S_1$. See \Cref{fig:final} for an illustration. In the case where $a$ and $b$ are in non-consecutive triangles of $S_1$, the stretch between $a$ and $b$ is at most $8$ by \Cref{lem:cones}. 

We henceforth assume that $a$ and $b$ belong to consecutive triangles of $S_1$; without loss of generality, $a \in A_2(S_1), b \in B_1(S_1)$. By \Cref{inv:fractal}, there is a blue square region $S_2$ centered at the middle of the top side of $S_1$. (See \Cref{fig:final} for an illustration.) Because of their placement in the red square, $a$ is either in $C_2(S_2)$ or $D_1(S_2)$ and $b$ is either in 
 $D_2(S_2)$ or $A_1(S_2)$. If $a$ and $b$ are in non-consecutive triangles of $S_2$, then by \Cref{lem:cones}, the stretch is at most $8$. Otherwise, $a\in  D_1(S_2)$ and $b\in D_2(S_2)$, and in particular, they are in consecutive triangles of the same subsquare of $S_2$, hence the stretch between $a$ and $b$ in the blue tree is at most $8$ by \Cref{cor:diagonal}.
\end{proof}

\begin{figure}[ht]
\includegraphics[width=\textwidth]{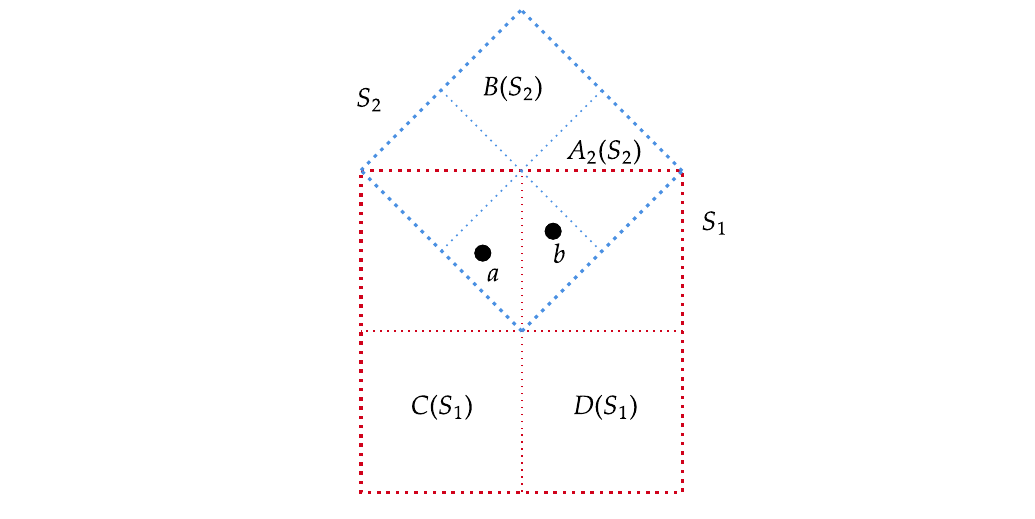}
\caption{An illustration of \Cref{lem:stretch}.}
\label{fig:final}
\end{figure}

The proof of \Cref{lem:stretch} implies the following observation, which is be useful for the applications presented in \Cref{sec:applications}.
\begin{observation}\label{obs:choose-tree}
Consider an arbitrary pair of points $a$ and $b$ and let $S_1$ be the red square such that $a$ and $b$ belong to different sub-squares of $S_1$. If $a$ and $b$ belong to non-consecutive triangles of $S_1$, the stretch between $a$ and $b$ is preserved in the red tree. Otherwise, the stretch between $a$ and $b$ is preserved in the blue tree.
\end{observation}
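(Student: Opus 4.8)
The plan is to observe that the proof of \Cref{lem:stretch} already does all the work: for every pair $a,b$ it locates a specific recursion square and then invokes \Cref{lem:cones} or \Cref{cor:diagonal} on a \emph{single} tree of the cover, and the observation merely records which tree that is. So I would not prove anything new; I would re-walk that argument and read off the tree.

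First I would fix $a,b\in P$ and let $S_1$ be the red square in which $a$ and $b$ land in distinct subsquares (this exists and is unique since the red squares form a laminar family descending to singletons and $a\ne b$). The key preliminary remark is a \emph{locality} statement: since both $a$ and $b$ lie in $S_1$, the $a$--$b$ path in the red tree is entirely contained in the subtree output by \textsc{Tree}$(S_1,P\cap S_1)$ --- this follows from \Cref{lem:correct} together with laminarity --- so $d_{\mathrm{red}}(a,b)=d_T(a,b)$ for the local tree $T$ of $S_1$, and likewise for any blue square considered later. This is exactly what makes \Cref{lem:cones} and \Cref{cor:diagonal}, which are phrased for the local tree of a single recursion square, applicable as black boxes to the global red/blue trees.

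Then I would split into the two cases of the statement. If $a$ and $b$ are in non-consecutive triangles of $S_1$, \Cref{lem:cones} applied to $S_1$ gives $d_T(a,b)\le 8\norm{a-b}$; combined with the locality remark and the trivial lower bound $\norm{a-b}\le d_{\mathrm{red}}(a,b)$ (every edge weight equals the Euclidean length of its endpoints, and a tree path is a concatenation of segments), this says the red tree preserves the stretch of $\{a,b\}$. Otherwise $a$ and $b$ are in consecutive triangles of $S_1$; WLOG $a\in A_2(S_1)$, $b\in B_1(S_1)$, and \Cref{inv:fractal} furnishes a blue square $S_2$ centered at the midpoint of the top side of $S_1$ with $a\in C_2(S_2)\cup D_1(S_2)$ and $b\in D_2(S_2)\cup A_1(S_2)$. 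For every combination other than $a\in D_1(S_2),\,b\in D_2(S_2)$ the two points lie in non-consecutive triangles of $S_2$, so \Cref{lem:cones} gives stretch $\le 8$ in the blue tree; in the remaining combination they lie in consecutive triangles of the same subsquare $D$ of $S_2$, so \Cref{cor:diagonal} gives stretch $\le 8$ in the blue tree. In both sub-cases the blue tree preserves the stretch, which is the second alternative of the observation.

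The only point requiring genuine care --- and hence the one I would spell out most explicitly --- is the locality remark that lets me pass from the local tree of $S_1$ (or $S_2$) to the full red (or blue) tree; once that is in place, the rest is just re-reading the case analysis of \Cref{lem:stretch} and noting that each branch concludes inside one fixed tree of the cover.
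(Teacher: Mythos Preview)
Your proposal is correct and follows the paper's own argument essentially verbatim: the paper simply notes that the observation is implied by the proof of \Cref{lem:stretch}, and you re-walk that proof's case analysis, reading off which tree each branch lands in. Your explicit locality remark (passing from the local tree of a recursion square to the global red/blue tree) is a useful clarification that the paper leaves implicit, but otherwise the approaches coincide.
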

\subsection{A tight stretch analysis of the Steiner construction}\label{sec:tight}
This section is dedicated to improving the upper bound on the stretch of the tree cover construction. It suffices to improve the bound from \Cref{lem:cones}.
\begin{lemma}\label{lem:tight}
Let $S$ be a recursion square and let $T$ be the corresponding tree. The stretch in $T$ between any two points belonging to non-consecutive triangles of $S$ is at most $\sqrt{26}$.
\end{lemma}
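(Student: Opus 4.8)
The plan is to refine the proof of \Cref{lem:cones}, leaving its first case untouched and sharpening only its second case. In the first case there — $\norm{o-a}\le\delta/16$ or $\norm{o-b}\le\delta/16$, where $o$ is the center of $S$ — the argument already yields stretch strictly below $4<\sqrt{26}$, so it can be reused verbatim. Hence the whole problem reduces to the regime $\alpha\coloneqq\norm{o-a}>\delta/16$ and $\beta\coloneqq\norm{o-b}>\delta/16$, with $a,b$ in non-consecutive triangles of $S$, for which \Cref{lem:cones} only produces $4\sqrt{2+\sqrt2}\approx 7.39$.

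The slack in the second case of \Cref{lem:cones} comes from charging the worst-case factor $2\sqrt2$ of \Cref{clm:dist-red} to \emph{both} of $d_T(o,a)$ and $d_T(o,b)$ while simultaneously allowing $\phi\coloneqq\angle aob$ to be as small as $\pi/4$; these two extremes are mutually exclusive. I would make this precise with a direction-aware strengthening of \Cref{clm:dist-red}: $d_T(o,p)$ approaches $2\sqrt2\norm{o-p}$ only when $\vec{op}$ is (nearly) aligned with a side-bisector of $S$, and the factor tends to $1$ as $\vec{op}$ swings toward a diagonal direction of $S$, because the tree-path $o\to p$ then hugs that diagonal. The eight-triangle geometry does the rest: since $a,b$ lie in non-consecutive triangles, their directions are separated by at least one triangle's worth, so if both $\vec{oa}$ and $\vec{ob}$ point (near) side-bisectors then those bisectors are (essentially) orthogonal, $\phi$ is bounded below by (almost) $\pi/2$, $\cos\phi\le 0$, and rerunning the computation of \Cref{lem:cones} with the cosine term deleted bounds the stretch by $\tfrac{2\sqrt2(\alpha+\beta)}{\sqrt{\alpha^2+\beta^2}}\le 4$. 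Consequently the only regime in which $\phi$ can drop to $\pi/4$ is the one where one endpoint — say $a$ — lies along a diagonal direction of $S$ and the other — $b$ — along a side-bisector direction; there we may use $d_T(o,a)\le\alpha$ and $d_T(o,b)\le 2\sqrt2\,\beta$.

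Substituting $d_T(a,b)=d_T(o,a)+d_T(o,b)$ and $\norm{a-b}\ge\sqrt{\alpha^2+\beta^2-2\alpha\beta\cos\phi}\ge\sqrt{\alpha^2+\beta^2-\sqrt2\,\alpha\beta}$ and writing $\gamma=\beta/\alpha$, the stretch is at most
\begin{equation*}
\frac{1+2\sqrt2\,\gamma}{\sqrt{1+\gamma^2-\sqrt2\,\gamma}}.
\end{equation*}
The numerator of the derivative of the square of this function factors as $(1+2\sqrt2\,\gamma)(5\sqrt2-6\gamma)$, so the maximum is attained at $\gamma=\tfrac{5\sqrt2}{6}$, where $1+2\sqrt2\,\gamma=\tfrac{13}{3}$ and $1+\gamma^2-\sqrt2\,\gamma=\tfrac{13}{18}$, giving value $\tfrac{(13/3)^2}{13/18}=26$ and hence stretch at most $\sqrt{26}$. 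What remains is to sweep up the leftover configurations — in particular $\pi/4<\phi<\pi/2$, and the subcase where $a$ sits deep along its diagonal (so $\alpha$ is tiny, $\norm{a-b}\approx\beta$, and $d_T(o,a)$ is only a bounded multiple of $\diam(S)$ rather than of $\alpha$) — and to verify that each stays below $\sqrt{26}$. I expect the genuine difficulty to be the direction-aware distance estimate: quantifying precisely how the factor $2\sqrt2$ decays as $\vec{op}$ rotates from a side-bisector toward a diagonal, and confirming that, combined with the angular constraint, the overall optimum over (direction of $a$, $\gamma$) is achieved at the single extremal configuration above and not at some interior direction — that is exactly what upgrades $4\sqrt{2+\sqrt2}$ to the tight bound $\sqrt{26}$.
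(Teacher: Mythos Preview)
Your high-level plan matches the paper: reuse the first case of \Cref{lem:cones} verbatim, dispose of the sub-case $\phi\ge\pi/2$ (at least two separating triangles) via the same calculation with $\cos\phi\le 0$ to get stretch $\le 4$, and then focus on the case of exactly one separating triangle. Your identification of the extremal configuration ($a$ along a diagonal, $b$ along a side-bisector, $\phi=\pi/4$) is correct, and your optimization over $\gamma=\beta/\alpha$ cleanly produces $\sqrt{26}$ there.

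The genuine gap is the direction-aware estimate itself. The assertion that $d_T(o,a)/\alpha$ ``tends to $1$'' along a diagonal is false: for $a=(t,t)$ with $t$ just above $2^{-k-1}$, the smallest recursion square having $o$ as a corner and containing $a$ has side $2^{-k}\approx 2t$, and the tree path overshoots to the center of $A$ and zig-zags back, so $d_T(o,a)$ is close to $2\alpha$, not $\alpha$. Re-running your optimization with the correct worst-case diagonal bound $d_T(o,a)\le 2\alpha$ gives
\[
\max_{\gamma>0}\ \frac{2+2\sqrt2\,\gamma}{\sqrt{1+\gamma^2-\sqrt2\,\gamma}} \;=\; 2\sqrt{10}\;>\;\sqrt{26}
\]
(attained at $\gamma=3\sqrt2/4$), so a bound on $d_T(o,\cdot)/\norm{o-\cdot}$ as a function of \emph{direction alone} is inherently too coarse to reach $\sqrt{26}$. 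The ratio genuinely depends on the dyadic position of $a$, not just on its angle, and the ``interior'' configurations you plan to sweep up are exactly where the slack sits.

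The paper does not attempt a direction-aware ratio bound at all. After a short case analysis on ``levels'' (reducing to $\level(b)=1$, hence $d_T(o,b)\le\sqrt2/2$ and worst-case $b=(0,\tfrac14)$), it proves a positional constraint of a different shape (\Cref{clm:slope}): if $d_T(o,a)\ge z\sqrt2$ then $a$ lies below the line $y=\tfrac32 x-\tfrac{z}{2}$. This is parameterized by $d_T(o,a)$ rather than by the direction of $a$, and is established by an iterative halving that tracks which subsquare $a$ enters at each level. One then lower-bounds $\norm{a-b}$ by the distance from $b=(0,\tfrac14)$ to that line, namely $(1+2z)/(2\sqrt{13})$, and the stretch becomes $(z\sqrt2+\sqrt2/2)\big/\bigl((1+2z)/(2\sqrt{13})\bigr)=\sqrt{26}$, \emph{identically in $z$}. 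This half-plane lemma is the missing ingredient your outline does not contain and cannot be replaced by a smooth angle-to-ratio interpolation.
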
 
\begin{proof}
From the proof of \Cref{lem:cones}, we infer that it suffices to focus on the case when $\norm{o-a} > \delta/16$ and $\norm{o-b} > \delta/16$. Consider first the case where there are at least two triangles separating the triangles of $a$ and $b$. Then, $d_T(a,b) \le 2\sqrt{2}\cdot(\alpha+\beta)$ and $d = \sqrt{\alpha^2+\beta^2 - 2\alpha\beta\cos{\phi}} \ge \sqrt{\alpha^2+\beta^2}$, because $\pi \ge \phi \ge \pi/2$. The stretch is upper bounded as follows.
\begin{align*}
\frac{d_t(a,b)}{d} \le \frac{2\sqrt{2}\cdot(\alpha+\beta)}{\sqrt{\alpha^2+\beta^2}} \le \frac{2\sqrt{2}\cdot(\alpha+\alpha\gamma)}{\sqrt{\alpha^2(1+\gamma^2)}} = 2\sqrt{2}\cdot\sqrt{1+\frac{2}{\gamma + \frac{1}{\gamma}}} =4
\end{align*}

The rest of the proof is dedicated to the case when there is only one cone between the cone of $a$ and $b$. Without loss of generality suppose that $S$ is centered at $o=(0,0)$ and has side length 1. We consider the case where $a \in A_1(S)$ and $b \in B_1(S)$. The other cases are implied by symmetry. \Cref{clm:path} implies that $d_T(o,b) \le \frac{\sqrt{2}}{2}$.
We define $\level(a)$ as follows. Let $S'$ be the smallest square corresponding to a recursive call of \textsc{Tree} which contains $a$ and has $o$ as a corner. If $S' = A(S)$, then we define $\level(a) \coloneqq 1$. If $S'$ is one of the four subsquares of $A(S)$, then we define $\level(a) \coloneqq 2$, and so on. We define $\level(b)$ analogously. It is sufficient to consider two cases: first when $\level(a) = 1$ and $\level(b) \ge 1$ and second when $\level(b) = 1$ and $\level(a) \ge 1$. The other cases are implied by scaling --- if there is an integer $i$ such that $\level(a) \ge i$ and $\level{b} \ge i$, then we can define $S$ to be the square centered at $o$ and with side length $2^{-i+1}$.
For the first case, consider the regions $X$, $Y$, $Z$, and $W$ within $A_1(S)$ as shown in \Cref{fig:cases}. If $a \in X \cup Y$, then by \Cref{clm:path}, $d_T(o', a) \le \frac{\sqrt{2}}{8}$. We have $d_T(a,b) \le \frac{7\sqrt{2}}{8}$ and $\norm{a-b} \ge \frac{1}{4}$ and the stretch is less than 5. If $a \in Z$, then $d_T(a,b) \le \sqrt{2}$ and $\norm{a-b} \ge \sqrt{\left(\frac{1}{4}\right)^2 + \left(\frac{1}{8}\right)^2} = \frac{\sqrt{5}}{8}$; the stretch is less than $\frac{8\sqrt{2}}{\sqrt{5}} < \sqrt{26}$. Otherwise, since $\level(a) = 1$, $a$ must be in $W$. We have $\norm{a-b} \ge \frac{3}{8}$ and $d_T(a,b) \le \sqrt{2}$; the stretch is less than 4.
\begin{figure}
    \centering
    \includegraphics[width=0.78\linewidth]{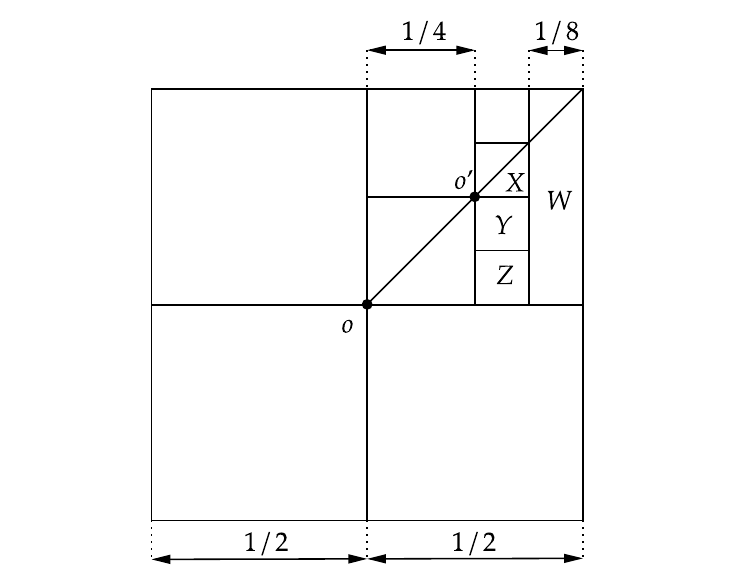}
    \caption{Regions used in the case analysis in \Cref{lem:tight}.}
    \label{fig:cases}
\end{figure}

The rest of the proof is dedicated to the second case, where $\level(b) = 1$ and $\level(a) \ge 1$. We have $d_T(o,b) \le \frac{\sqrt{2}}{2}$. The point $b\in B_1(S)$ closest to $A_1(S)$ with $\level(b) = 1$ has coordinates $b=\left(0, \frac{1}{4} \right)$. To finalize the analysis, we need the following claim.
\begin{claim}\label{clm:slope}
Let $a$ be a point in $A_1(S)$. If $z\sqrt{2} \le  d_T(o,a) $ for $z \in \left[0,\frac{1}{2}\right]$, then $a$ lies below the line $y=\frac{3}{2}x-\frac{z}{2}$.
\end{claim}
\begin{proof}
Consider the following iterative halving.
Initialize $\ell = 0$ and $r=\frac{1}{2}$ and consider the square $S'$ with antipodal corners at $(\ell, \ell)$ and $(r,r)$. Define $m = \frac{\ell+r}{2}$. Note that $(m,m)$ is the center of $S'$. Consider two cases.
\begin{figure}
    \centering
    \includegraphics[width=0.9\linewidth]{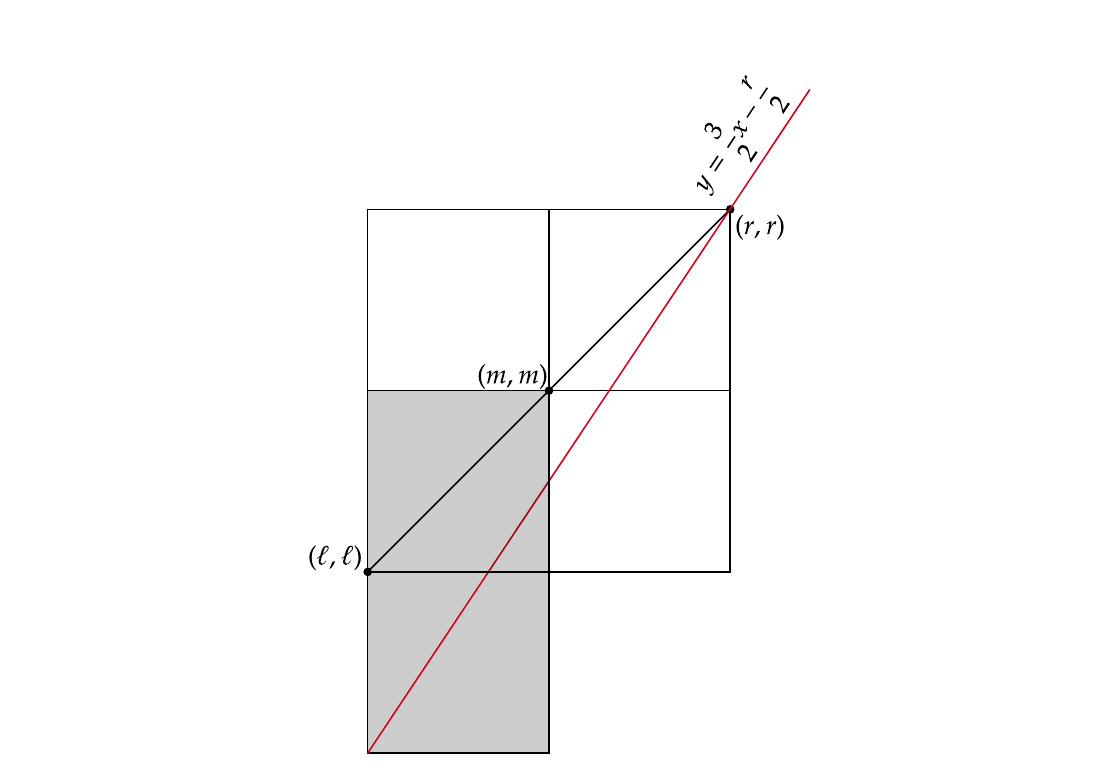}
    \caption{An illustration of the proof of \Cref{clm:slope}.}
    \label{fig:slope}
\end{figure}

\noindent\textit{Case 1. $m \le z$.}
By \Cref{clm:path}, every point $a = (x_a, y_a)$ with $x_a \in [\ell, m]$ and $d_T(o,a) \ge m$ have $y_a \ge \ell - (m - \ell) = 2\ell - m = \frac{3}{2}\ell - \frac{r}{2}$. In particular all such points $a$ lie below the line $\frac{3}{2}x - \frac{r}{2}$. Set $S'$ to be the square with antipodal corners at $(m, m)$ and $(r,r)$ and set $\ell = m$. Proceed with the iterative halving.

\noindent\textit{Case 2. $z< m$.} Set $S'$ to be the square with antipodal corners at $(\ell, \ell)$ and $(m, m)$ and set $r = m$. Proceed with the iterative halving.

The iterative halving ends when $S'$ corresponds to the lowest level of recursive construction. Throughout the halving we maintained $(z,z) \in S'$. Every point $a = (x_a, y_a)$ with $x_a \in [0, \ell]$ and $d_T(o,a) \ge \ell$ lie below $\frac{3}{2}x - \frac{r}{2}$. By construction, there is only a single point in $S'$. This point lies below $\frac{3}{2}x - \frac{z}{2}$ and the claim follows.
\end{proof}
Let $z$ be as in \Cref{clm:slope}. The distance between the line $y=\frac{3}{2}x-\frac{z}{2}$ and $b=\left(0, \frac{1}{4}\right)$ is as follows. 

\begin{align*}
    d = \frac{\abs{-\frac{1}{4}-\frac{z}{2}}}{\sqrt{\left(\frac{3}{2}\right)^2+1}}=\frac{1+2z}{2\sqrt{13}}
\end{align*}
Since $a$ lies below the line $\frac{3}{2}x-\frac{z}{2}$, it must be that $\norm{a-b}\ge d$. The stretch is upper bounded as follows.
\begin{align*}
\frac{d_T(a,b)}{\norm{a-b}} \le \frac{d_T(o,a)+d_T(o,b)}{d} \le \frac{z\sqrt{2}+\frac{\sqrt{2}}{2}}{\frac{1+2z}{2\sqrt{13}}} = \sqrt{26} 
\end{align*}
\end{proof}

\begin{observation}
The stretch of the construction is at least $\sqrt{26}$.
\end{observation}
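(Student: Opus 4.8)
The plan is to promote the tight case in the proof of \Cref{lem:tight} to an honest point set, and then to arrange that this same configuration is (nearly) worst-case for \emph{both} trees at once. Concretely, for every $\eps>0$ I would exhibit a finite $P\subset\mathbb{R}^2$ and a pair $a,b\in P$ such that both the red tree $T_R$ and the blue tree $T_B$ satisfy $d_{T}(a,b)\ge(\sqrt{26}-\eps)\norm{a-b}$. Since the stretch the cover realizes on $(a,b)$ is $\min\{d_{T_R}(a,b),d_{T_B}(a,b)\}/\norm{a-b}$, letting $\eps\to0$ then yields the claim. The reason the pair must be bad in \emph{both} trees is precisely that the cover is allowed to route on whichever tree is better.

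The configuration mirrors the second case ($\level(b)=1$) of the proof of \Cref{lem:tight}. I would fix a recursion square $S$ of side $1$ centered at the origin $o$, taken to be a red square, put $b$ at the level-$1$ quarter point $b=(0,\tfrac14)$ (on the $A_2(S)/B_1(S)$ boundary), and, for a parameter $z\in(0,\tfrac12]$, put $a$ at the foot of the perpendicular dropped from $b$ onto the line $y=\tfrac32x-\tfrac z2$, so that $a\in A_1(S)$ and, by construction, $\norm{a-b}=\tfrac{1+2z}{2\sqrt{13}}$. I would then populate small neighborhoods of $a$ and of $b$ with auxiliary points at minimum interpoint distance $\delta\to0$, placed so that the quadtree subdivides maximally along the two tree-paths $o\leadsto a$ and $o\leadsto b$. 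With such a cloud, the bounds used in \Cref{clm:path} and \Cref{clm:slope} are met with equality in the limit: $d_{T_R}(o,b)\to\tfrac{\sqrt2}{2}$ (the telescoping sum $\tfrac{\sqrt2}{4}\sum_{i\ge0}2^{-i}$ of quadtree edge lengths along the $o$-cornered cells collapsing onto $b$), and $d_{T_R}(o,a)\to z\sqrt2$ (the path from $o$ is forced to zig-zag toward $a$ exactly as in the iterative halving of \Cref{clm:slope}). Substituting into $\tfrac{d_{T_R}(a,b)}{\norm{a-b}}=\tfrac{d_{T_R}(o,a)+d_{T_R}(o,b)}{\norm{a-b}}$ reproduces the identity
\[
\frac{z\sqrt2+\tfrac{\sqrt2}{2}}{\tfrac{1+2z}{2\sqrt{13}}}=\sqrt{26},
\]
so the red-tree stretch of $(a,b)$ tends to $\sqrt{26}$.

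The step I expect to be the main obstacle is making the blue tree equally bad for the \emph{same} pair $(a,b)$, since otherwise $\min\{d_{T_R},d_{T_B}\}$ could stay bounded away from $\sqrt{26}$. Here I would invoke the fractal structure: by \Cref{inv:fractal} (and its red$\leftrightarrow$blue mirror) there is a blue square $S'$ whose position relative to $S$ is a $45^\circ$-rotated, $\sqrt2$-scaled copy of the position of $S_B$ relative to $S_R$. I would therefore choose the scale and location of $S$, and place the auxiliary clusters, so that the \emph{entire} configuration is self-similar under the red$\leftrightarrow$blue similarity: after a $45^\circ$ rotation and a $\sqrt2$ rescaling, $(a,b)$ again occupies an ``$A_1$--$B_1$ with $b$ at the level-$1$ quarter point'' configuration, now with respect to the blue quadtree, and the identical chain of estimates gives $d_{T_B}(a,b)/\norm{a-b}\to\sqrt{26}$. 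Reconciling these two demands---finding one point set that is simultaneously the worst case for the red and for the blue recursion---is the crux, and it is exactly the self-similar (infinitely recursive, or finite with $\eps$ slack) placement of the auxiliary clouds that achieves it. Sending $\eps\to0$ (equivalently, deepening the self-similar construction and $\delta\to0$) completes the argument.
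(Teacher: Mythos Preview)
You have over-read the statement. The paper's proof is three lines: it names the explicit points $a=(3/20,3/20)$ and $b=(0,1/4)$ in a unit recursion square $S$ centered at the origin, asserts $d_T(o,a)=3\sqrt{2}/20$, $d_T(o,b)=\sqrt{2}/2$, $\|a-b\|=\sqrt{13}/20$, and checks that $(13\sqrt{2}/20)/(\sqrt{13}/20)=\sqrt{26}$. That is the whole argument, and it concerns a \emph{single} tree $T$; it is exactly the statement that the bound in \Cref{lem:tight} is sharp. No mention is made of the blue tree, and none is needed.

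Your proposal instead targets the stronger assertion that the \emph{cover} (the minimum over both trees) has stretch $\sqrt{26}$, and you correctly note that this would require a pair that is simultaneously bad in $T_R$ and $T_B$. But that is not what the paper proves here; ``the construction'' in this observation refers to the analysis of a single recursion tree, complementing \Cref{lem:tight}. Consequently, your entire ``crux'' step---engineering a self-similar point set so that the configuration is worst-case for both the red and the blue quadtree at once---is attacking a different and harder problem. You also do not actually carry that step out: the appeal to \Cref{inv:fractal} and a $45^\circ$/$\sqrt{2}$ self-similarity is suggestive, but you never exhibit a point set that realises both worst cases simultaneously, nor argue why the red and blue constraints are compatible.

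Even for the red tree alone, your route is more elaborate than necessary. You introduce a one-parameter family in $z$, auxiliary clouds, and a $\delta\to0$ limit to force the telescoping sums to converge; the paper simply picks the single value $z=3/20$ (equivalently $a=(3/20,3/20)$) and reads off the numbers. Your paragraph does recover the right identity $\frac{z\sqrt{2}+\sqrt{2}/2}{(1+2z)/(2\sqrt{13})}=\sqrt{26}$, so the red-tree portion is essentially the paper's example dressed up as a limit---correct but circuitous. The blue-tree portion is superfluous for the observation as stated and, as written, is only a plan rather than a proof.
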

\begin{proof}
Consider a square $S$ centered at $o=(0,0)$ and let $b = \left(0,\frac{1}{4}\right)$, such that $b \in B_1(S)$. Then, $d_T(o,b) = \frac{\sqrt{2}}{2}$.  Let $a = \left(\frac{3}{20},\frac{3}{20}\right)$. Then, $d_T(o,a) = \frac{3\sqrt{2}}{20}$ and $\norm{a-b} = \frac{\sqrt{13}}{20}$. The stretch is as follows.
\begin{align*}
    \frac{d_T(a,b)}{\norm{a-b}} = \frac{\frac{13\sqrt{2}}{20}}{\frac{\sqrt{13}}{20}} = \sqrt{26}
\end{align*}
\end{proof}
\subsection{A simple Steiner point removal procedure}\label{sec:spr}

In this section we prove \Cref{thm:non-steiner}. 

We first note that one can prune the Steiner points from our Steiner tree cover construction by simply applying the Steiner point removal (SPR) procedure of Gupta \cite{Gup01} as a black-box, which would blow up the stretch by a factor of 8. 
Instead, we present a direct, simple linear-time SPR procedure, which blows up the stretch by only a factor of 4. It is possible to further improve the factor via a more careful pruning procedure; however, our main focus in this work is the simplicity of the constructions rather than optimizing the exact constant in the stretch bound.

Given a $T$ constructed by \textsc{Tree}$(S, P)$, we describe a simple bottom-up recursive procedure that constructs a rooted spanning tree $(T',r')$ for $P$, such that the pairwise distance in $T'$ between any pair of points in $P$ is at most 4 times larger than their distance in the original Steiner tree $T$. 

The recursion bottoms at recursion squares of diameter at most $\delta/8$, each containing at most one input point. If such a square $S$ contains point $p$, then the returned tree consists of $p$ as its only vertex, designated as the root. Otherwise, $S$ is empty and the returned tree is the empty tree. 

For the recursion step, consider any recursion square $S$ of diameter $\Delta_i > \delta/8$, which is partitioned into the four equal squares $A,B,C,D$, and let $(T'_A,r'_A),(T'_B,r'_B),(T'_C,r'_C),(T'_D,r'_D)$ be the respective rooted spanning trees for $P \cap A, P\cap B, P \cap C, P\cap D$ returned from the recursion. Note that some of these four trees may be empty; w.l.o.g.\ assume that $T'_A$ is nonempty.
The spanning tree $(T',r')$ of $P \cap S$ is constructed by assigning $r' = r'_A$ as the root of $T'$ and connecting $r'$ with direct edges to the roots $r'_B,r'_C,r'_D$ 
of $T'_B,T'_C,T'_D$ (if one of these trees is empty, no edge connecting $r'$ to it is added).

It can now be observed inductively that \begin{equation} \label{eq:base} \forall p \in P \cap S: d_{T'}(r',p) \le 4 \cdot d_T(o,p),\end{equation} 
where $T$ is the original Steiner tree for $P \cap S$ 
and $o$ is the center of $S$, and thus also the root of $T$. The induction basis is trivial. The induction step follows from the induction hypothesis for each of the four subtrees of $T'$ versus $T$, in conjunction with the observation that the weights of the edges between $r'$ to the roots of $T'_B,T'_C,T'_D$ exceed those of the edges between the root $o$ of $T$ and the roots of the respective subtrees of $T$, which are the centers of the subsquares $A,B,C,D$, by at most a factor of 4.

Now consider any pair  $p,q$ of input points and their least common ancestor $l$ in the original Steiner tree $T$.
Let $(T_l,l)$ be the subtree of $T$ rooted at $l$ and let $(T'_l,r'_l)$ be the corresponding spanning tree.
By construction we have $d_T(p,q) = d_{T_l}(l,p) + d_{T_l}(l,q)$ and $d_{T'}(p,q) \le d_{T'_l}(r'_l,p) + d_{T'_l}(r'_l,q)$.
\Cref{eq:base} now yields 
$$d_{T'}(p,q) ~\le~ d_{T'_l}(r'_l,p) + d_{T'_l}(r'_l,q) ~\le~ 4\cdot d_{T_l}(l,p) + 4 \cdot d_{T_l}(l,q) ~=~ 4\cdot d_T(p,q),$$
which completes the proof of \Cref{thm:non-steiner}.

The following observation will be used in the bounded degree construction.

\begin{observation}\label{obs:spr}
Consider an arbitrary step of the Steiner point removal procedure and let $r'$ be the chosen root. If $r'_B$ (or likewise $r'_C$, or $r'_D$) are connected to some child of $r'$ (instead of $r'$ itself) such that $d_T'(r'_B, r') \le \gamma\Delta_i$, then the Steiner point removal procedure has a stretch blowup of $4\gamma$ (instead of 4).
\end{observation}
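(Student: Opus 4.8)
The plan is to re-run the Steiner point removal analysis of \Cref{sec:spr} essentially verbatim, observing that the modification described---rerouting $r'_B$ (and likewise $r'_C$, $r'_D$) to a child of $r'$ rather than to $r'$ itself---touches exactly the one inequality that produces the constant $4$. In that analysis the factor $4$ enters only through the inductive bound \Cref{eq:base}: there, the edge from $r'$ to $r'_B$ has weight $\norm{r'-r'_B}\le\Delta_i$, whereas the corresponding edge of the Steiner tree---from the center $o$ of $S$ to the center $b$ of the subsquare $B$---has weight $\norm{o-b}=\diam(B)/2=\Delta_i/4$ (this is the identity already invoked in \Cref{clm:path}), so the ratio is at most $4$. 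Under the hypothesis of the observation this single ratio changes: $r'_B$ is now joined to $r'$ inside $T'$ by a path of length $d_{T'}(r',r'_B)\le\gamma\Delta_i=4\gamma\cdot\norm{o-b}$, making the ratio at most $4\gamma$.

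Concretely, the first step is to reprove \Cref{eq:base} with $4\gamma$ in place of $4$: for every recursion square $S$ with center $o$ and chosen root $r'$, and every $p\in P\cap S$, one has $d_{T'}(r',p)\le 4\gamma\cdot d_T(o,p)$, where $T$ is the Steiner tree for $P\cap S$. The base case is immediate since there $r'=p$. For the induction step I would fix $p\in P\cap S$ and assume (w.l.o.g.) the chosen root is $r'=r'_A$. If $p$ lies in $A$, then the $r'$-to-$p$ path of $T'$ stays inside the subtree $T'_A$---which the rerouting leaves unchanged as a connected subtree of $T'$---so the induction hypothesis gives $d_{T'}(r',p)=d_{T'_A}(r'_A,p)\le 4\gamma\cdot d_{T_A}(a,p)\le 4\gamma\cdot d_T(o,p)$, using $d_T(o,p)=\norm{o-a}+d_{T_A}(a,p)$. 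If instead $p$ lies in one of $B,C,D$, say $B$ with center $b$, then
\begin{align*}
d_{T'}(r',p) &~\le~ d_{T'}(r',r'_B) + d_{T'_B}(r'_B,p) ~\le~ \gamma\Delta_i + 4\gamma\cdot d_{T_B}(b,p) \\
&~=~ 4\gamma\,\norm{o-b} + 4\gamma\cdot d_{T_B}(b,p) ~=~ 4\gamma\cdot d_T(o,p),
\end{align*}
where the first inequality is the triangle inequality along $T'$ (again using that $T'_B$ is an unchanged connected subtree, so $d_{T'}(r'_B,p)=d_{T'_B}(r'_B,p)$), the second combines the hypothesis $d_{T'}(r',r'_B)\le\gamma\Delta_i$ with the induction hypothesis applied to $T'_B$, and the two equalities use $\norm{o-b}=\Delta_i/4$ together with $d_T(o,p)=\norm{o-b}+d_{T_B}(b,p)$.

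The final step is identical to \Cref{sec:spr}: for any $p,q\in P$ with least common ancestor $l$ in $T$, the $p$--$q$ path of $T'$ is contained in the subtree $T'_l$ corresponding to the subtree $T_l$ of $T$ rooted at $l$, so $d_{T'}(p,q)\le d_{T'_l}(r'_l,p)+d_{T'_l}(r'_l,q)\le 4\gamma\bigl(d_{T_l}(l,p)+d_{T_l}(l,q)\bigr)=4\gamma\cdot d_T(p,q)$. Hence $T'$ has stretch at most $4\gamma$ with respect to the Steiner tree $T$; that is, the Steiner point removal procedure has stretch blowup $4\gamma$, as claimed.

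I do not anticipate any real obstacle: the entire content is that relaxing $\norm{r'-r'_B}\le\Delta_i$ to $d_{T'}(r',r'_B)\le\gamma\Delta_i$ multiplies the sole place the constant $4$ is born by $\gamma$. The two things worth stating carefully are that the rerouting keeps each subtree $T'_B$ a connected subtree of $T'$ (so the induction hypothesis still bounds $d_{T'}(r'_B,p)$ by $d_{T'_B}(r'_B,p)$), and that $\Delta_i/4$ is exactly the weight $\norm{o-b}$ of the Steiner-tree edge being replaced, so that no extra factor of $\gamma$ can leak in through the recursion.
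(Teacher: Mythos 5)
Your proof is correct and follows the same route the paper implicitly has in mind (the paper states \Cref{obs:spr} without proof): you rerun the inductive argument of \Cref{sec:spr} that establishes \Cref{eq:base}, noting that the only place the constant $4$ arises is the ratio between the replaced Steiner-tree edge weight $\norm{o-b}=\Delta_i/4$ and the new $T'$-path length from $r'$ to $r'_B$, and that replacing the original bound $\norm{r'-r'_B}\le\Delta_i$ by the hypothesis $d_{T'}(r',r'_B)\le\gamma\Delta_i$ cleanly yields $4\gamma$ throughout. The two side remarks you flag---that each $T'_B$ remains a connected subtree of $T'$ (so the inductive bound applies verbatim) and that $\Delta_i/4$ is exactly $\norm{o-b}$---are precisely the points worth making explicit, and you do so.
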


For the application to local routing (see \Cref{sec:local-routing}), we shall use the following property.
\begin{observation}\label{obs:navigation}
Let $T'$ be a non-Steiner tree from the construction in \Cref{thm:non-steiner}. Consider any recursion square $S$ corresponding to $T'$ and let $r$ be the root of the subtree of $T'$ corresponding to $S$. Then:
\begin{enumerate}
    \item If $r$ has a parent, it is outside of $S$. All the other vertices in the subtree of $T'$ rooted at $r$ have their parents inside $S$.
    \item Vertex $r$ has at most one child in each of the three subsquares of $S$ where it does not belong to.
\end{enumerate}
\end{observation}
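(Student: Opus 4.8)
The statement to prove is \Cref{obs:navigation}, which records two structural properties of the non-Steiner tree $T'$ produced by the Steiner point removal procedure of \Cref{sec:spr}. Both items follow by tracing through the bottom-up construction, so the plan is essentially to unwind the recursion and read off what happens to roots and parent pointers.

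\textbf{Plan for item 1.} The key observation is that when the SPR procedure processes a recursion square $S$ (of diameter $> \delta/8$), it picks a root $r' = r'_A$ from one nonempty child subtree, keeps that subtree's internal structure untouched, and only adds \emph{new} edges from $r'$ to the roots $r'_B, r'_C, r'_D$ of the other three child subtrees. So I would argue by induction on the recursion depth (from the bottom up): at the moment a square $S$ is processed, every vertex other than the current root $r'$ already had a parent assigned in some strictly smaller subsquare, and hence that parent lies inside that subsquare, which is inside $S$; and $r'$ itself is assigned a parent only later, when some ancestor square $S^{\mathrm{parent}} \supsetneq S$ is processed and $r' = r'_{\text{(one of the four children)}}$ gets connected to the root chosen there — which, by the same inductive description, is a point lying in a \emph{sibling} subsquare of $S$ within $S^{\mathrm{parent}}$, hence outside $S$. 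The base case (a square with a single point $p$, where $p$ is its own root and has no parent yet) is immediate. This gives exactly the dichotomy in item 1: the root of the subtree for $S$ either has no parent at all (if $S = S_R$) or has a parent outside $S$, and every non-root vertex of that subtree has its parent inside $S$.

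\textbf{Plan for item 2.} This is the claim that $r$ has at most one child in each of the three subsquares it does not lie in. I would read this directly off the construction: when $S$ is processed and $r = r'_A$ is chosen, the only edges added \emph{at this level} from $r$ go to $r'_B, r'_C, r'_D$ — one per subsquare $B, C, D$ (and some may be absent if the subtree is empty). No other child of $r$ is ever added at a higher level, because at higher levels $r$ is only ever connected as a \emph{child} (when it is the chosen root $r'_{\cdot}$ of the subtree for $S$), never gaining new children. Children of $r$ acquired at \emph{lower} levels all lie inside $A$ — the subsquare containing $r$ itself — by item 1 applied within $A$. Hence across the three subsquares $B, C, D$ that do not contain $r$, there is at most one child of $r$ in each. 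I'd phrase this as a short direct argument rather than a separate induction, since it is a one-step consequence of how edges are added.

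\textbf{Expected main obstacle.} There is no real technical difficulty here — both parts are bookkeeping about the construction in \Cref{sec:spr}. The one thing to be careful about is the precise meaning of ``inside $S$'': I would use that the root of any subtree corresponding to a recursion square $S'$ is an input point lying in $S'$ (so all vertices of that subtree lie in $S'$, by \Cref{lem:correct} plus the fact that SPR only relabels roots and reconnects them within the same square), and that the four subsquares of $S$ partition $S$ up to boundary, with each boundary point assigned consistently to a single subsquare by the tie-breaking rule in \Cref{alg:construct}. With that convention fixed, ``parent outside $S$'' and ``at most one child per foreign subsquare'' both become immediate from the description of which edges the procedure adds, and the induction in item 1 is the only place where a formal argument is needed.
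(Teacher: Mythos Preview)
The paper states this observation without proof, so your proposal is filling in details the authors left implicit. Your overall approach---unwinding the bottom-up SPR procedure and tracking where parent edges land---is correct and is exactly how one should argue this.

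There is one genuine slip in your item~2 argument. You write that ``at higher levels $r$ is only ever connected as a child \ldots\ never gaining new children.'' This is false: if $r$ happens to be chosen again as the root when the parent square of $S$ (or a further ancestor) is processed, then $r$ acquires up to three \emph{new} children at that level. The fix is simple and your conclusion survives: any child of $r$ added when processing an ancestor square $\hat S \supsetneq S$ is the root of one of the other three subsquares of $\hat S$, and each of those subsquares is disjoint from the subsquare of $\hat S$ containing $S$; hence such a child lies outside $S$ and in particular not in $B$, $C$, or $D$. So the count of children of $r$ in $B\cup C\cup D$ is still exactly the (at most) three added when $S$ itself is processed, one per subsquare.

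A smaller imprecision in item~1: you describe the parent of $r$ as lying in ``a sibling subsquare of $S$ within $S^{\mathrm{parent}}$,'' but $S^{\mathrm{parent}}$ need not be the immediate parent of $S$---it is the first ancestor at which $r$ is \emph{not} re-chosen as root. The parent of $r$ then lies in a sibling subsquare of the child of $S^{\mathrm{parent}}$ that contains $S$, which is still outside $S$; so the conclusion is right, just tighten the wording.
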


\subsection{Bounded degree construction}\label{sec:bdd}
In this section we show how to modify the spanning tree cover construction so that each tree has a degree bounded by an absolute constant. The degree reduction technique is adapted from \cite{CGMZ16}.

We describe the procedure on a tree $T'$ obtained by the Steiner point removal procedure from a tree $T$.
Let $\ell$ be an arbitrary constant, to be set later. Orient every edge in $T'$ towards the root. For every level $i$ and every node $u$, let $M_i(u)$ be the set of children of $u$ added at the level $i$. Let $I_u = \{i_1, i_2,\ldots \}$ be the set of levels where $M_i(u)$ is nonempty. We use $M_j^u \coloneqq M_{i_j}(u)$ to simplify the notation. 

We modify the arcs going into $u$ as follows. For $1 \le j \le \ell$, keep the arcs from $M_j^u$ to $u$. For $j > \ell$, pick a vertex $w \in M_{j-\ell}^u$ and for each point $v \in M_j^u$ replace $(v,u)$ by $(v,w)$. This concludes the edge modification process. Let $\tilde{T'} = (P,\tilde{E'})$ be the resulting tree. Since every edge is replaced by exactly another edge, the overall number of edges does not change.

\begin{lemma}
Every point in $P$ has a degree at most $1+3(\ell+1)$ in $\tilde{T'}$.
\end{lemma}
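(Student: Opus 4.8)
The goal is to bound the degree of every vertex $u$ in $\tilde{T'}$ by $1+3(\ell+1)$. I would split the degree count into two contributions: the single arc leaving $u$ towards its parent (contributing at most $1$), and the arcs entering $u$ from its children in $\tilde{T'}$. The children of $u$ in $\tilde{T'}$ come from two sources: (i) vertices that were children of $u$ already in $T'$ and whose arc to $u$ was \emph{kept} during the modification, and (ii) vertices that were re-routed \emph{to} $u$ because $u$ was chosen as the designated vertex $w \in M_{j-\ell}^{u'}$ for some ancestor-side node $u'$ — but by the construction, the re-routing only ever redirects arcs onto a vertex $w \in M_{j-\ell}^u$, i.e., onto a former child of $u$, never onto $u$ itself. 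So I first want to carefully re-read the construction to confirm that $u$ gains new in-arcs only from redirections performed \emph{at $u$'s own level-sets} $M_j^u$, where $u$ plays the role of target for vertices $v \in M_j^u$ with $j > \ell$ that get pushed down onto some $w \in M_{j-\ell}^u$. This means a vertex $u$ receives redirected arcs precisely when it equals the chosen representative $w$ in $M_{j-\ell}^{u''}$ for its \emph{own} parent-side grouping, i.e., $u \in M_{j-\ell}^{p(u)}$ and $u$ was picked; then it absorbs the set $M_j^{p(u)}$.

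\textbf{Key steps.} The plan is: (1) Count the arcs from $u$'s own children in $T'$ that are \emph{retained}. By the rule, for $j = 1,\dots,\ell$ the arcs from all of $M_j^u$ to $u$ are kept, so these contribute $|M_1^u| + \cdots + |M_\ell^u|$ children. But Observation~\ref{obs:navigation}(2) says $r$ (here $u$) has at most one child in each of the three subsquares it does not belong to, i.e., each level-set $M_i(u)$ has size at most $3$. Hence the retained arcs contribute at most $3\ell$. (2) Count the arcs redirected \emph{to} $u$: this happens if and only if $u$ is the chosen representative $w$ for its parent $p(u)$'s level-grouping, in which case $u$ absorbs $M_{j}^{p(u)}$ for possibly several values of $j$ with $j - \ell$ equal to the level at which $u$ was added — but since $u$ is added at exactly one level relative to $p(u)$, there is exactly one such $j$, and $|M_j^{p(u)}| \le 3$. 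So $u$ absorbs at most $3$ redirected in-arcs. (3) Add the out-arc to the parent: $+1$. (4) There is one subtlety: when $u$ itself plays the role of $w$ for its own redirection at level $j-\ell$ within its \emph{own} children (i.e., $u$ redirects $M_j^u$ onto a vertex $w \in M_{j-\ell}^u$, not onto $u$), this does not add to $u$'s degree — but it does mean I should double-check whether the retained count in step~(1) is exactly $\sum_{j\le\ell}|M_j^u|$ or whether some of the $M_{j}^u$ for $j > \ell$ also stay attached to $u$; by the rule they are \emph{moved away} from $u$, so they do not count. Summing: $1 + 3\ell + 3 = 1 + 3(\ell+1)$, as claimed.

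\textbf{Main obstacle.} The bookkeeping-heavy part is being precise about what "$M_i(u)$ is the set of children of $u$ added at level $i$" means after the Steiner-point-removal bottom-up merge, and making sure that in $\tilde{T'}$ the in-neighbors of $u$ are exactly (retained original children at levels $\le \ell$) $\cup$ (the at-most-one batch of vertices redirected onto $u$ from $p(u)$'s grouping), with no hidden third source — for instance, confirming that a vertex $v$ redirected from $(v,u')$ to $(v,w)$ always lands on a $w$ that is a former \emph{child} of $u'$, so that the only way $u$ receives redirected arcs is via $u \in M_{\bullet}^{p(u)}$. Once that structural fact is pinned down using Observation~\ref{obs:navigation} (which guarantees $|M_i(\cdot)| \le 3$ for the roots, hence for the relevant nodes), the arithmetic $1 + 3\ell + 3 = 1 + 3(\ell+1)$ is immediate. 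I expect essentially no real difficulty beyond this careful accounting.
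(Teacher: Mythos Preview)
Your proposal is correct and follows essentially the same approach as the paper's proof: both split the degree of $u$ into the single out-arc to its parent, the at most $3\ell$ retained in-arcs from the original children in $M_1^u,\dots,M_\ell^u$, and at most $3$ new in-arcs redirected onto $u$ from $M_{j_0+\ell}^{p(u)}$ when $u$ is the chosen representative in $M_{j_0}^{p(u)}$. Your explicit appeal to Observation~\ref{obs:navigation}(2) for the bound $|M_i(\cdot)|\le 3$ is exactly the justification the paper leaves implicit.
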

\begin{proof}
Consider a point $u \in P$. The edges incident on $u$ fall into the following three categories. First, there is at most 1 edge directed out of $u$. Out of the edges directed into $u$, at most $3\ell$ remain. There are potentially some new edges added into $u$. Consider an edge $(u,v)$ directed out of $u$, belonging to some $M_j^v$. There can be at most $|M_{j+\ell}^v|\le 3$ edges directed into $u$ due to this edge. Overall, the total degree of $u$ after the modification procedure is $1+3(\ell+1)$.
\end{proof}

The following observation is the key ingredient for our compact routing scheme described in \Cref{sec:routing}.
\begin{observation}\label{obs:hierarchy}
Every vertex in $P$ has children in $\tilde{T'}$ in at most $\ell+1$ different levels.
\end{observation}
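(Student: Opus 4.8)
\textbf{Proof proposal for \Cref{obs:hierarchy}.}

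The plan is to trace through the edge-modification process of \Cref{sec:bdd} and argue that each vertex $u$ retains incoming arcs from only a bounded number of levels. First I would recall the setup: for a fixed vertex $u$, let $I_u = \{i_1 < i_2 < \cdots\}$ be the levels at which children were originally attached to $u$ (in $T'$, before the degree reduction), and write $M_j^u = M_{i_j}(u)$. The modification rule keeps the arcs from $M_1^u, \ldots, M_\ell^u$ pointing to $u$, and for every $j > \ell$ it re-routes all of $M_j^u$ to land on some vertex $w \in M_{j-\ell}^u$ instead of on $u$. So the only \emph{original} incoming arcs surviving at $u$ come from the levels $i_1, \ldots, i_\ell$, i.e.\ at most $\ell$ distinct levels.

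Next I would account for the \emph{new} incoming arcs created at $u$. A new arc arrives at $u$ only if $u$ itself was chosen as the representative $w \in M_{j-\ell}^v$ for the parent $v$ of $u$ (with $v$ the original parent of $u$, so $u \in M_{j-\ell}^v$ for the appropriate index), and then the vertices of $M_j^v$ get redirected to $u$. By the structure of the construction, all vertices of $M_j^v$ were added to $v$ at a single common level $i_j$; hence all the new arcs into $u$ come from that one level. Combining the two contributions: $u$ has incoming arcs in at most $\ell$ levels from the surviving original arcs, plus at most $1$ additional level from the redirected arcs, giving at most $\ell+1$ distinct levels in total.

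The one point that needs slight care — and what I expect to be the only subtlety — is confirming that the redirected arcs into $u$ really do all originate from a \emph{single} level, rather than accumulating across several parents or several redirection steps. This follows because $u$ can be the chosen representative $w$ for at most one index $j - \ell$ relative to its unique original parent $v$ (the choice of $w$ is made once per pair $(v, j)$), and each $M_j^v$ sits at one level. One should also note that the redirection is applied only once (the process is not iterated), so no further levels are introduced. With these observations the bound of $\ell+1$ levels is immediate, and the observation follows.
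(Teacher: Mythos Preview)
Your proposal is correct and follows exactly the same reasoning as the paper's proof: at most $\ell$ levels survive among the original children of $u$, and at most one additional level arises from redirected arcs coming through $u$'s unique original parent. Your write-up is simply more detailed than the paper's one-sentence version, and your remark that $u$ lies in $M_k^v$ for exactly one index $k$ (hence can be the chosen representative for at most one $j=k+\ell$) makes explicit the point the paper leaves implicit.
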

\begin{proof}
Consider an arbitrary vertex $u \in P$.
After the degree reduction step, $u$ can be a parent of vertices at levels $1\le j \le \ell$ and at most once for the vertices in $M_{j+\ell}^v$ for some ancestor $v$ of $u$.
\end{proof}
We next show that this modification does not increase the stretch by much.
\begin{lemma}
For every pair of vertices $(u,v)$ in $\tilde{T'}$ their distance  with respect to the new edge set $\tilde{E}$ satisfies $\tilde{d}(v,u) \le 4d_T(u,v)/(1-2^{-\ell})$, where $d_T(v,u)$ is their distance in the original Steiner tree.
\end{lemma}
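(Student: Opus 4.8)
The plan is to bound the distance $\tilde{d}(u,v)$ in the degree-reduced tree $\tilde{T'}$ by comparing it to the distance $d_{T'}(u,v)$ in the spanning tree $T'$ produced by the Steiner point removal procedure, and then invoke the factor-$4$ bound relating $T'$ to the original Steiner tree $T$ (equation~\eqref{eq:base} and the argument immediately following it). Concretely, it suffices to show that rerouting edges only inflates distances by a factor of $1/(1-2^{-\ell})$, i.e.\ $\tilde{d}(u,v) \le d_{T'}(u,v)/(1-2^{-\ell})$; composing with $d_{T'}(u,v) \le 4\,d_T(u,v)$ then gives the claim.

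The key structural fact to exploit is the geometry of the levels: if $w \in M_{j-\ell}^{u}$ is the vertex chosen so that every $v \in M_j^u$ is rerouted from $(v,u)$ to $(v,w)$, then $w$ is a child of $u$ that was attached at a level which is $\ell$ steps \emph{coarser} than level $j$. Since edge weights in $T'$ at level $i$ are $\Theta(\Delta_i)$ with $\Delta_i$ halving each level (by the construction in \Cref{sec:steiner} and \Cref{obs:spr}), the detour edge $(u,w)$ has weight at most $2^{-\ell}$ times the weight of the original edge $(v,u)$ it helps replace (up to the same constants folded into the factor $4$). So each time a path in $T'$ used an edge of weight $\mathsf{wt}$ entering a vertex at a rerouted level, in $\tilde{T'}$ we instead pay $\mathsf{wt}$ (the edge $(v,w)$, which has essentially the same weight as $(v,u)$ since $v$ is at the same level $j$ in both) plus the detour $\mathsf{wt}\cdot 2^{-\ell}$ to climb from $w$ to $u$; and this detour vertex $w$ may itself have been rerouted, so the detours nest geometrically. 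First I would fix a vertex $u$ and show, by induction down the chain of rerouting vertices $w_0=u, w_1, w_2,\dots$ (where $w_{k+1}$ is the chosen representative in $M_{\cdot-\ell}^{w_k}$), that the path from any child $v$ of $u$ in $T'$ up to $u$ in $\tilde{T'}$ has length at most $\mathsf{wt}(v,u)\cdot\sum_{k\ge 0} 2^{-k\ell} = \mathsf{wt}(v,u)/(1-2^{-\ell})$. Then an arbitrary $u$-$v$ path in $\tilde{T'}$ is obtained from the corresponding $T'$-path by replacing each edge by such a (nested-detour) subpath, and summing the geometric bound edge by edge yields $\tilde{d}(u,v)\le d_{T'}(u,v)/(1-2^{-\ell})$.

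The main obstacle I anticipate is making the edge-weight comparison fully rigorous: I need that when $(v,u)$ with $v\in M_j^u$ is replaced by $(v,w)$ with $w\in M_{j-\ell}^u$, the new edge $(v,w)$ has weight comparable to the old $(v,u)$ (both endpoints of the new edge sit inside the level-$(j-\ell)$ square's subtree, and $v$ is still at level $j$), \emph{and} that the cost of walking from $w$ up to $u$ in $\tilde{T'}$ is at most $2^{-\ell}$ times that — which already requires the inductive statement, so the induction must be set up carefully to be self-supporting (induct on the number of levels, coarsest first, so that by the time we process level $j$ the bound for the coarser vertex $w$ at level $j-\ell$ is already established). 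A secondary subtlety is that $w$ itself might be reattached to a still-coarser vertex during the modification, which is exactly what produces the nesting; this is handled by the same induction provided we phrase the inductive hypothesis as ``for every vertex $x$ and every original child $y\in M_i^x$ of $x$ in $T'$, the $\tilde{T'}$-path from $y$ to $x$ has length $\le \mathsf{wt}(y,x)/(1-2^{-\ell})$,'' processing levels $i$ from largest (coarsest) $\Delta_i$ downward. Once that per-edge bound is in hand, summing over the edges of the $T'$-path and invoking \eqref{eq:base} finishes the proof; I would not expect the final summation to require any calculation beyond a single geometric series.
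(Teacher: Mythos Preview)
Your plan to factor the estimate as $\tilde{d}(u,v) \le d_{T'}(u,v)/(1-2^{-\ell})$ followed by $d_{T'} \le 4\,d_T$ does not go through: the first inequality is false in general. The obstacle you flag---that the replacement edge $(v,w)$ must have weight ``comparable'' to the deleted edge $(v,u)$---is fatal rather than merely technical. Edge weights in $T'$ at level $i_j$ are bounded \emph{above} by $\Delta_{i_j}$ but not below: a child $v\in M_j^u$ can sit just across a subsquare boundary from $u$, making $\|v-u\|$ arbitrarily small relative to $\Delta_{i_j}$, while the chosen $w\in M_{j-\ell}^u$ may lie at distance $\Theta(\Delta_{i_{j-\ell}})$ from $u$ on the opposite side. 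Then already the single new edge $(v,w)$ has length $\Theta(\Delta_{i_{j-\ell}}) \gg \|v-u\|/(1-2^{-\ell})$. So your asserted ``edge weights in $T'$ at level $i$ are $\Theta(\Delta_i)$'' and the per-edge bound $\tilde d(y,x)\le \mathsf{wt}(y,x)/(1-2^{-\ell})$ both fail.

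The paper's proof never compares $\tilde{T'}$ to $T'$. It observes that the rerouting chain from $v\in M_j^u$ back to $u$ stays entirely among the \emph{original} children of $u$ (each intermediate $v_s$ lies in $M_{j-s\ell}^u$), so the chain has length at most $\sum_{s\ge 0}\Delta_{i_j}\cdot 2^{-s\ell}=\Delta_{i_j}/(1-2^{-\ell})$---a bound in terms of the level-$j$ \emph{diameter}, not the old edge weight $\|v-u\|$. That is precisely the hypothesis of \Cref{obs:spr} with $\gamma=1/(1-2^{-\ell})$: re-running the inductive SPR analysis with each level-$i$ attachment costing at most $\gamma\Delta_i$ instead of $\Delta_i$ turns the factor $4$ into $4\gamma$ directly against $d_T$. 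Your geometric series is the right computation, but it must be read as a diameter-scale bound and fed back into the SPR induction against $d_T$, not composed with the already-completed $T'$-versus-$T$ estimate.
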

\begin{proof}
We focus on an edge $(v,u)$ such that $v$ is a child of $u$ and $(v,u)$ is replaced in the construction. Let $\Delta_j$ be the diameter of the square which $u$ is the root of. We shall show that $\tilde{d}(v,u) \le \frac{\Delta_j}{1-2^{-\ell}}$, which by \Cref{obs:spr} implies that the stretch is increased by a factor of $4/(1-2^{-\ell})$ with respect to the original Steiner tree.

By construction $v \in M_j^u$ for some $j > \ell$. Let $v_0 = v$. We also have that for  $0 \le s \le s_j\coloneqq \ceil{\frac{j-1}{\ell}}$, there is a $v_s \in M_{j-s\ell}^u$ such that $(v_s, v_{s+1}) \in \tilde{E'}$ and $(v_{s_j},u) \in \tilde{E'}$. Then, there is a path from $v$ to $u$ in $\tilde{E'}$ of the form $v=v_0, v_1, \ldots, v_{s_j},u$. The length of this path is an upper bound on $\tilde{d}(v,u)$. 

We can upper bound the length of this path as follows.
\begin{align*}
\tilde{d}(v,u) &\le \sum_{s=1}^{s_j-1}d(v_s, v_{s+1}) + d(v_{s_j},u)\\
&\le \sum_{s=1}^{s_j-1}\frac{\Delta_j}{2^{(s-1)\ell}} + \frac{\Delta_j}{2^{(s_j-1)\ell}} \\
&\le  \frac{\Delta_j}{1-2^{-\ell}}
\end{align*}
\end{proof}

This completes the proof of \Cref{thm:bdd}.
\begin{remark}
By setting $\ell=1$, we get a degree bound of 7 and stretch $8\sqrt{26}$.
\end{remark}

\section{Applications}\label{sec:applications}

\subsection{Compact routing scheme}\label{sec:routing}
Compact routing scheme is a distributed algorithm which routes messages from any given source to any given destination. Given an input metric space $M_X=(X,\|\cdot\|_X)$, the goal is to devise an \emph{overlay network}, \emph{local memory}, and \emph{routing algorithm}. 
The \emph{overlay network} is a subgraph $G=(X, E)$, $E\subseteq \binom{X}{2}$, of $M_X$ on top of which the routing proceeds. Every node $u \in X$ has a unique \emph{port number} assigned to each neighboring edge in the overlay network, ranging between $1$ and $\deg_G(u)$. We focus on the designer port model, where the algorithm designer has a power of assigning the port numbers.
The \emph{local memory} is stored at every node in the network and consists of a \emph{routing table} and a \emph{label (i.e., address)}.  This section is focused on the labeled model, where the algorithm designer can choose the local memory. Given a source node $u$ and a destination node $v$ the \emph{routing algorithm} commences at $u$ and has at its disposal the routing table of $u$ and the label of $v$. Based on this information, it should output the port number of the next node $w$ on the path from $u$ to $v$. The same algorithm is then executed at node $w$, until node $v$ is reached.

There are several basic qualities of a compact routing scheme. The \emph{stretch} is defined as $\max_{u,v\in X} \frac{d_R(u,v)}{\norm{uv}_X}$, where $d_R(u,v)$ is the distance traveled by the routing algorithm when routing from $u$ to $v$. The \emph{routing table size} and the \emph{label size} is the maximum number of bits occupied by the routing table and label, respectively, of any node in $X$. Ideally, a routing scheme will achieve a good tradeoff between the stretch and the size of routing tables and labels while routing on top of a sparse overlay network. For a more detailed introduction to compact routing, refer to \cite{FG01,TZ01,AGMNT08}.

Our routing scheme uses \emph{handshaking} to determine which of the two trees to route on. This is a mechanism by which the source $u$ and the destination $v$ exchange a constant amount of information and agree upon a single bit corresponding to the index of the tree on which the subsequent routing will take place. For more details on handshaking, see \cite{TZ01, ACLRT03, KR25}.

A seminal result by Thorup and Zwick \cite{TZ01} states that every tree admits a routing scheme with stretch 1, no routing tables, and labels of size $(1+o(1))\log{n}$ bits per node. Together with our spanning tree cover, this gives rise to a very simple compact routing scheme --- compute the labels of \cite{TZ01} in both of the trees and for every node in the metric let its label be the concatenation of its labels in the two trees. Clearly, each vertex uses $(2+o(1))\log{n}$ bits for labels and has no routing table. 

The only missing piece is a way to decide which tree to use for routing.
If the coordinates of the input points are at our disposal, we can easily find the right tree to route on. (The details are given in \Cref{sec:local-routing}.) The rest of this section is describing a handshaking protocol which allows us to choose the tree to route on. The result of this section is summarized in the following theorem.

\begin{theorem}
Every set of $n$ points $P \subset \mathbb{R}^2$ admits a compact routing scheme which operates on an overlay network consisting of two trees, has stretch $8\sqrt{26}$, and uses labels of size $(2+o(1))\log{n}$ bits and no routing tables. The routing scheme uses a single round of handshaking in which a constant number of bits is exchanged.
\end{theorem}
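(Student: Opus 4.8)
The plan is to combine three ingredients that are all essentially in place by this point in the paper: (i) the bounded-degree spanning tree cover from \Cref{thm:bdd} with $\ell = 1$, which gives two trees, each of degree at most $7$, with stretch $8\sqrt{26}$ between every pair of points; (ii) the Thorup--Zwick tree routing scheme \cite{TZ01}, which on any $n$-vertex tree yields stretch-$1$ routing with no routing tables and labels of $(1+o(1))\log n$ bits (and which works in the designer-port model); and (iii) a handshaking protocol that lets the source $u$ and destination $v$ agree, with only $O(\log\log n)$ exchanged bits, on which of the two trees to route in. The final label of each point will be the concatenation of its two Thorup--Zwick labels (one per tree), plus an $O(\log\log n)$-bit block used by the handshake; this is still $(2+o(1))\log n$ bits, and there are no routing tables.

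First I would record what the handshake must compute. By \Cref{obs:choose-tree}, the correct tree for the pair $(a,b)$ is determined by a purely combinatorial fact: let $S_1$ be the red square in which $a,b$ first fall into distinct subsquares; if $a,b$ lie in non-consecutive triangles of $S_1$ route red, otherwise route blue. Equivalently, one can phrase the decision in terms of the level of the least common ancestor of $a$ and $b$ in the red quadtree together with the pair of child-indices at that ancestor — information that depends only on the two quadtree root-to-leaf paths. The key point is that this decision is a function of a bounded-depth ``local view'' near the split point, so it can be encoded by having each point store, for each of the two trees, a short digest of its root-to-leaf path. I would let each point $u$ store in its label the sequence of child-indices (two bits each) along its path in the red quadtree, truncated/compressed so that the stored block has only $O(\log\log n)$ bits while still determining, when XORed/compared against another point's block, the level and child-pair at the split; concretely one stores a logarithmic-length \emph{fingerprint} of the path prefix plus the last $O(1)$ child-indices, exactly as in standard ancestor/LCA labeling. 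In the handshake, $u$ sends this $O(\log\log n)$-bit block to $v$ (or vice versa), $v$ combines it with its own block to recover the split level and the triangle positions of $u,v$ in $S_1$, applies the \Cref{obs:choose-tree} rule to get the one-bit answer, and sends that bit back. This is a single round exchanging $O(\log\log n) = O(1)\cdot$ (sub-logarithmic) bits — in particular a constant number of $w$-bit words, which matches the theorem's ``constant number of bits'' phrasing in the standard RAM-word convention.

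Once the tree index $\tau \in \{\text{red},\text{blue}\}$ is agreed upon, routing proceeds entirely inside the tree $T'_\tau$: $u$ extracts $v$'s $\tau$-component label, runs the Thorup--Zwick routing function of \cite{TZ01} on $T'_\tau$, and forwards along the indicated port; every subsequent node does the same. Correctness of the tree routing is immediate from \cite{TZ01}; the path it follows has length exactly $d_{T'_\tau}(u,v)$, and by the choice of $\tau$ (\Cref{obs:choose-tree} applied to the Steiner construction, then propagated through the Steiner-point-removal and degree-reduction steps) we have $d_{T'_\tau}(u,v) \le 8\sqrt{26}\cdot \|uv\|$. The port numbers used by Thorup--Zwick are at most the degree of the tree, so bounded degree $7$ means ports fit in $O(1)$ bits and cause no overhead; the label size is $(1+o(1))\log n + (1+o(1))\log n + O(\log\log n) = (2+o(1))\log n$. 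Assembling these estimates yields the stated theorem.

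The main obstacle is the handshake, specifically making the tree-selection rule of \Cref{obs:choose-tree} computable from only $O(\log\log n)$ bits of exchanged information while the full quadtree path has $\Theta(\log \Phi)$ bits (and one should use the \emph{compressed} quadtree of \Cref{lem:time}, with $O(\log n)$-depth but irregular branching, so the ``child index'' bookkeeping must be done against the compressed tree). The resolution is that the decision depends only on the split level and the $O(1)$ child-indices immediately around it, which is exactly the kind of predicate handled by $o(\log n)$-bit LCA/ancestor labeling schemes; I would cite the relevant labeling result and the handshaking framework of \cite{TZ01, ACLRT03, KR25}, and verify that translating a triangle membership in $S_1$ into a child-index pattern needs only constantly many index symbols. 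A secondary point to check is that the degree-reduction rerouting of \Cref{sec:bdd}, which moves some quadtree children to descendants, does not corrupt the tree-selection invariant — but \Cref{obs:choose-tree} is about which \emph{tree} preserves the distance, not about the internal structure of that tree, so the selection bit is unaffected and only the within-tree routing (handled by \cite{TZ01}) sees the modified topology.
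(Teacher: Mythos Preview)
Your overall architecture is right and matches the paper: take the bounded-degree two-tree cover with $\ell=1$, equip each tree with Thorup--Zwick labels, concatenate, and use a handshake to pick the tree via \Cref{obs:choose-tree}. The gap is in the handshake itself. You propose that each vertex stores an $O(\log\log n)$-bit ``fingerprint'' of its quadtree path, that $u$ ships this block to $v$, and that $v$ then recovers the split level and the triangle pair. This is not substantiated: determining the LCA level from static labels is a $\Theta(\log n)$-bit problem in general, and you give no mechanism by which an $O(\log\log n)$-bit digest, even combined with the TZ labels, recovers both the split level and the eight-way triangle indices of $u$ and $v$ there. The appeal to ``standard ancestor/LCA labeling'' does not help, since those schemes use $\Theta(\log n)$ bits. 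Separately, the theorem asserts a \emph{constant} number of bits is exchanged; $O(\log\log n)$ is not constant, and reinterpreting ``bits'' as ``machine words'' is not what is meant here.

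The paper's handshake avoids this entirely by not trying to compute the split from labels at all. Instead it augments the Thorup--Zwick encoding so that each \emph{port} carries a few extra bits (an up/down flag and a $3$-bit $\pos(\cdot,\cdot)$ code), at no asymptotic cost to the $(1+o(1))\log n$ label bound. The handshake then routes a message from $u$ to $v$ along the tree path in the first tree; the header carries only the running $3$-bit position $\pos(u,u_i)$, updated at each hop by an associative rule $\otimes$ read off the port bits. The LCA is detected in-transit by the up/down flag flipping, $\pos(u,o)$ is frozen, $\pos(v,o)$ is accumulated on the way down, and at $v$ one checks consecutiveness and returns a single bit. Thus the exchanged header is genuinely $O(1)$ bits. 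For the bounded-degree spanning tree this requires a small adaptation (each vertex stores $O(\ell)$ local bits for the levels it participates in, and ports store $O(\log\ell)$ bits), but the header remains constant-size. Your proposal should replace the fingerprint idea with this traversal-based accumulation; the crucial point you missed is that the handshake message \emph{walks the tree path anyway}, so the split level and triangle codes can be discovered incrementally from port-local data rather than reconstructed from compressed labels.
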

\subsubsection{Finding the right tree using handshaking}
For a given source $u$ and destination $v$, we perform the handshaking only in the first tree. A message with a header containing at most 6 bits is passed from $u$ to $v$ using the shortest path in the first tree. Upon reaching $v$, we are able to decide whether the tree has a good stretch or not. This information (consisting of a single bit) is then sent back to $u$.

We first describe a simple version of the handshaking in a Steiner tree. Next, we show how to modify this scheme so that it works in the spanning tree. Finally, we show how to augment the Thorup-Zwick routing scheme with the data necessary for handshaking without asymptotically changing the label sizes.

\paragraph{Handshaking in Steiner trees.} Consider a Steiner tree $T$ from \Cref{thm:steiner}, equipped with the Thorup-Zwick routing scheme. We assume that we can store with every port number in the routing scheme additional 4 bits of information. (We later show how to do this without affecting the bound of $(1+o(1))\log{n}$ bits of memory for labels.)

The 4 bits of information are assigned to port numbers as follows. Consider an arbitrary edge $(u,v)$ in the tree $T$, where $v$ is the parent of $u$. From the construction, we know that $v$ is the center of some recursion square $S$ containing $u$. The additional bits of memory will consist of two parts: \emph{(i)} a single bit set to $0$ for $\port(u,v)$ and set to $1$ for $\port(v,u)$ and \emph{(ii)} a 3-bit string $\pos(u,v)$ describing the relative position of $u$ and $v$, which are assigned as follows.
For $0 \le i \le 7$, let $\bin(i)$ denote the 3-bit binary representation of $i$.
Subdivide $S$ into eight triangles as in \Cref{fig:regions} and set $\pos(u,v)$ to $\bin(0)$ if $u \in A_1(S)$, $\bin(1)$ if $u \in A_2(S)$, ..., and $\bin(7)$ if $u \in D_2(S)$. This concludes the description of assignment of additional information to port numbers. We note that due to the properties of our construction, the edges between the two \emph{internal} vertices in the tree will always have numbers $0$, $2$, $4$, or $6$, because their endpoints are the centers of some square $S$ and one of its four subsquares.

Consider two points $u$ and $v$ and let $S$ be the square such that $u$ and $v$ are in different subsquares of $S$. From the construction, the center $o$ of $S$ is the LCA of $u$ and $v$. By \Cref{obs:choose-tree}, $u$ and $v$ have stretch at most $\sqrt{26}$ in $T$ if they belong to nonconsecutive triangles of $S$. This can be checked using port information as follows. Traverse the path from $u$ to $v$ using the Thorup-Zwick routing scheme. The first bit assigned to every port number will be $0$ while going from $u$ to $o$ and $1$ while going from $o$ to $v$. This allows us to know once we reach the LCA, which is $o$. Let $u=u_0, u_1, u_2, \ldots, u_a=o$ be the vertices on the path from $u$ to $o$.
While traversing this path, we keep three bits of information  describing the relative position of $u$ and $u_i$, denoted by $\pos(u, u_i)$. For $u$ and $u_1$, this information is already encoded in $\port(u,u_1)$. Assuming we know $\pos(u,u_i)$ for some $i\ge 1$, we proceed to describe how to compute $\pos(u,u_{i+1})$ using the information in $\port(u_i,u_{i+1})$. Recall that $\pos(u_i,u_{i+1})$ can only take values $0$, $2$, $4$, or $6$, since $u_i$ and $u_{i+1}$ are internal nodes in the tree for $i \ge 1$. If $\pos(u_i,u_{i+1}) = 0$ and $\pos(u,u_i) \in \{1,2,3,4\}$, then $\pos(u, u_{i+1}) = 1$. We can similarly deduce the other values, as shown in \Cref{tab:otimes}. See also \Cref{fig:otimes} for an explanation. We use the notation $\pos(u,u_{i+1}) = \pos(u,u_i) \otimes \pos(u_i,u_{i+1})$ to denote this computation.
Using this notation, we have the following expressions.
\begin{align*}
\pos(u,o) = \bigotimes_{i=0}^{a-1}\pos(u_i, u_{i+1})
\end{align*}
We can analogously compute $pos(v, o)$. 

\begin{figure}[ht]
    \centering
    \includegraphics[width=0.78\linewidth]{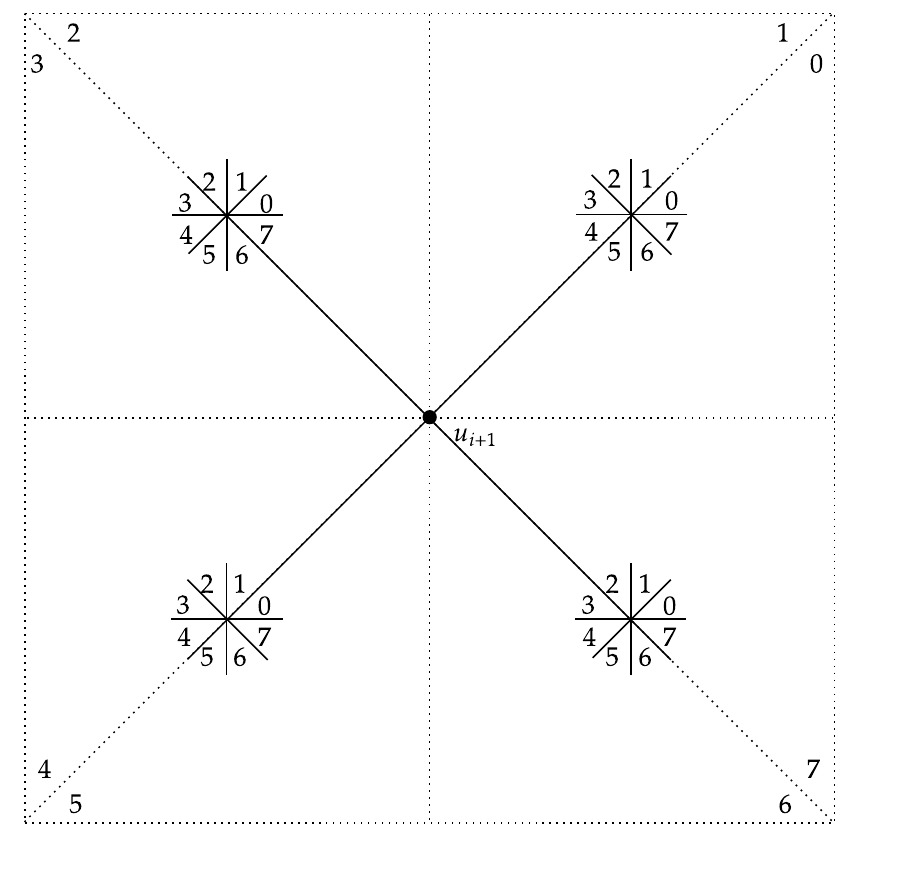}
    \caption{Computing $\pos(u,u_i)\otimes\pos(u_i,u_{i+1})$.}
    \label{fig:otimes}
\end{figure}

\begin{table}[ht]
\centering
\begin{tabular}{cc|cccccccc|}
\cline{3-10}
\multicolumn{1}{l}{}                                      & \multicolumn{1}{l|}{} & \multicolumn{8}{c|}{$\pos(u,u_i)$}                                                                                                                                                       \\ \cline{2-10} 
\multicolumn{1}{c|}{}                                     &                       & \multicolumn{1}{c|}{0} & \multicolumn{1}{c|}{1} & \multicolumn{1}{c|}{2} & \multicolumn{1}{c|}{3} & \multicolumn{1}{c|}{4} & \multicolumn{1}{c|}{5} & \multicolumn{1}{c|}{6} & 7 \\ \hline
\multicolumn{1}{|c|}{\multirow{4}{*}{$\pos(u_i,u_{i+1})$}} & 0                     & \multicolumn{1}{c|}{0} & \multicolumn{1}{c|}{1} & \multicolumn{1}{c|}{1} & \multicolumn{1}{c|}{1} & \multicolumn{1}{c|}{1} & \multicolumn{1}{c|}{0} & \multicolumn{1}{c|}{0} & 0 \\ \cline{2-10} 
\multicolumn{1}{|c|}{}                                    & 2                     & \multicolumn{1}{c|}{2} & \multicolumn{1}{c|}{2} & \multicolumn{1}{c|}{2} & \multicolumn{1}{c|}{3} & \multicolumn{1}{c|}{3} & \multicolumn{1}{c|}{3} & \multicolumn{1}{c|}{3} & 2 \\ \cline{2-10} 
\multicolumn{1}{|c|}{}                                    & 4                     & \multicolumn{1}{c|}{5} & \multicolumn{1}{c|}{4} & \multicolumn{1}{c|}{4} & \multicolumn{1}{c|}{4} & \multicolumn{1}{c|}{4} & \multicolumn{1}{c|}{5} & \multicolumn{1}{c|}{5} & 5 \\ \cline{2-10} 
\multicolumn{1}{|c|}{}                                    & 6                     & \multicolumn{1}{c|}{7} & \multicolumn{1}{c|}{7} & \multicolumn{1}{c|}{7} & \multicolumn{1}{c|}{6} & \multicolumn{1}{c|}{6} & \multicolumn{1}{c|}{6} & \multicolumn{1}{c|}{6} & 7 \\ \hline
\end{tabular}
\caption{Values $\pos(u,u_i) \otimes \pos(u_i,u_{i+1})$.}\label{tab:otimes}
\end{table}

This information tells us exactly whether $u$ and $v$ are in consecutive triangles or not. While going from $u$ to $o$, the header of the routing scheme consists of 3 bits containing $\pos(u,u_i)$. Upon reaching $o$, we compute $\pos(u,o)$ and keep it in the header. Subsequently, we keep 3 bits describing $\pos(o,v_i)$ in the header, together $\pos(u,o)$. Altogether, this requires headers of size $6$ bits. Upon reaching $v$, we can compute $\pos(o,v)$ and decide whether $u$ and $v$ belong to consecutive triangles. We send this single bit of information back to $u$ in the header. This concludes the description of handshaking process for the Steiner trees. We next show how a simple modification allows this process to work with spanning trees.

\paragraph{Handshaking in bounded degree spanning trees.} We next describe how to modify the previous routing scheme so that it works in the spanning tree cover.
We use the tree cover from \Cref{thm:bdd} and focus on a single (e.g., the first) tree $T$. Specifically by \Cref{obs:hierarchy}, every point $u$ participates as a root in at most $\ell+1$ levels in the hierarchy, for some constant $\ell$. Let these levels be $j_0, j_1, \ldots, j_p$ in increasing order. 
For each $1 \le i \le p$, let the corresponding Steiner point be $u^{i}$. For $1 \le i < p$, store locally at $u$ 3 bits describing $\pos(u^{i},u^{i+1})$. Since $p \le \ell+1$, this information occupies at most $3\ell$ bits. Let $(u,v)$ be an arbitrary edge in $T$. Let $j_u$, $j_v$ be such that $u$ is at level $j_u$ and $v$  is at level $j_v$.  Note that $1\le j_u, j_v \le \ell+1$. Let $u^{j_u}$ and $v^{j_v}$ be the corresponding Steiner points. We store with $\port(u,v)$ the following information: $j_u$, $j_v$, $\pos(u^{j_u},v^{j_v})$ and an additional bit denoting whether the edge goes up or down in the tree. This occupies at most $2\log(\ell) + 3$ bits. This concludes the description of the modified information for routing scheme.

We next explain how to find the right tree using the additional information. Upon a routing request from $u$ to $v$, we traverse the path from $u$ to $v$. Let $\ell$ denote the LCA of $u$ and $v$ in $T$. Let $u_0 = u, u_1,\ldots, u_{a-1}, u_{a}=\ell= v_{b}, v_{b-1}, \ldots, v_0 = v$ be the path between $u$ and $v$ in $T$. For $u_0$, let $j_{u_0}^0$ be the lowest level of hierarchy where $u_0$ resides, i.e., the level at which it is the leaf; let $j_{u_0}^1$ be the level at which it connects to $u_1$. For $1 \le i\le a$, let $j_{u_i}^0$ be the level at which the previous edge on the path joins $u_i$ and let $j_{u_i}^1$ be the level at which the next edge leaves $u_i$. For the simplicity of notation, we use $u_i^0 \coloneqq u_i^{j_{u_i}^0}$ to denote the point $u_i$ at level $j_{u_i}^0$. Similarly, we let $u_i^1 \coloneqq u_i^{j_{u_i}^1}$. 
We have  
\begin{align*}
\pos(u, \ell) = 
\bigotimes_{i=0}^{a-1} \left(\pos(u_i^{0}, u_i^{1}) \otimes \pos(u_i^{1}, u_{i+1}^{0}) \right) \otimes \pos(u_a^{0}, u_a^{1})
\end{align*}
We can extract this information as follows. For $u_0$, level $j_{u_0}^0$ is simply the lowest level in the hierarchy. Level $j_{u_0}^1$ is stored in $\port(u_0,u_1)$. Hence, we can compute $\pos(u_0^0,u_0^1)$ using the information locally stored at $u$ and the information in $\port(u_0,u_1)$. We extract $\pos(u_0^1, u_1^0)$ from the corresponding port. For $1 \le i \le a$, indices $j_{u_i}^0$ and $j_{u_i}^1$ are stored in the port numbers of the two edges of the path incident on $u_i$. We can hence compute $\pos(u_i^0, u_i^1)$ using the information stored locally at $u_i$. Information on $\pos(u_i^1, u_{i+1}^0)$ is stored in the corresponding edge. The other piece of information required for finding the right tree is $\pos(v,\ell)$, which can be computed analogously.

Putting everything together, we accumulate information on $\pos(u,\ell)$ on the part of the part of the path from $u$ to $\ell$. Upon reaching $\ell$, we store the corresponding 3 bits in the header. Then, on the path from $\ell$ to $v$, we accumulate information on $\pos(v,\ell)$. Finally, upon reaching $v$, we decide whether $u$ and $v$ are in the consecutive triangles with respect to the square corresponding to $\ell$, and send back to $u$ the one bit of information. This concludes the description of the handshaking process used for deciding the right tree.

\paragraph{Modification of the Thorup-Zwick scheme.} 
The final ingredient of our handshaking process is a modification of the Thorup-Zwick routing scheme \cite{TZ01} which allows us to store a constant number of bits with each port number. 

At a high level, each label of their routing scheme consists of a sequence of variable-length bit strings $s_1,s_2,\ldots,s_k$. To form the label, they store $\code(s_1),\code(s_2),\ldots,\code(s_k)$ so that the appropriate string can be easily extracted. They show (cf. Lemma 2.2 \cite{TZ01}) that $|\code(s)| \le |s| + O(\log(|s|+2))$. In particular, every port number $p$ in their routing scheme is stored using $\code(p)$. Our modification consists of devising a new encoding scheme $\code'(\cdot)$ which satisfies all the requirements of $\code(\cdot)$ and in addition encodes an additional $\alpha$ bits for a constant $\alpha$. Then, we show that using $\code'$ instead of $\code$ does not increase the $(1+o(1))\log{n}$ bound on the label sizes. We first describe the modified encoding scheme.

\begin{lemma}[Modification of the encoding scheme from Lemma 2.2 in \cite{TZ01}]
Let $\alpha$ be an arbitrary constant.
There is an efficient encoding scheme $\code' : \{0,1\}^* \rightarrow \{0,1\}^*$ that satisfies: 
\begin{enumerate}
\item The set $\{ \code'(s) \mid s \in \{0,1\}^* \}$ is an infinite suffix-free code.
\item For every $s \in \{0,1\}^*$ we have $|\code'(s)| \le |s| + \alpha + O(\log(|s|+2))$.
\item In addition, $\code'(s)$ stores $\alpha$ bits of auxiliary information.
\item Given a machine word that contains the string $\code'(s)$ in its least significant bits, it is possible to extract $s$ and $|s|$ and the auxiliary $\alpha$ bits using a constant number of standard operations. It is also possible to remove $\code'$ from this word.
\end{enumerate}
\end{lemma}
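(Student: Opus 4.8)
The plan is to obtain $\code'$ as a thin wrapper around the Thorup--Zwick encoding $\code$ of Lemma 2.2 in \cite{TZ01}, by reserving a fixed-width field of exactly $\alpha$ bits per codeword. Concretely, given a string $s$ together with the auxiliary bits $b\in\{0,1\}^{\alpha}$ that should be attached to it, I would set $\code'(s):=b\cdot\code(s)$, placing $\code(s)$ in the least significant positions and the auxiliary block $b$ immediately above it. Properties~2 and~3 are then immediate: $\abs{\code'(s)}=\alpha+\abs{\code(s)}\le\abs{s}+\alpha+O(\log(\abs{s}+2))$ by the length bound on $\code$, and the $\alpha$ reserved bits literally hold the auxiliary information.

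For Property~1 it suffices to show that the set of all strings of the form $b\cdot\code(s)$ is suffix-free (infinitude is inherited from $\code$). Suppose $b_1\cdot\code(s_1)$ is a suffix of $b_2\cdot\code(s_2)$. Since both auxiliary blocks have the same length $\alpha$, this forces $\abs{\code(s_1)}\le\abs{\code(s_2)}$, and then the trailing $\abs{\code(s_1)}$ symbols of $b_2\cdot\code(s_2)$ are exactly the trailing $\abs{\code(s_1)}$ symbols of $\code(s_2)$; that is, $\code(s_1)$ is a suffix of $\code(s_2)$. Suffix-freeness of $\code$ gives $\code(s_1)=\code(s_2)$, hence $s_1=s_2$ since $\code$ is injective (indeed, $s$ is recoverable from $\code(s)$). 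Then the two lengths agree, so $b_1=b_2$, and the two codewords coincide.

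For Property~4, note that in any machine word holding $\code'(s)$ in its least significant bits, the very bottom of the word is $\code(s)$, with $b$ sitting directly above and the remaining contents of the word above that. Hence the extraction routine guaranteed by Property~4 of \cite{TZ01}'s Lemma~2.2 applies verbatim: invoke it to recover $s$ and $\abs{s}$ and to strip $\code(s)$ from the word in $O(1)$ word operations; after this, the $\alpha$ least significant bits of the word are $b$, which are read off and then cleared with $O(1)$ additional operations since $\alpha$ is a constant. The same bit-order bookkeeping shows that $\code'$ composes correctly inside a label $\code'(s_1)\cdots\code'(s_k)$ used by the routing scheme: decoding from the low end alternately strips one $\code(\cdot)$ block (via the original routine) and one $\alpha$-bit auxiliary block.

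I do not anticipate a genuine difficulty here; this is a packaging lemma rather than a new construction. The only point requiring care is the interaction with the least-significant-bits storage convention of the Thorup--Zwick label: the auxiliary field must be positioned so that it is never seen by the self-delimiting parse of $\code$, which is precisely why $\code(s)$---and not $b$---is placed at the very bottom of $\code'(s)$, so that the unmodified extraction routine of \cite{TZ01} runs before the auxiliary bits are touched.
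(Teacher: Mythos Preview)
Your proposal is correct, and in fact cleaner than the paper's own argument. The paper does not use $\code$ as a black box: it reopens the Thorup--Zwick construction and splices the auxiliary field \emph{inside} the codeword, setting $\code'(s)=s\cdot a\cdot\bin(\|s\|,\abs{s})\cdot\bin(\|s\|,\|s\|)$ (where $\|s\|$ is the least power of two at least $\ell(\abs{s})$), and then reproves the length bound and decoding routine from scratch. Your wrapper $\code'(s)=b\cdot\code(s)$ achieves the same four properties with a one-line reduction: suffix-freeness is inherited because equal-length prefixes force $\code(s_1)$ to be a suffix of $\code(s_2)$, and decoding literally reuses the TZ01 routine before peeling off a fixed $\alpha$-bit block. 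The paper's version has the minor cosmetic advantage that the auxiliary bits sit adjacent to $s$ rather than above the self-delimiting tail, but functionally the two are equivalent, and your argument is the more modular one.
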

\begin{proof}
The proof is similar to the proof of Lemma 2.2 in \cite{TZ01}. Given $i \ge 0$, we use $\bin(i)$ to denote the bit string containing the binary representation of $i$; let $\ell(i) = |\bin(i)|$. For $k \ge \ell(i)$ let $\bin(k,i) = 0^{k-\ell(i)}\bin(i)$, i.e., the binary representation of $i$ padded with leading 0s so that the resulting string has length $k$. We let $m(i) = 2^{\ceil{\log_2{\ell(i)}}}$ for $i \ge 0$ and observe that $\ell(i) \le m(i) < 2\ell(i)$. We use $||s|| = m(|s|)$. Let $a$ be the auxiliary $\alpha$ bits of information to be stored with $s$. Using $.$ to denote a concatenation of binary strings, we define the code of $s$ as follows.
\begin{align*}
\code'(s) = s . a . \bin(||s||, |s|) . \bin(||s||,||s||)
\end{align*}

The length of $\code'(s)$ is $|s| + \alpha + 2m(|s|) \le |s| + \alpha + 4(\log_2|s|+1)$.

We next explain how to extract $s$ from a string $t = s'.\code'(s)$, where $s'$ is some other bit string. First, note that $||s||$ is a power of 2, meaning that it is easily found by looking for the first 1 in $t$. We ignore the first $||s||$ bits of $t$ and take the next $||s||$ bits which represent $|s|$. Next, extract $\alpha$ bits of auxiliary information. Finally, extract $|s|$ bits containing $s$. This concludes the description of decoding.
\end{proof}

We next prove that adding $\alpha$ bits to $\code'(s)$ does not increase the label size beyond $(1+o(1))\log_2(n)$. Specifically, we accordingly modify the proof of Lemma 2.4 in \cite{TZ01}.
\begin{lemma}[Cf. Lemma 2.4 in \cite{TZ01}]
Let $T$ be an $n$-vertex tree. Then, for every vertex $v$ of $T$ we have $|\lbl(v)|\le (1+o(1))\log_2(n)$.
\end{lemma}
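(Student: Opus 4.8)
The plan is to follow the structure of the proof of Lemma 2.4 in Thorup--Zwick~\cite{TZ01}, making the small modification needed to absorb the extra $\alpha$ bits that $\code'$ attaches to each port number. Recall that in the Thorup--Zwick tree routing scheme, the label of a vertex $v$ is built from the heavy-path decomposition of $T$: as one walks from the root to $v$, each time the path leaves a heavy path one records a port number, and the label is the concatenation $\code'(s_1)\code'(s_2)\cdots\code'(s_k)$ of the encodings of those port numbers $s_1,\dots,s_k$ (plus a few bookkeeping fields of size $O(\log\log n)$). First I would recall the two standard facts that drive the original bound: (i) if $n_i$ denotes the size of the subtree hanging off the $i$-th recorded light edge, then $n_1 > n_2 > \cdots > n_k$ and each $s_i$ is a port number bounded by the number of children at that branching vertex, so $|s_i| \le \log_2(n_{i-1}/n_i) + O(1)$ where $n_0 = n$; and (ii) consequently $\sum_i |s_i| \le \log_2 n + O(k)$, while also $k \le \log_2 n$ since the subtree sizes strictly decrease.

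Next I would plug in the new length bound from the modified encoding lemma, namely $|\code'(s_i)| \le |s_i| + \alpha + O(\log(|s_i|+2))$, in place of $|\code(s_i)| \le |s_i| + O(\log(|s_i|+2))$. Summing over $i = 1,\dots,k$ gives
\[
|\lbl(v)| \;\le\; \sum_{i=1}^{k}\bigl(|s_i| + \alpha + O(\log(|s_i|+2))\bigr) + O(\log\log n),
\]
where the trailing $O(\log\log n)$ accounts for the fixed-size header fields of the Thorup--Zwick label. The first sum is $\le \log_2 n + O(k)$ by fact (ii); the middle sum is $\alpha k \le \alpha \log_2 n$; and the last sum is $O(k\log\log n) = O(\log n \log\log n)$ in the worst case, but as in~\cite{TZ01} a convexity/Jensen argument on $\sum \log(|s_i|+2)$ subject to $\sum|s_i| \le \log_2 n$ and $k\le\log_2 n$ bounds it by $O(\log\log n \cdot \log n / \log\log n) = o(\log n)$ — the point being that the $|s_i|$'s cannot all be large. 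The only genuinely new term is $\alpha k$, and since $\alpha$ is a fixed constant and $k\le \log_2 n$, we get $\alpha k \le \alpha\log_2 n$. Hmm — that is not $o(\log n)$; so I would instead use the sharper bound $\sum_i |s_i| \le \log_2 n$ together with the fact that a port number $s_i=1$ (i.e.\ $|s_i|=1$) contributes essentially nothing, and more importantly observe that $k = O(\log n/\log\log n)$ is \emph{not} what saves us here — rather, one re-examines whether $\alpha$ extra bits per recorded edge is acceptable. Re-reading the statement, the claim is $(1+o(1))\log_2 n$, so a constant-factor blowup would not be allowed; the resolution is that $k$ is bounded by $\log_2 n$ but the number of recorded edges with $|s_i|\ge 2$ is $O(\log n/\log\log n)$ after the standard refinement, and the $|s_i|=1$ edges can be encoded with $O(1)$ (not $\alpha+O(1)$) overhead by storing the auxiliary bits for such edges separately in a packed array — this is the main obstacle and the one place the argument must depart from a purely mechanical edit of~\cite{TZ01}.

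Concretely, I would handle that obstacle as follows: split the recorded port numbers into "short" ones ($|s_i|=1$, of which there may be up to $\log_2 n$) and "long" ones ($|s_i|\ge 2$, of which there are $O(\log n/\log\log n)$ because each contributes $\ge 1$ to a sum bounded by $\log_2 n$ — wait, that gives only $O(\log n)$ again; the correct count is that the number of $i$ with $n_{i-1}/n_i \ge 2$ is at most $\log_2 n$, so one needs the finer accounting from~\cite{TZ01} that the total encoding length telescopes). The cleanest route, which I would adopt, is to note that $\code'$ only adds $\alpha$ bits, so $\sum_i |\code'(s_i)| = \sum_i |\code(s_i)| + \alpha k \le (1+o(1))\log_2 n + \alpha\log_2 n$; to kill the $\alpha\log_2 n$ term, store the auxiliary $\alpha$-bit strings \emph{not} interleaved with the port encodings but in a single contiguous block of $\alpha k$ bits appended to the label, prefixed by $\code(k)$ so the decoder knows where it starts — then the port-encoding part of the label is exactly the original $(1+o(1))\log_2 n$, and the auxiliary block adds $\alpha k + O(\log k)$ bits. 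Since $k \le \log_2 n$ this is still $O(\log n)$, i.e.\ a constant-factor term; to get it down to $o(\log n)$ I would finally invoke that in our application (Observation~\ref{obs:hierarchy} and the bounded-degree tree of Theorem~\ref{thm:bdd}) the relevant quantity carried per port is $O(\log\log n)$ total, not $\alpha$ per \emph{recorded} edge, so the extra contribution is genuinely $O(\log\log n)$ and the bound $|\lbl(v)| \le (1+o(1))\log_2 n$ follows. I expect reconciling "$\alpha$ bits per port number" with "$(1+o(1))\log n$ total" to be the crux, and the fix is the contiguous-auxiliary-block encoding combined with the $O(\log\log n)$ bound on the total auxiliary data actually needed.
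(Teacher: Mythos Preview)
Your proposal has a genuine gap stemming from a misremembering of the Thorup--Zwick scheme. You model the label as $\code'(s_1)\cdots\code'(s_k)$ with $k\le\log_2 n$ coming from a standard binary heavy-light decomposition, and then spend the bulk of the write-up fighting the resulting $\alpha k \le \alpha\log_2 n$ term, eventually resorting to application-specific properties (Observation~\ref{obs:hierarchy}, the bounded-degree tree) that have nothing to do with the lemma as stated. None of that is needed, and the final appeal to those properties would not prove the lemma for an arbitrary $n$-vertex tree $T$.

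The point you are missing is that the specific TZ scheme referenced here (their Lemma~2.4) does \emph{not} use binary heavy-light decomposition. It uses a parameter $b=\lceil\sqrt{\log_2 n}\rceil$, so that the identifier consists of the encodings of $3k$ strings with $k\le\log_b n$, hence $k=O(\log n/\log\log n)$. With this in hand the argument is a one-liner: the extra contribution from $\code'$ over $\code$ is exactly $3k\alpha = O(\log n/\log\log n)=o(\log n)$, and the remaining terms are bounded exactly as in the original proof. That is precisely what the paper does. Your convexity/Jensen discussion, the short/long split, and the contiguous-auxiliary-block encoding are all unnecessary detours caused by the wrong bound on $k$.
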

\begin{proof}
The identifier $ID(v)$ consists of encodings $3k$ bit stings $c_1,\ldots,c_{3k}$, where $k \le \log_b{n}$ and $b = \ceil{\sqrt{\log_2{n}}}$. In particular, we have $k = O(\log(n)/\log{\log{n}})$. We have $|ID(v)|=\sum_{i=1}^{3k}|\code'(c_i)|= \sum_{i=1}^{3k}|c_i|+3k\alpha+O(\sum_{i=1}^{3k}log(|c_i|+2))$.
In the proof of Lemma 2.4, the contribution of the first and the third term was asymptoticaly bounded by $\log_2(n)+O(\log(n)\cdot \log\log\log(n)/\log\log{n})$. The second term is $3k\alpha = O(\log(n)/\log{\log{n}})$, since $k=O(\log(n)/\log{\log{n}})$ and $\alpha$ is constant. In conclusion, we have $ID(v) = \log_2(n)+O(\log(n)\cdot \log\log\log(n)/\log\log{n})$. The rest of the proof requires no changes.
\end{proof}
\subsection{Path reporting distance oracle}\label{sec:oracle}
In this application, the goal is to preprocess the input point set (in our case $P \subset \mathbb{R}^2$) and construct a compact data structure called a \emph{path reporting distance oracle}. Given a query consisting of two points $a,b \in P$, the data structure should output a \emph{path} of length that approximates the length of $\norm{a-b}$. The goal is to achieve \emph{(i)} fast preprocessing time, \emph{(ii)} data structure of small space, \emph{(iii)} good approximation ratio, and \emph{(iv)} fast query time.

We first construct a Steiner tree cover. By \Cref{lem:time}, both trees in the Steiner tree cover from \Cref{thm:steiner} have $2|P|-1$ vertices and can be computed in $O(|P| \log|P|)$ time. We then run a Steiner point removal procedure, which runs in time $O(|P|)$, on both trees and obtain a spanning tree cover construction with stretch $4\sqrt{26}$ as guaranteed by \Cref{thm:non-steiner}. Specifically, both of the trees have $|P|-1$ vertices.
Using a well-known idea (see, e.g., interval routing scheme \cite{SK85}), we perform a DFS traversal in both of the trees and store at each vertex its DFS timestamp and for each of its children the smallest and the largest DFS timestamp in the corresponding subtree. This takes additional $O(|P|)$ time and space.

Given two query points $a,b \in P$, we first describe how to find the path in a single tree. Given the DFS timestamps it is easy to check whether $a$ is an ancestor of $b$ in constant time. However, finding the right child of $a$ on the path from $a$ to $b$ might incur some additional running time. To avoid this, we always construct the path by going up in the tree. Specifically, we construct $P_a$,  from $a$ to $LCA(a,b)$, and $P_B$, from $b$ to $LCA(a,b)$. Constructing these two paths is straightforward. If $b$ is an ancestor of $a$, add the edge between $a$ and its parent to $P_a$; otherwise, add the edge between $b$ and its parent to $P_b$. Overall, finding the path between $a$ and $b$ in a single tree takes time linear in the length of the path. We employ this path finding process in both trees in parallel. Specifically, in a single step we discover one edge in both of the trees. If after some step the path has been fully discovered in one of the trees, we check whether the stretch of this path is at most $4\sqrt{26}$. If so, we output the path; otherwise, proceed with path finding in the other tree.
In conclusion, we have proved the following theorem.

\begin{theorem}
For every set $P$ in $\mathbb{R}^2$ there exists a path reporting distance oracle $\mathcal{D}$ which can be constructed in $O(|P|\log{|P|})$ time and using $O(|P|)$ space. The basic structure of $\mathcal{D}$ is a union of two trees. For every subsequent query, consisting of points $a,b\in P$, data structure $\mathcal{D}$ outputs a path $\pi_{a,b}$ of stretch $4\sqrt{26}$ in time $O(|\pi_{a,b}|)$, where $|\pi_{a,b}|$ is the number of edges in $\pi_{a,b}$.
\end{theorem}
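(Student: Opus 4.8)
The plan is to instantiate the spanning tree cover of \Cref{thm:non-steiner}, augment it with a standard interval-based ancestor structure, and answer a query by growing the two candidate tree-paths in lockstep and reporting the first one that turns out to be short enough.

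\textbf{Preprocessing.} First I would build the compressed Steiner tree cover of \Cref{thm:steiner} via \Cref{lem:time}; each of its two trees has $O(|P|)$ vertices and the construction takes $O(|P|\log|P|)$ time. Running the bottom-up Steiner point removal of \Cref{sec:spr} on both trees costs an additional $O(|P|)$ time and yields a spanning tree cover $\{T_1,T_2\}$ of stretch $4\sqrt{26}$, where each $T_i$ has exactly $|P|$ vertices and $|P|-1$ edges. I would then run a DFS on each $T_i$, recording for every vertex its DFS entry/exit timestamps and, for each child, the interval of timestamps spanned by the child's subtree (as in interval routing \cite{SK85}); this uses $O(|P|)$ time and space and supports $O(1)$-time ancestor tests, hence lets us recognize $\mathrm{LCA}(a,b)$ on the fly. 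Finally I would store the coordinates of the points of $P$ in $O(|P|)$ space, so that $\|a-b\|$ is available at query time.

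\textbf{Query.} Given $a,b\in P$, in each tree $T_i$ I would reconstruct the $a$--$b$ path by walking \emph{upward} from both endpoints: maintain a finger on each side, and at every step advance the finger that is not (yet) an ancestor of the other, appending the traversed edge to the corresponding half-path, until the two fingers coincide at $\mathrm{LCA}(a,b)$. Because every step is a single parent move plus one $O(1)$ ancestor test and we never have to search for the correct ``downward'' child, this discovers the $T_i$-path in time linear in its number of edges. I would run these two searches in parallel, one edge per round in each tree, maintaining the running length of each partial path. As soon as the path in some $T_i$ is completed, I compare its length to $4\sqrt{26}\cdot\|a-b\|$; if it is at most this value, output it, and otherwise continue the search in the other tree, which by the tree-cover guarantee of \Cref{thm:non-steiner} must then realize stretch at most $4\sqrt{26}$.

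\textbf{Analysis and the main obstacle.} Correctness is immediate from \Cref{thm:non-steiner}: at least one of $T_1,T_2$ has $d_{T_i}(a,b)\le 4\sqrt{26}\,\|a-b\|$, so the output is always a valid path of the claimed stretch. The point that needs care is the $O(|\pi_{a,b}|)$ query bound. Writing $L_1,L_2$ for the numbers of edges on the $a$--$b$ paths in $T_1,T_2$, the lockstep search completes the shorter one after $O(\min(L_1,L_2))$ rounds; if that path is accepted we are done in $O(\min(L_1,L_2))=O(|\pi_{a,b}|)$, and otherwise we spend a further $O(|L_1-L_2|)$ rounds to finish the other path, which is then the output and has $|\pi_{a,b}|=\max(L_1,L_2)$ edges, for a total of $O(\max(L_1,L_2))=O(|\pi_{a,b}|)$. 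Either way the work is proportional to the number of reported edges. Summing the preprocessing costs gives $O(|P|\log|P|)$ construction time and $O(|P|)$ space, completing the proof.
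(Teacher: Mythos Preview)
Your proposal is correct and follows essentially the same approach as the paper: build the spanning tree cover of \Cref{thm:non-steiner}, augment each tree with DFS intervals for $O(1)$ ancestor tests, answer a query by climbing upward from both endpoints in each tree, and run the two tree searches in lockstep so that the total work is bounded by the number of edges in the reported path. Your running-time analysis of the lockstep step (case-splitting on whether the first-completed path is accepted) is slightly more explicit than the paper's, but the argument is the same.
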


\subsection{Local routing}\label{sec:local-routing}
This application operates in the online setting. Given a set of points $P \subset \mathbb{R}^2$, the goal is to preprocess the input data and construct an \emph{overlay network} $H = (P,E)$ of small size. An online query from $a \in P$ to $b \in P$ consists of coordinates $a$ and $b$ and  coordinates of neighbors of $a$.
The goal is to decide on the next hop $c$ on the path from $a$ to $b$. The query proceeds recursively until $b$ is reached. The \emph{routing ratio} or \emph{stretch} of the path is the ratio between the length of the traversed path and $\norm{a - b}$. The routing ratio of the algorithm is the maximum routing ratio among all the possible queries. Here, the goal is to achieve \emph{(i)} small overlay network and \emph{(ii)} small routing ratio. 

We explain two local routing results for our tree cover constructions. The first one is using the construction from \Cref{thm:non-steiner}. The trees in this construction do not have a bound on the degree and hence the algorithm might need to access $\Omega(n)$ neighbors of a vertex at some routing step. The second result uses the bounded degree construction and requires additional $\log{n}$ bits of information stored with every coordinate. In both cases, our overlay network consist of the two spanning trees from the corresponding tree cover, and has at most $2|P|-2$ edges.

The first phase in both of the routing algorithms is finding the right tree. Given a query $a,b \in P$, by \Cref{obs:choose-tree}, it suffices to find the red square $S'$ such that $a$ and $b$ are in different subsquares of $S'$. Using the coordinates of $a$ and $b$, a simple binary search can be employed to find $S'$. If $a$ and $b$ belong to non-consecutive triangles of $S'$, we use the red tree; otherwise, we use the blue tree. 

\subsubsection{Local routing using spanning trees}
The overlay network consists of two trees from \Cref{thm:non-steiner}.
We assume that the neighbors of each vertex are ordered so that the parent in the tree is in the first place. For the root of the tree, this place is kept empty. 

Once we found the right tree to route on, the next goal is to decide on the next hop in the path. The following description assumes the right tree is the blue tree $T_B$. (The description is analogous for the red tree.) Use a binary search to find the blue square $S$ where $a$ and $b$ belong to different subsquares. Using Item 1 in \Cref{obs:navigation}, we can decide if $a$ is the root of the subtree of $T_B$ corresponding to $S$. If $a$ is not the root, the next hop is to the parent of $a$. If $a$ is the root, we choose the unique edge towards the subsquare of $S$ where $b$ resides. The uniqueness is guaranteed by Item 2 in \Cref{obs:navigation}.
This completes the description of the algorithm for finding the next edge in the tree. In conclusion, we have proved the following theorem.
 
\begin{theorem}
For every set of points $P \subseteq \mathbb{R}^2$, there is a memoryless local routing algorithm operating on an overlay network of size $2|P|-2$ (consisting of two trees) achieving the routing ratio of $4\sqrt{26}$.
\end{theorem}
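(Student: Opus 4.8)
The plan is to take the overlay network $H$ to be the union of the two \emph{spanning} trees $T_R$ (red) and $T_B$ (blue) produced by the construction behind \Cref{thm:non-steiner}. Each tree is a spanning tree on the $|P|$ input points, hence has $|P|-1$ edges, so $|H|$ has $2|P|-2$ edges, as claimed. At every vertex, list the incident edges so that the edge toward the parent (in whichever tree) comes first, leaving that slot empty at the root of each tree. A routing step at the current vertex $c$ with destination $b$ then has two phases; crucially, every decision is to be made using only the coordinates of $c$, of $b$, and of the neighbors of $c$ — no retained state — so the algorithm is memoryless.

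\emph{Phase 1 (choosing the tree).} From the coordinates of $c$ and $b$ I would locate, by descending the quadtree, the red square $S'$ whose four subsquares separate $c$ and $b$; since the red squares are laminar, $S'$ is just the smallest red square containing both points, and $c,b$ do lie in different subsquares of it. I then test whether $c$ and $b$ lie in non-consecutive triangles of $S'$. By \Cref{obs:choose-tree}, in that case the red tree preserves their distance up to the stretch bound of \Cref{thm:non-steiner}, and otherwise the blue tree does; this selects the tree $T \in \{T_R, T_B\}$ on which the current hop is taken. \emph{Phase 2 (one hop inside $T$).} The $T$-path from the source to $b$ first ascends to their least common ancestor and then descends to $b$; I realize one edge of it. Binary-search for the recursion square $S$ of the color of $T$ whose subsquares separate $c$ and $b$. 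By Item~1 of \Cref{obs:navigation}, $c$ is the root of the subtree of $T$ corresponding to $S$ exactly when $c$ has no parent or its parent (its first-listed neighbor) lies outside $S$ — a test using only coordinates. If $c$ is not that root, forward to the parent of $c$ (we are still strictly below the LCA, and by Item~1 the parent is still inside $S$, so this is correct); if $c$ is that root, then $c$ is the LCA, $b$ lies in a subsquare $X$ of $S$ not containing $c$, and by Item~2 of \Cref{obs:navigation} the vertex $c$ has exactly one neighbor whose coordinates lie in $X$ — forward to it. Iterating, the current vertex strictly approaches $b$ in $T$ and eventually coincides with it.

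For correctness, note that the walk produced is precisely the unique simple path between the source and $b$ in $T$, so its length equals $d_T(\text{source}, b)$. Combining the choice of $T$ in Phase~1 with \Cref{obs:choose-tree} and the stretch bound of \Cref{thm:non-steiner} gives $d_T(\text{source}, b) \le 4\sqrt{26}\cdot \norm{\text{source}-b}$, so the routing ratio is $4\sqrt{26}$, and the overlay has $2|P|-2$ edges, completing the proof.

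The step I expect to be the main obstacle is the locality of the two decisions in Phase~2 — deciding whether the current vertex is the root of the relevant recursion subtree, and, when it is, pinning down \emph{which} neighbor is the unique child leading into the destination's subsquare — both of which rest entirely on the structural facts of \Cref{obs:navigation}; so the real work is to verify that those facts hold for the spanning trees of \Cref{thm:non-steiner} and genuinely make each hop unambiguous from neighbor coordinates alone. A secondary point that must be nailed down is that the tree selected in Phase~1 is consistent along the entire route: since the source is not retained, Phase~1 is re-evaluated at every vertex from that vertex and $b$, and one must check (using the laminar quadtree structure and the analysis underlying \Cref{lem:stretch}) that this re-evaluation never flips the selected tree, so that the realized walk is a single path in one tree rather than a hybrid of the two.
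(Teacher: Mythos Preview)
Your proposal is correct and mirrors the paper's argument essentially step for step: the same two-tree overlay, the same tree selection via \Cref{obs:choose-tree} on the red square separating the current vertex and the destination, and the same per-hop navigation via the two items of \Cref{obs:navigation} (parent-first ordering, root test by parent-outside-$S$, and the unique child in the destination's subsquare). The tree-consistency concern you flag at the end is not addressed in the paper either --- it simply presents the two phases and asserts the $4\sqrt{26}$ bound --- so your write-up is, if anything, more scrupulous than the original on this point.
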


\subsubsection{Local routing using bounded degree spanning trees}
The overlay network consists of two trees from \Cref{thm:bdd}.
In each of the two trees, equip every vertex with a DFS index (occupying $\log|P|$ bits) and use the interval routing scheme \cite{SK85} for navigation. 
\begin{theorem}
For every set of points $P \subseteq \mathbb{R}^2$, there is a memoryless local routing algorithm operating on an overlay network of size $2|P|-2$ (consisting of two trees), where each tree has degree at most 7, achieving the routing ratio of $8\sqrt{26}$. Every point in the overlay network is equipped with $2\log{|P|}$ bits of information stored with its coordinates.
\end{theorem}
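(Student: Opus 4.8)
The plan is to instantiate the overlay network as the pair of spanning trees produced by \Cref{thm:bdd} with parameter $\ell=1$, so that each tree has maximum degree $7$ and the tree cover has stretch $8\sqrt{26}$; since each of the two trees is a spanning tree of $P$ on $|P|$ vertices, the overlay network has exactly $2|P|-2$ edges. During preprocessing I would run a depth-first search in each of the two trees and store, at every point $p\in P$ alongside its coordinates, the DFS preorder index of $p$ in the red tree and in the blue tree; this is the $2\log|P|$ bits of auxiliary information promised in the statement. Each hop of the routing algorithm will consult only the coordinates and auxiliary labels of the current vertex, of its neighbors, and of the destination, so the scheme is memoryless in the sense required by the model of \Cref{sec:local-routing}.

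Given a query from $a$ to $b$, the algorithm first decides which tree to route on, exactly as in the spanning-tree routing algorithm: starting from the bounding square $S_R$ (a global parameter fixed at preprocessing), a binary search over the quadtree subdivision driven purely by the coordinates of $a$ and $b$ locates the red recursion square $S_1$ in which $a$ and $b$ fall into different subsquares. Subdividing $S_1$ into the eight triangles of \Cref{fig:regions}, the algorithm routes on the red tree if $a$ and $b$ lie in non-consecutive triangles, and on the blue tree otherwise, as prescribed by \Cref{obs:choose-tree}. The square $S_1$ and hence this choice are geometric quantities that do not depend on the Steiner point removal or the degree-reduction modifications at all.

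Once the tree $T$ is fixed, the algorithm routes from $a$ to $b$ along the unique $T$-path by interval routing \cite{SK85}: at the current vertex $u$ it compares the DFS index of $b$ against the DFS subtree intervals of the children of $u$; if $b$'s index falls in the interval of a child $c$, the next hop is $c$, and otherwise ($b$ is not a descendant of $u$) the next hop is the parent of $u$. Since each tree has degree at most $7$, every such decision inspects at most six candidate children, which is consistent with the local-routing model. This procedure visits exactly the vertices of the $T$-path between $a$ and $b$, so the routed path has length $d_T(a,b)$. To bound the routing ratio, note that by \Cref{obs:choose-tree} the chosen colour's Steiner tree realises $\norm{a-b}$ with stretch at most $\sqrt{26}$; the Steiner point removal of \Cref{thm:non-steiner} inflates every pairwise tree distance by a factor at most $4$, and the degree-reduction of \Cref{thm:bdd} with $\ell=1$ inflates it by a further factor at most $1/(1-2^{-1})=2$. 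Composing these, $d_T(a,b)\le 8\sqrt{26}\cdot\norm{a-b}$, giving the claimed routing ratio.

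The main obstacle is a point that requires care rather than a genuine difficulty: one must check that the red-versus-blue selection remains valid after the two tree modifications, i.e. that it depends only on the invariant square $S_1$, while the stretch guarantee of \Cref{obs:choose-tree} degrades only multiplicatively by the known factors $4$ and $2$ when passing from the Steiner tree cover to the bounded-degree spanning tree cover. A secondary point is to confirm that interval routing can be carried out using the DFS preorder indices alone (together with the conventions of \cite{SK85}), so that the per-vertex auxiliary memory stays at $2\log|P|$ bits.
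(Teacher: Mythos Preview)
Your proposal is correct and follows essentially the same approach as the paper: overlay the two bounded-degree spanning trees of \Cref{thm:bdd} with $\ell=1$, store a DFS index per tree at every vertex, select the tree via the coordinate-based procedure of \Cref{obs:choose-tree}, and navigate along the chosen tree using the interval routing scheme of \cite{SK85}. The paper's own proof of this theorem is in fact just two sentences to this effect; your write-up is a faithful (and more careful) expansion of it.
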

\bibliographystyle{alpha}
\bibliography{refs,shay}
\appendix
\section{Number of edges in Yao and \texorpdfstring{\bm{$\Theta$}}{Θ} graphs}\label{sec:cones}
In this section, we give an example where Yao$_4$ and $\Theta_4$ graphs for a set of $n$ points in $\mathbb{R}^2$ have at least $3n-7$ edges.
\begin{definition}\label{def:theta_yao}
Let $P$ be a set of points in $\mathbb{R}^2$ and let $k \ge 2$ be an arbitrary integer. For each point $p$ in $P$, and for each $0 \le i < k$, let $\mathcal{C}_i(p)$ be the cone consisting of the points between rays with angles $2i\pi/k$ and $2(i+1)\pi/k$ emanating from $p$. Yao$_k$ graph is obtained by connecting every point $p \in P$ with the closest point in each of $\mathcal{C}_i(p)$, for $0 \le i < k$. $\Theta_k$ graph is obtained by connecting every point $p \in P$ with the point having the closest projection onto the bisector of $\mathcal{C}_i(p)$, for $0 \le i < k$.
\end{definition}
From the construction it is clear that both Yao$_4$ and $\Theta_4$ graphs have at most $4n$ edges. We next describe an instance where these graphs have at least $3n-7$ edges.
\begin{figure}[H]
    \centering
    \includegraphics[width=0.9\linewidth]{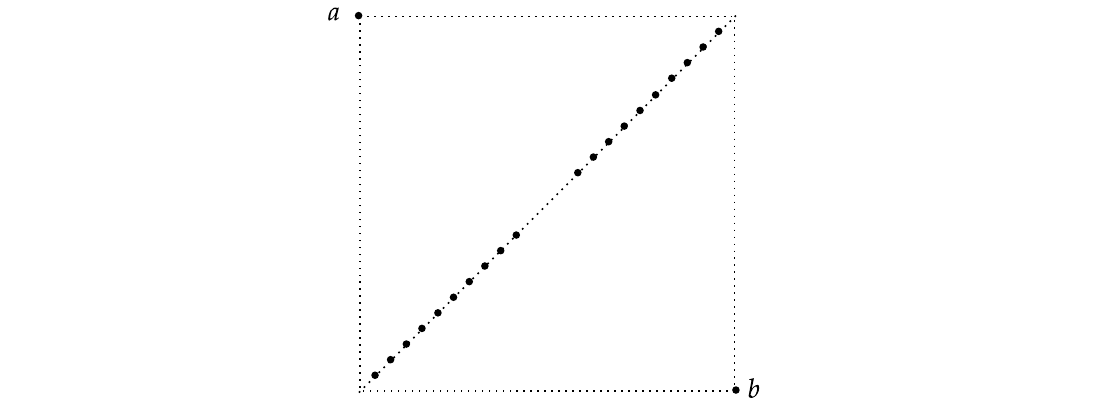}
    \caption{An instance of a point set for which both Yao$_4$ and $\Theta_4$ graphs have $3n-7$ edges.}
    \label{fig:theta}
\end{figure}
Consider a set of points such that points $a$ and $b$ are in the antipodal corners of a square and the remaining $n-2$ points lie on a diagonal, as shown in \Cref{fig:theta}. For every point $p$ on the diagonal, $\mathcal{C}_1(p)$ contains only point $a$ and $\mathcal{C}_3(p)$ contains only point $b$. This contributes $2n-4$ edges to both Yao$_4$ and $\Theta_4$ graphs. There is also is an edge between every two consecutive points on the line, contributing to $n-3$ edges. Overall, the number of edges is $3n-7$.

\section{Tree cover using the shifting lemma}\label{sec:shifting}
In this section, we show that the following lemma from \cite{Cha98} implies a Steiner tree cover with 3 trees and stretch $6\sqrt{2}$ for a set of points in $\mathbb{R}^d$.

\begin{lemma}[Shifting lemma \cite{Cha98}]
Consider any two points $p, q \in [0,1)^d$, and let $\mathcal{T}$ be the infinite quadtree of $[0, 2)^d$. For $D = 2 \ceil{d/2}$ and $i = 0,\ldots, D$, let $v_i = (i/(D + 1),\ldots , i/(D + 1))$. Then there exists an $i \in \{0, \ldots , D\}$, such that $p + v_i$ and $q + v_i$ are contained in a cell of $\mathcal{T}$ with side length at most $2(D + 1) \norm{p-q}$.
\end{lemma}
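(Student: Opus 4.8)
The plan is to reduce the lemma to a one-dimensional statement per coordinate and then close with a short number-theoretic observation exploiting that $D+1$ is odd. Assume $p\ne q$ (for $p=q$ the statement is vacuous, since no cell has side length $0$). First I would dispose of the case $\norm{p-q}\ge 1/(D+1)$: then $2(D+1)\norm{p-q}\ge 2$, and the root cell $[0,2)^d$ of $\mathcal{T}$ has side length $2$ and already contains $p=p+v_0$ and $q=q+v_0$, so $i=0$ works. Hence I may assume $0<\norm{p-q}<1/(D+1)$, and I let $s$ be the largest side length of a cell of $\mathcal{T}$ with $s\le S:=2(D+1)\norm{p-q}$. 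Since $S<2$ we get $s\le 1$, so $s=2^{-e}$ for an integer $e\ge 0$; and since the next coarser side length $2s$ exceeds $S$, we have $\norm{p-q}<s/(D+1)$, so in particular $|p_j-q_j|<s/(D+1)$ for every coordinate $j$.

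The key reformulation is the following. Write $t_i:=i/(D+1)$, and for each coordinate $j$ put $m_j:=\min(p_j,q_j)$ and $M_j:=\max(p_j,q_j)$. The points $p+v_i$ and $q+v_i$ (which lie in $[0,2)^d$ since $v_i\in[0,1)^d$) lie in a common cell of $\mathcal{T}$ of side length $\le S$ if and only if they lie in a common cell of side exactly $s$ (a coarser common cell is never harder to obtain), and this holds iff $\lfloor(m_j+t_i)/s\rfloor=\lfloor(M_j+t_i)/s\rfloor$ for every $j$, i.e.\ iff $t_i\notin B_j$ for every $j$, where $B_j:=\bigcup_{k\in\mathbb{Z}}[\,ks-M_j,\ ks-m_j)$ is a union of intervals of length $M_j-m_j<s/(D+1)$ spaced $s$ apart. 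Thus it suffices to exhibit an index $i\in\{0,\dots,D\}$ with $t_i\notin\bigcup_{j=1}^d B_j$.

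The heart of the argument — and the step I expect to be the main obstacle — is the claim that each $B_j$ contains at most one of $t_0,\dots,t_D$. Suppose $t_{i_1},t_{i_2}\in B_j$ with $i_1<i_2$; write $t_{i_\ell}=k_\ell s-M_j+\delta_\ell$ with $\delta_\ell\in[0,M_j-m_j)$, where $k_1\le k_2$ (the pieces of $B_j$ are disjoint and increasing in $k$, as $M_j-m_j<s$). If $k_1=k_2$, then $0<t_{i_2}-t_{i_1}=\delta_2-\delta_1<M_j-m_j<s/(D+1)\le 1/(D+1)$, contradicting $t_{i_2}-t_{i_1}=(i_2-i_1)/(D+1)\ge 1/(D+1)$. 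If $k_1<k_2$, set $m:=i_2-i_1\in\{1,\dots,D\}$ and $K:=k_2-k_1\ge 1$; then $t_{i_2}-t_{i_1}=Ks+(\delta_2-\delta_1)$, so $|(t_{i_2}-t_{i_1})-Ks|<M_j-m_j<s/(D+1)$, i.e.\ $|m/(D+1)-Ks|<s/(D+1)$, i.e.\ $|m\cdot 2^e-K(D+1)|<1$. Since $m\cdot 2^e$ and $K(D+1)$ are positive integers, this forces $m\cdot 2^e=K(D+1)$; but $D+1=2\lceil d/2\rceil+1$ is odd and hence coprime to $2^e$, so $D+1$ would divide $m\le D$ — impossible. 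This proves the claim.

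Finally I would conclude by counting: by the claim, $\bigcup_{j=1}^d B_j$ contains at most $d\le 2\lceil d/2\rceil=D$ of the $D+1$ numbers $t_0,\dots,t_D$, so some $t_i$ avoids all the $B_j$; for that index $i$, the points $p+v_i$ and $q+v_i$ share a cell of $\mathcal{T}$ of side length $s\le S=2(D+1)\norm{p-q}$, which is exactly the assertion. The only genuinely delicate point is the divisibility step in the third paragraph, where the oddness of $D+1$ is what prevents the diagonal shifts $t_i=i/(D+1)$ from ``resonating'' with the dyadic grid spacing $s$.
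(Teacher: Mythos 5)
Your proof is correct. The paper itself gives no proof of this lemma---it is quoted verbatim from Chan and used as a black box---so there is nothing to diverge from; your argument (reduce to one bad interval per coordinate at the critical dyadic scale $s=2^{-e}$, then use that $D+1$ is odd so $m\cdot 2^{e}=K(D+1)$ is impossible for $1\le m\le D$, and finish by pigeonhole since $d\le D$) is a sound and essentially standard reconstruction of Chan's original counting argument. One cosmetic quibble: for $p=q$ the statement is not ``vacuous'' but actually false as literally written (no cell has side length $0$), so it is better to say that the lemma implicitly assumes $p\neq q$; everything else, including the reformulation via the bad sets $B_j$ and the justification that $k_1\le k_2$, checks out.
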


For a set of points $P \subset [0,1)^2$, parameter $D$ in the lemma takes value $D=2$. For each of the shifts $v_i$, $0\le i\le 2$, the corresponding tree $T_i$ is obtained on top of $P$ shifted by $v_i$ as using the following recursive procedure. Root the $T_i$ in the center $r$ of the square $S=[0,2) \times [0,2)$ and connect it to the roots of the trees recursively obtained in each of the four subsquares of $S$. The recursion stops whenever currently considered square contains only one point, in which case we connect the center of the square with that point. This concludes the description of the recursive procedure.

To analyze the stretch, observe first that two points $p$ and $q$ in some quadtree cell with side length $\Delta$ have tree distance at most $d_T(p,q) \le \Delta \sqrt{2}$. Using the lemma, we consider the shift $v_i$ such that $p+v_i$ and $q+v_i$ belong to some cell with side length at most $2(D+1)\norm{p-q} = 6\norm{p-q}$. We have $d_T(p,q) \le \Delta\sqrt{2} \le 6\sqrt{2}\norm{p-q}$.

\end{document}